\newtheorem{theorem}{Theorem}[section]
\newtheorem{theorem*}{Theorem}
\newtheorem{lemma}[theorem]{Lemma}
\newtheorem{proposition}[theorem]{Proposition}
\newtheorem{proposition*}[theorem*]{Proposition}
\newtheorem{corollary}[theorem]{Corollary}
\newtheorem{corollary*}[theorem*]{Corollary}
\newtheorem{definition}[theorem]{Definition}
\newtheorem{definition*}[theorem*]{Definition}
\newtheorem{example}[theorem]{Example}
\newtheorem{example*}[theorem*]{Example}
\newtheorem{note}[theorem]{Note}
\newtheorem{note*}[theorem*]{Note}
\newtheorem{remark}[theorem]{Remark}
\newtheorem{remark*}[theorem*]{Remark}
\begin{document}

\title{Selfish Mining in Ethereum}

\subjclass[2010]{68M01, 60G40, 91A60.}
\keywords{Bitcoin, Ethereum, blockchain, proof-of-work, selfish mining, Catalan numbers, Dyck path, random walk.}

\author{Cyril Grunspan}
\address{L{\'e}onard de Vinci, P{\^o}le Univ., Research Center, Paris-La D{\'e}fense, Labex R{\'e}fi, France}
\email{cyril.grunspan@devinci.fr}

\author{Ricardo P\'erez-Marco}
\address{CNRS, IMJ-PRG, Labex R{\'e}fi, Paris, France}
\email{ricardo.perez.marco@gmail.com}

\address{\tiny Author's Bitcoin Beer Address (ABBA)\footnote{\tiny Send some bitcoins to support our research at the pub.}:
1KrqVxqQFyUY9WuWcR5EHGVvhCS841LPLn} 

\address{\includegraphics[scale=0.33]{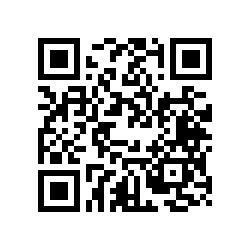}}

\date{April 30th, 2019}

\begin{abstract}
 We study selfish mining in Ethereum. The problem is combinatorially more complex than in 
  Bitcoin because of major differences in the reward system and a different difficulty adjustment formula. 
  Equivalent strategies in Bitcoin do have different profitabilities in Ethereum. 
  The attacker can either broadcast his 
  fork one block by one, or keep them secret as long as possible and 
  publish them all at once at the end of an attack cycle. The first strategy 
  is damaging for substantial hashrates, and we show that the second strategy is even worse. This
  confirms what we already proved for Bitcoin: Selfish mining is most of all  
  an attack on the difficulty adjustment formula. 
  We show that the current reward for signaling uncle 
  blocks is a weak incentive for the attacker to signal blocks. We compute the profitabilities of different strategies and find out 
  that for a large parameter space values, strategies that do not signal blocks are the best ones. We compute closed-form formulas for the 
  apparent hashrates for these strategies and compare them. We use a direct combinatorics analysis with Dyck words to 
  find these closed-form formulas.
\end{abstract}

\maketitle

\section{Introduction}

\subsection{Selfish mining strategies in Ethereum}

Research on selfish mining (in short SM) in Ethereum is quite recent. We can mention as recent 
contributions \cite{RZ18} (numerical study) and {\cite{NF19}}.

The authors of {\cite{NF19}} use a Markov chain model and compute the 
stationary probability. Then they study what they call the ``absolute revenue'' 
of the attacker which corresponds to the apparent hashrate after a difficulty adjustment 
as explained in our articles on blockwithholding attacks in the Bitcoin network 
(see \cite{GPM18}, \cite{GPM18e}, \cite{GPM19a}). 
Their theoretical analysis seems also confirmed by their numerical 
simulations. They do not provide
closed-form formulas (for example Formulas (8) and (9) in Section 3-E involve double infinite
sums). But more importantly, their study is limited to the following strategy of the attacker:

\medskip

\begin{enumerate}

\item \label{point1}The attacker refers to all possible orphan blocks;
  
  \item \label{point2} When new blocks are validated by the honest
  miners, the attacker makes public the part of his fork sharing the
  same height as the ``honest'' blockchain.
\end{enumerate}
\medskip

(See Algorithm 1 in \cite{NF19}, Lines 1 and 19 from Section 3-C)

\medskip

We label this strategy as ``Strategy 1'' or SM1. The procedure of a Bitcoin selfish miner to 
release his secret fork is irrelevant for the profitability of the classical selfish mining 
attack. However, this is not so in Ethereum. In particular, 
the precise algorithm presented in \cite{NF19} is not the most profitable as we will prove. 
An alternative strategy for the attacker would be to keep secret all his fork until
he is on the edge of being caught-up by the honest miners. Then, and only at this critical moment,
he would release his complete fork and override the
public blockchain. We label this second strategy as ``Strategy 2'' or SM2. In Bitcoin, both
strategies have the same effect since only matters 
the number of blocks mined by the attacker and added to the official
blockchain. But in Ethereum, this is not so because of the 
different reward incentives that gives rewards to ``nephew'' blocks  who refer to ``uncle''
blocks. ``Uncle'' blocks are orphan blocks with a parent in the official blockchain, and the descendants 
of this parent in the official blockchain are its ``nephew'' blocks. Also uncle blocks get rewards 
when referred by nephews.

\medskip

\subsection{Performance of Ethereum selfish mining strategies.}
To understand what the best strategy for the attacker is, we need an in-deep 
knowledge of the nature of the selfish mining attack. In \cite{GPM18} we give a correct economic modeling  
with a model of repetition game, and we consider the time element that is absent from older Markov chain models. 
What is important for the attacker is to maximize the number of validated blocks in the official blockchain 
\textit{per unit of time}, which is different from the percentage of blocks he validates. 
With this correct modeling, it becomes then clear that the attack is an exploit on Bitcoin's difficulty adjustment formula, 
that does include the orphan blocks. Then the attacker lowers artificially the difficulty, at the expense of orphaned honest blocks,
and succeeds to validate more blocks per unit of time.

\medskip

Point (\ref{point2}) in ``Strategy 1'' creates 
numerous competitions between the attacker's fork and
the honest blockchain. This increases the production of orphan blocks that
becomes important for a substantial hashrate of the attacker. Signaling these
orphan blocks yields additional rewards to the attacker, but it goes against its main goal
to lower the difficulty. Indeed, the difficulty's adjustment formula in Ethereum counts for ``uncles'', that 
are the orphan blocks directly attached to the main chain. 
Therefore, increasing the number of uncles by Point \ref{point2} has the following
contradictory effects: On one hand, the attacker's revenue increases
because of the new ``inclusion rewards'', but on the other hand, the difficulty is not 
lowered, so the attacker ends up mining less
official blocks per unit of time in Strategy 1 compared to Strategy 2.

\medskip

On the contrary, if the attacker decides to avoid competitions with honest
miners as much as possible, he will earn less inclusion rewards (he can even
decide to ignore totally these rewards) but his speed of validation of blocks will
increase. So, what is the best
strategy will depend very sensitively on the parameters of the reward system.

\medskip

As explained in {\cite{GPM18e}}, the correct benchmark to compare profitabilities of
two strategies is the revenue ratio 
$$
\Gamma = \frac{\mathbb{E}
[R]}{\mathbb{E} [T]}
$$ 
where $R$ is the revenue of the miner per attack cycle
and $T$ is the duration of an attack cycle. In Bitcoin, after a difficulty
adjustment, this quantity becomes in the long run proportional to 
$$
\tilde \Gamma =\frac{\mathbb{E}
[R_s]}{\mathbb{E} [L]}
$$ 
where $L$ (resp. $R_s$) is the number of new blocks
(resp. new blocks mined by the attacker) added to the official blockchain per
attack cycle. The difficulty adjustment is not continuous in Bitcoin as it is updated
every 2016 official new blocks. With the martingale tools introduced in {\cite{GPM18}},
we computed how long it takes for the attack to become profitable (this computation is not possible 
with the old Markov chain model).

\medskip

In Ethereum, the difficulty adjustment formula is different. The revenue 
ratio is proportional to 
$$
\tilde \Gamma=\frac{\mathbb{E} [R]}{\mathbb{E} [L]+\mathbb{E} [U]}
$$ 
where $U$ is the number of referred uncles 
and $R$ is the total revenue of the attacker in the attack cycle. Moreover, 
the revenue $R$ per attack cycle has three different contributions : 
\begin{enumerate}
 \item The revenue $R_s$ coming from ``static'' blocks.
 \item The revenue $R_u$ coming from ``uncles'' blocks.
 \item The revenue $R_n$ coming from ``nephews'' blocks.
\end{enumerate}
In Bitcoin's revenue analysis only $R_s$ is present.
Therefore, for Ethereum we have
$$
\tilde \Gamma=\frac {\mathbb{E} [R]}{\mathbb{E} [L]+\mathbb{E} [U]}=
\frac{\mathbb{E} [R_s] +\mathbb{E} [R_u] +\mathbb{E} [R_n]}{\mathbb{E} [L]+\mathbb{E} [U]}
$$ 
The new terms on the numerator $\mathbb{E} [R_u]$ and $\mathbb{E} [R_n]$ 
increase the revenue of  the attacker and are incentives for block withholding 
attacks. On the other hand, the new term $E[U]$ in the denominator plays against the 
profitability of the attack and tends to mitigate the attack. Only 
an exact computation of these terms can show which one is the most profitable 
strategy. Another particularity of Ethereum is the continuous adjustment of the difficulty. 
Thus a block-witholding attack is very quickly profitable.

\medskip

There are other selfish mining strategies in Ethereum. For
instance, the attacker can publish his secret blocks slowly, two
by two, instead of one by one. In this article we limit our study to
Strategy 1 and Strategy 2. The main result are the closed-form formulas 
for the apparent hashrates in Strategy 1 and 2. The main conclusion is 
that the effect on the difficulty adjustment is prevalent, so that Strategy 2 outperforms Strategy 1.

\section{A combinatorics approach}

In this section we present a general setup that is common for all strategies. 
We apply our combinatorics approach to selfish
mining as done previously for Bitcoin {\cite{GPM19a}}. Dyck words and Catalan numbers are 
a powerful tool to compute the
revenue ratio of a selfish miner in Bitcoin. In {\cite{GPM19a}} we proved the following Theorem and Corollary:

\medskip

\begin{theorem}\label{rapl}
  Let $L$ be the number of official new blocks added to the
  official blockchain after an attack cycle. We have 
  \begin{align*}
  \mathbb{P} [L = 1] &= p \ , \\
  \mathbb{P} [L = 2] &= pq + pq^2 \ , \\ 
  \end{align*}
  and for $n \geq 3$, 
  $$
  \mathbb{P} [L =n] = pq^2  (pq)^{n - 2} C_{n - 2}
  $$ 
  where $C_n = \frac{(2 n) !}{n! (n + 1)!}$ is the n-th Catalan number.
\end{theorem}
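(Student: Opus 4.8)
The plan is to model an attack cycle as a sequence of i.i.d. block discoveries, each attributed to the honest miners with probability $p$ or to the attacker with probability $q=1-p$ (the attacker being in the minority, $q<p$), and to track the attacker's \emph{lead} $\ell$, i.e. the length of his private chain minus the length of the public chain measured from the last common block. A cycle starts at $\ell=0$. First I would record the termination rules of the selfish-mining automaton: from $\ell=0$ an honest discovery ends the cycle immediately with a single official block, so $\mathbb{P}[L=1]=p$; an attacker discovery moves the lead to $\ell=1$ and opens the race phase, during which $\ell$ performs a $\pm1$ random walk ($+1$ with probability $q$, $-1$ with probability $p$).

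Next I would dispose of the remaining terminal behaviours. From $\ell=1$ a further honest block brings the chains even and forces the attacker to publish his single block, creating a two-branch tie resolved by the following block; either way exactly two blocks become official, contributing $pq$ to $\mathbb{P}[L=2]$. When $\ell\ge2$ the attacker is safely ahead and reveals blocks only to match honest ones; the cycle then ends at the first instant the lead drops back from $2$ to $1$, at which point he publishes his whole private chain, all of its blocks become official, and the competing honest blocks are orphaned. Hence in every cycle that ends by this lead-$2$-to-$1$ publication, $L$ equals the number $a$ of attacker blocks mined, while the honest miners have mined $a-1$ blocks.

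The heart of the argument is the count for $n\ge3$, which I would obtain by a first-return (Dyck) decomposition of the lead walk. A terminal publication with $L=n$ corresponds to a walk that starts with an attacker block ($\ell:0\to1$), then makes an excursion from level $1$ that reaches level $2$ and stays $\ge2$ until its first return to $1$; such an excursion uses $m$ up-steps and $m$ down-steps for some $m\ge1$, giving $a=m+1$ attacker blocks and $m$ honest blocks, so $n=m+1$. Translating the excursion down by one level identifies it with a first-return Dyck path of semilength $m$, and the standard decomposition $U\,w\,D$ (with $w$ an arbitrary Dyck path of semilength $m-1$) shows there are exactly $C_{m-1}=C_{n-2}$ of them. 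Each such sequence has $n$ attacker blocks and $n-1$ honest blocks, hence probability $q^np^{n-1}=pq^2(pq)^{n-2}$, so summing over the $C_{n-2}$ paths gives $\mathbb{P}[L=n]=pq^2(pq)^{n-2}C_{n-2}$. Finally I would confirm consistency by checking $\sum_n\mathbb{P}[L=n]=1$ through the Catalan generating function $\sum_{k\ge0}C_kx^k=\frac{1-\sqrt{1-4x}}{2x}$ evaluated at $x=pq$, using $1-4pq=(p-q)^2$.

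The main obstacle I anticipate is not the enumeration itself but the faithful translation of the selfish-mining rules into this first-passage picture: one must argue carefully that the cycle terminates exactly at the first return of the lead to level $1$ (so that no earlier configuration forces publication or abandonment), that every attacker block mined during the excursion ends up official, and that the tie at $\ell:1\to0$ is the only source of the extra $pq$ term separating $L=2$ from the generic $n\ge3$ pattern. Once these boundary conventions are pinned down, the Catalan count follows from the classical first-return decomposition of Dyck words.
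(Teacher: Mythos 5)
Your proposal is correct and follows essentially the same route as the paper's proof (given in the cited reference and encoded here in Propositions \ref{bcdyck} and \ref{ppbar}): model the attacker's lead as a $\pm 1$ walk, handle $L=1$ and the tie contribution $pq$ to $L=2$ separately, and identify cycles with $L=n$ with Dyck-type paths counted by $C_{n-2}$, each of probability $q^n p^{n-1}=pq^2(pq)^{n-2}$. Your first-return factorization $U\,w\,D$ of the excursion is exactly the paper's factorization $\omega = \text{SS}wH$ with $w$ an arbitrary Dyck word of semilength $n-2$, so the two arguments coincide up to re-indexing, with your generating-function normalization check as a nice extra sanity verification.
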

\medskip

\begin{corollary}
  We have $\mathbb{E} [L] = 1 + \frac{p^2 q}{p - q}$.
\end{corollary}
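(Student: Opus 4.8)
The plan is to compute $\mathbb{E}[L]=\sum_{n\ge1} n\,\mathbb{P}[L=n]$ directly from Theorem~\ref{rapl}, handling the tail $n\ge3$ with the generating function of the Catalan numbers. First I would write
\[
\mathbb{E}[L] = p + 2(pq+pq^2) + \sum_{n\ge3} n\,pq^2(pq)^{n-2}C_{n-2},
\]
and reindex the tail by $k=n-2$, so that it becomes $pq^2\sum_{k\ge1}(k+2)(pq)^kC_k$. Writing $(k+2)=k+2$ reduces the problem to evaluating the two series $\sum_{k\ge1}C_k x^k$ and $\sum_{k\ge1}kC_k x^k$ at the point $x=pq$.

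The key tool is the generating function $C(x)=\sum_{k\ge0}C_kx^k=\frac{1-\sqrt{1-4x}}{2x}$, which satisfies the functional equation $C(x)=1+xC(x)^2$. From it, $\sum_{k\ge1}C_kx^k=C(x)-1$, and since $\sum_{k\ge1}kC_kx^k=xC'(x)$, differentiating the functional equation yields the closed form $C'(x)=C(x)^2/\bigl(1-2xC(x)\bigr)$, which avoids differentiating the square root directly.

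The crucial simplification, and the step I expect to carry the whole computation, uses the relation $p+q=1$ with $p>q$: then $1-4pq=(p-q)^2$, so $\sqrt{1-4pq}=p-q$ and hence $C(pq)=\frac{1-(p-q)}{2pq}=\frac1p$. Consequently $1-2pq\,C(pq)=1-2q=p-q$, giving $C'(pq)=\frac{1}{p^2(p-q)}$. Substituting, one gets $\sum_{k\ge1}(pq)^kC_k=\frac qp$ and $\sum_{k\ge1}k(pq)^kC_k=\frac{q}{p(p-q)}$.

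Finally I would assemble the pieces and simplify repeatedly with $p+q=1$: the tail contributes $\frac{q^3}{p-q}+2q^3$, and collecting the remaining terms, $2pq^2+2q^3=2q^2$ and $p+2pq+2q^2=1+q$, so that $\mathbb{E}[L]=1+q+\frac{q^3}{p-q}$. A last rearrangement, using $q(p-q)+q^3=pq-q^2+q^3=p^2q$, turns this into the claimed value $1+\frac{p^2q}{p-q}$. The only genuine obstacle is the evaluation of $C$ and $C'$ at $x=pq$; once the identity $1-4pq=(p-q)^2$ is recognized, everything else is routine algebra.
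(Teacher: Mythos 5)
Your computation is correct, and it follows the route the paper itself relies on: the paper states this corollary without an inline proof (citing \cite{GPM19a}), and the intended derivation there is precisely the one you give, summing $n\,\mathbb{P}[L=n]$ from Theorem \ref{rapl} via the Catalan generating series with the key identities $1-4pq=(p-q)^2$, $C(pq)=\frac{1}{p}$ and $C'(pq)=\frac{1}{p^2(p-q)}$ (valid since $q<p$ keeps $pq$ inside the radius of convergence). All intermediate identities you state check out, so this is a correct, self-contained proof in the same spirit as the cited one.
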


\medskip

We can represent the combinatorics information of an attack cycle $\omega$ by the chronological sequence 
of blocks, S (for Selfish) and H (for Honest).
The relation between selfish mining and Dyck words is the following (see \cite{GPM19a}),

\medskip

\begin{proposition}\label{bcdyck}
  Let $\omega$ be an attack cycle starting with SS. Then, $\omega$ ends with H and 
  the intermediate sequence $w$ defined by $\omega = SSwH$ is a Dyck word.
\end{proposition}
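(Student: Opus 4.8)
The plan is to encode the chronological block sequence of $\omega$ as a lattice path and to read the Dyck condition directly off the stopping rule that closes an attack cycle. Concretely, I would let $X_t$ be the number of S blocks minus the number of H blocks among the first $t$ blocks of $\omega$, with $X_0=0$, so that an S block is an up-step $+1$ and an H block a down-step $-1$; the sequence $\omega$ is then faithfully recorded by the walk $(X_t)$. Since $\omega$ begins with SS we have $X_1=1$ and $X_2=2$, and the entire content of the statement becomes a property of the excursion of $X$ above a fixed level.

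First I would pin down the termination of the cycle in terms of $X$. By the definition of the selfish miner's strategy, once the attacker holds a lead of two blocks the cycle runs on until the honest miners bring that lead back down to one, at which instant the attacker reveals his fork and wins; equivalently, the cycle ends at the first-passage time $\tau=\min\{t>2:\,X_t=1\}$. Because the increments are $\pm1$ and $X_2=2$, the value $1$ cannot be reached without $X_{\tau-1}=2$ and $X_\tau=1$, so the terminal step is a down-step. This already yields the first assertion, namely that $\omega$ ends with H.

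Next I would extract the middle word. Writing $\omega=SS\,w\,H$, the word $w$ is made of the steps taken at times $3,\dots,\tau-1$, and I would follow it through the shifted walk $Y_s=X_{2+s}-2$. The endpoint identity $X_{\tau-1}=2=X_2$ says $Y$ starts and ends at $0$, so $w$ contains equally many S's and H's; and minimality of $\tau$ forces $X_t\ge 2$ for every $2\le t\le \tau-1$, i.e. $Y_s\ge 0$ throughout, which is exactly the condition that every prefix of $w$ have at least as many S's as H's. These two properties are precisely the definition of a Dyck word, completing the argument.

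The step I expect to carry the real content, as opposed to bookkeeping, is the reduction in the second paragraph: justifying that the cycle terminates exactly at the first return of the lead to level $1$ — not to $0$, and not at some earlier competition between the two chains. This is where the specifics of the strategy must be invoked, namely that from a lead of two the attacker is already committed to winning and that, until the honest catch-up, the walk stays at level $\ge 2$ so no fork race occurs; this is also what singles out SS as the correct hypothesis on the initial pattern. Once that stopping rule is fixed the path translation is routine, and as a consistency check the induced law matches Theorem \ref{rapl}: a Dyck word $w$ with $k$ up-steps carries weight $(pq)^k$, there are $C_k$ of them, and the flanking $SS$ and final $H$ contribute $q^2$ and $p$, giving $\mathbb{P}[L=k+2]=pq^2(pq)^kC_k$.
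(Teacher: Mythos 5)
Your proof is correct and takes essentially the same approach as the paper, which states this proposition with a pointer to \cite{GPM19a} and encodes precisely your lattice-path identification in Note \ref{omw}: blocks as $\pm 1$ steps of the attacker's lead, the cycle closing at the first passage of the lead down to $1$, so that the terminal block is H and the middle word is a nonnegative excursion returning to its starting level, i.e.\ a Dyck word. Your consistency check at the end also matches the paper's framework, namely the measure $\bar{\mathbb{P}}[w] = p(pq)^{|w|}$ on Dyck words from the Appendix together with Theorem \ref{rapl}.
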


\medskip

\begin{definition}
  For  $n \geq 0$, we
  denote by $C_n (x) = \sum_{k = 0}^n C_k x^k$, the $n$-th partial sum of the
  generating series of the Catalan number.
\end{definition}

\medskip

\begin{example}\label{Catalan4}\normalfont
  We have $C_4 (x) = 1 + x + 2 x^2
  + 5 x^3 + 14 x^4$.
\end{example}

\medskip

\begin{definition}
  We define $\pi_0 = \pi_1 = 0$ and for $k \geq 2$, 
  $$
  \pi_k = pq^2  ({\bf{1}}_{k = 2} +{\bf{1}}_{k \geq 2} \cdot (pq)^{k - 2} C_{k - 2} ) \ .
  $$
\end{definition}

\medskip

The following lemma results from Theorem \ref{rapl}.

\medskip

\begin{lemma}\label{pisecpi2}
  Let $\omega$ be an attack cycle.
  \begin{itemize}
   \item For $k \geq 0$, the probability that 
  $\omega$ is won by the attacker and  
   $L (\omega) = k$ is $\pi_k$.
   \item For $k \geq 2$, the probability that 
   $\omega$ is won by the attacker and
   $L (\omega) \leq k$ is $pq^2 + pq^2 C_{k - 2} (pq)$.
  \end{itemize}

\end{lemma}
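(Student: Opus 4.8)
The plan is to classify attack cycles by their opening blocks and by whether the attacker wins, then read off each probability either directly or from Theorem \ref{rapl}. The first step is to record the decomposition of the event ``$\omega$ is won by the attacker''. A cycle in which the attacker does not win must terminate quickly: if the first block is honest the cycle stops immediately with $L = 1$, and if the opening is $SH$ (a height-one competition) the honest miners settle the competition in their favor with the next block, giving the cycle $SHH$ and $L = 2$. Consequently every non-won cycle has $L \le 2$, so for $k \ge 3$ the events $\{L = k\}$ and $\{L = k\}\cap\{\text{won}\}$ coincide.

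Granting this, the first bullet is almost immediate for $k \ge 3$: by Theorem \ref{rapl}, $\mathbb{P}[\text{won},\, L = k] = \mathbb{P}[L = k] = pq^2 (pq)^{k-2} C_{k-2} = \pi_k$. For $k \in \{0,1\}$ both sides vanish, since a won cycle needs at least two blocks. The only genuinely combinatorial case is $k = 2$. Here I would enumerate the two winning openings that produce exactly two official blocks: the lead-building cycle $SSH$ (the empty Dyck word of Proposition \ref{bcdyck}) and the competition-winning cycle $SHS$, each occurring with probability $pq^2$. Their sum $2pq^2$ matches $\pi_2$, whose defining formula carries exactly the extra $\mathbf{1}_{k=2}$ term that encodes this second, competition-based way of winning with $L = 2$.

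For the second bullet I would simply sum the first: $\mathbb{P}[\text{won},\, L \le k] = \sum_{j=2}^{k} \pi_j = 2pq^2 + \sum_{j=3}^{k} pq^2 (pq)^{j-2} C_{j-2}$. Reindexing by $i = j - 2$ turns the tail into $pq^2 \sum_{i=1}^{k-2} C_i (pq)^i$, and since $C_{k-2}(x) = \sum_{i=0}^{k-2} C_i x^i$ with $C_0 = 1$, this equals $pq^2\bigl(C_{k-2}(pq) - 1\bigr)$. Adding back $2pq^2$ collapses to $pq^2 + pq^2 C_{k-2}(pq)$, as claimed.

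The main obstacle is the bookkeeping in the $k = 2$ case, specifically the verification that $\{L \le 2\}$ is exactly the union of the non-won cycles with the two winning cycles $SSH$ and $SHS$; this rests on the precise competition-resolution rule of the underlying model rather than on Theorem \ref{rapl}. Once that case analysis is pinned down, everything else is a direct appeal to Theorem \ref{rapl} together with the elementary generating-function identity for the partial Catalan sums.
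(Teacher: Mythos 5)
Your proof is correct and follows essentially the same route as the paper: both rest on the classification of attacker-won cycles as $\text{SHS}$ or cycles starting with $\text{SS}$ (so that non-won cycles have $L \le 2$), together with the Catalan distribution of $L$. The only cosmetic difference is that where the paper's one-line proof invokes Lemma \ref{pbdn} (the $\bar{\mathbb{P}}$-measure of the Dyck words of length at most $n$) to get the cumulative probability directly, you recover the same quantity by summing the point masses $\pi_j$ from Theorem \ref{rapl} and using the definition of the partial sum $C_{k-2}(pq)$ --- an identical computation.
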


\begin{proof}
  We have either $\omega = \text{SHS}$ or $\omega$ starts with SS. The result
  then follows from Lemma \ref{pbdn} in the Appendix.
\end{proof}

\medskip

For Ethereum, the ``static'' part $R_s$ of the revenue of the selfish
miner coming from rewards for validated blocks  is the same as for Bitcoin. However, we need to add the new terms
$R_s$ and $R_n$ coming from uncle and nephew rewards.

\medskip

\begin{definition}\label{gedef}
  If $\omega$ is an attack cycle, we denote by $U (\omega)$  (resp. $U_s
  (\omega)$, $U_h (\omega)$) the random variable counting the number of 
  uncles created during the cycle $\omega$ which
  are referred by nephew blocks (resp. nephew blocks mined by the selfish
  miner, nephew blocks mined by the honest miners) in the cycle $\omega$ or in a later
  attack cycle.
  
  We denote by $V (\omega)$ the random variable counting the number of
  uncles created during the cycle $\omega$ and are referred by nephew blocks
  (honest or not) in an attack cycle strictly after $\omega$.
\end{definition}

\medskip

We take from \cite{NF19} the notation $K_u$ for the uncles reward function,  
and we denote by $\pi$ the inclusion reward (see the glossary at the end).

For a general block withholding strategy, the random variables from Definition \ref{gedef} do not contain all the information 
for the computation of the attacker's revenue. It depends not only on the number of uncles mined by the attacker 
but also on their distance $d$ to its corresponding nephews. 

However, for a miner following a selfish mining strategy, the part of his revenue coming from uncle 
rewards are easy to compute, as shown in the next Proposition, because only the case $d=1$ is possible. 
This observation was already made in \cite{NF19}.
  
\medskip

\begin{proposition}\label{ERU}
  Let $R_u(\omega)$ be the total amount of uncle rewards of the selfish miner during an attack cycle $\omega$. 
  We have: 
  $$
  \mathbb{E}[R_u] = p^2 q (1 - \gamma) K_u (1) \ .
   $$ 
\end{proposition}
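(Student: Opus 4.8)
The plan is to identify exactly when the selfish miner creates an uncle block that will be referred with distance $d=1$ by a nephew, and then compute the probability of that event together with the associated reward. First I would analyze the structure of an attack cycle in terms of its opening moves. By Proposition \ref{bcdyck}, a cycle either begins with the honest miner winning the first block, or the attacker builds a secret lead. The relevant scenario for an uncle reward to the selfish miner is the following: the attacker mines a first block (probability $p$), then the honest miner mines a competing block at the same height, and then the attacker mines a second block before the honest network extends its chain. This is the configuration SHS (or more precisely the event that the attacker is one block ahead, the honest miner catches up to create a competition, and then the attacker pulls ahead again), which produces exactly one selfish orphan block that sits at distance $1$ from the block the attacker subsequently publishes.

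\medskip

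Next I would pin down the probability of this event. The attacker mines the first block with probability $p$; the honest miner then mines the next block with probability $q$, creating the fork/competition; and the attacker must win the ensuing race by mining the following block, contributing another factor $p$. This yields the factor $p^2 q$. The subtlety is the factor $(1-\gamma)$: the parameter $\gamma$ is the fraction of the honest network that, in the event of a same-height competition, mines on top of the attacker's block rather than the honest block. The uncle the attacker created is only \emph{referred by the attacker's own nephew with distance $1$} — and thus counts toward $R_u$ with the attacker collecting $K_u(1)$ — precisely when the honest block loses the race, i.e.\ in the complementary event of probability $(1-\gamma)$. Assembling these independent factors gives the claimed $\mathbb{E}[R_u] = p^2 q (1-\gamma) K_u(1)$.

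\medskip

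The main obstacle I anticipate is the careful bookkeeping of \emph{which} party's uncle is being referred and by \emph{whom}, to avoid double-counting or conflating $R_u$ (the selfish miner's uncle reward) with the inclusion reward $R_n$ or with the honest miners' uncle rewards. In particular I must verify that no other cycle configuration — for instance longer races that begin with SS — contributes an additional selfish uncle at distance $1$: in those longer cycles the attacker's discarded blocks are either not orphaned at distance $1$ or are never referred by the attacker, so they do not contribute to $R_u$. I would also justify the claim, already noted following Definition \ref{gedef} and attributed to \cite{NF19}, that only $d=1$ can occur for a selfish miner, which is what collapses the uncle reward to the single value $K_u(1)$ and makes the expectation a single clean product rather than a sum over distances.
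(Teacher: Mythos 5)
Your proposal fails at the identification of the event: you have placed the uncle on the wrong side of the fork. In the cycle SHS the attacker \emph{wins} the competition, so his blocks form the official chain and it is the \emph{honest} block that is orphaned and becomes an uncle; the uncle reward for that block goes to the honest miner (the attacker at best earns the inclusion reward $\pi$ for signaling it, which is part of $R_n$, not $R_u$). The attacker's block becomes an uncle precisely when it \emph{loses} the race, which happens only in the cycle $\omega = \text{SHH}$ in which the second honest block is mined on top of the first honest block. In the paper's conventions $q$ is the attacker's hashrate (note $\mathbb{P}[L=1]=p$ for the cycle H in Theorem \ref{rapl}), so this event has probability $q \cdot p \cdot (1-\gamma)p = p^2 q (1-\gamma)$ — you have $p$ and $q$ swapped throughout. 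In that event the attacker's block is an uncle at distance $1$ from the honest nephew that signals it, giving $\mathbb{E}[R_u] = p^2 q (1-\gamma) K_u(1)$, which is exactly the paper's two-line proof.

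Beyond the misidentified cycle, your assembly of factors does not describe a well-defined event: you first have the attacker win the race (``the attacker mines a second block before the honest network extends its chain,'' yielding your $p \cdot q \cdot p$) and then multiply by $(1-\gamma)$ for ``the honest block loses the race'' — but if the attacker mined the resolving block, the race is already decided and there is nothing left for $(1-\gamma)$ to condition on. The correct role of $1-\gamma$ is the fraction of honest hashrate that, during the tie, mines on top of the honest block rather than the attacker's, which is precisely what orphans the attacker's block. You also misstate the reward rule: $K_u$ is paid to the \emph{miner of the uncle block}, not to the party referring it, so a block ``referred by the attacker's own nephew'' would in any case generate an inclusion reward for the attacker, not an uncle reward. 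Your closing observation that cycles starting with SS contribute nothing to $R_u$ is true, but for a simpler reason than the one you give: in those cycles the attacker wins, all of his blocks enter the official chain, and only honest blocks can become uncles.
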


\medskip

Currently on Ethereum we have $K_u (1) = \frac{7}{_{} 8} b$.

\medskip

\begin{proof}
  Let $\omega$ be an attack cycle. If $\omega = \text{SHH}$ with a second
  honest block mined on top of another honest block after a competition, the attacker has an uncle
  which is referred by the second honest block of the honest miners in the cycle
  $\omega$. Otherwise, if $\omega \not= \text{SHH}$ then the attacker has no
  uncle in the cycle $\omega$ (the only uncle blocks are those mined by the honest miners).
\end{proof}

\medskip

The \textit{apparent hashrate} is the long term apparent hashrate of the attacker after the manipulation of the difficulty by the attacker.

\medskip

\begin{definition}
We denote by $\widetilde{q_{}}_B$, resp. $\widetilde{q_{}}_E$, the long term apparent hashrate of the selfish miner in Bitcoin, resp. Ethereum, defined by
\begin{align*}
 \tilde{q}_B &=\frac{\mathbb{E} [R_s]}{\mathbb{E} [L]} \\
 \widetilde{q}_E & =\frac{\mathbb{E} [R_s] +\mathbb{E} [R_u] +\mathbb{E} [R_n]}{\mathbb{E} [L] +\mathbb{E} [U]}\\
\end{align*}
\end{definition}

\medskip

For Bitcoin we have the following formula (see \cite{ES14} and \cite{GPM18}),

\begin{equation*}
\widetilde{q_{}}_B = \frac{[(p - q) (1 + pq) + pq] q - (p - q) p^2 q (1 - \gamma)}{pq^2 + p - q}
\end{equation*}

For Ethereum only $\mathbb{E} [U]$ and $\mathbb{E} [U_s]$ are relevant for the computation of the apparent hashrate of the selfish miner:

\medskip

\begin{theorem}\label{qetilde}
  We have
  \begin{equation*}
    \widetilde{q_{}}_E  =  \tilde{q}_B \cdot \frac{\mathbb{E}
    [L]}{\mathbb{E} [L] +\mathbb{E} [U]} + \frac{p^2 q (1 - \gamma) K_u
    (1)}{\mathbb{E} [L] +\mathbb{E} [U]}+  \frac{\mathbb{E} [U_s]}{\mathbb{E} [L] +\mathbb{E} [U]} \, \pi \ .
  \end{equation*}
\end{theorem}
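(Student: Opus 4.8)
The plan is to start from the definition $\widetilde{q}_E = (\mathbb{E}[R_s] + \mathbb{E}[R_u] + \mathbb{E}[R_n])/(\mathbb{E}[L] + \mathbb{E}[U])$ and split the numerator into its three sources of revenue, matching each resulting fraction with one summand of the claimed formula. Writing
$$
\widetilde{q}_E = \frac{\mathbb{E}[R_s]}{\mathbb{E}[L] + \mathbb{E}[U]} + \frac{\mathbb{E}[R_u]}{\mathbb{E}[L] + \mathbb{E}[U]} + \frac{\mathbb{E}[R_n]}{\mathbb{E}[L] + \mathbb{E}[U]},
$$
it suffices to treat each term separately.

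For the static part, I would multiply and divide by $\mathbb{E}[L]$ and use the definition $\tilde{q}_B = \mathbb{E}[R_s]/\mathbb{E}[L]$ to obtain $\mathbb{E}[R_s]/(\mathbb{E}[L]+\mathbb{E}[U]) = \tilde{q}_B \cdot \mathbb{E}[L]/(\mathbb{E}[L]+\mathbb{E}[U])$, which is exactly the first summand. For the uncle part, I would simply substitute the closed-form value $\mathbb{E}[R_u] = p^2 q (1-\gamma) K_u(1)$ furnished by Proposition \ref{ERU}, yielding the second summand directly.

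The nephew part is the only step requiring an argument beyond bookkeeping. Here I would observe that, under the Ethereum reward rules, each uncle referenced by a nephew block grants that nephew a fixed inclusion reward $\pi$; since by Definition \ref{gedef} the quantity $U_s(\omega)$ counts precisely the uncles referred by nephew blocks mined by the attacker, the attacker's nephew revenue satisfies $R_n(\omega) = \pi \, U_s(\omega)$ pathwise. Taking expectations and using that $\pi$ is a constant gives $\mathbb{E}[R_n] = \pi \, \mathbb{E}[U_s]$, so the third fraction equals $\frac{\mathbb{E}[U_s]}{\mathbb{E}[L]+\mathbb{E}[U]} \, \pi$. Summing the three pieces yields the stated identity. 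I expect this last identification to be the main (and essentially only) obstacle, as it rests on the conventions that the inclusion reward is a fixed amount per referenced uncle and that the static and uncle contributions carry no further nephew reward, so that all nephew revenue of the attacker is accounted for by $U_s$.
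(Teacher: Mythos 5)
Your proposal is correct and follows essentially the same route as the paper: split the numerator of $\widetilde{q}_E$ into its three parts, rewrite $\mathbb{E}[R_s]/(\mathbb{E}[L]+\mathbb{E}[U])$ via $\tilde{q}_B$, substitute Proposition \ref{ERU} for $\mathbb{E}[R_u]$, and identify $\mathbb{E}[R_n]=\pi\,\mathbb{E}[U_s]$, a step the paper performs silently and you justify explicitly. One small nuance: the identity $R_n(\omega)=\pi\,U_s(\omega)$ is not quite pathwise, since $U_s(\omega)$ counts uncles created in $\omega$ whose referring attacker nephews may lie in a later cycle where the reward is actually collected, but per-cycle expectations coincide in the long run, which is exactly the equality the paper uses.
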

Currently on Ethereum we have $K_u (1) = \frac{7}{8}$ and $\pi = \frac{1}{32}$.
\begin{proof}
  Using Proposition \ref{ERU}, we have:  
  \begin{align*}
    \widetilde{q_{}}_E & =\frac{\mathbb{E} [R_s] +\mathbb{E} [R_u] +\mathbb{E} [R_n]}{\mathbb{E} [L] +\mathbb{E} [U]}\\
    & =\frac{\mathbb{E} [R_s]}{\mathbb{E} [L]} \cdot \frac{\mathbb{E}
    [L]}{\mathbb{E} [L] +\mathbb{E} [U]} + \frac{\mathbb{E}
    [R_u]}{\mathbb{E} [L] +\mathbb{E} [U]} +  \frac{\mathbb{E} [U_s]}{\mathbb{E} [L] +\mathbb{E} [U]} \, \pi \\
    & = \tilde{q}_B \cdot \frac{\mathbb{E} [L]}{\mathbb{E} [L] +\mathbb{E} [U]} + 
    \frac{p^2 q (1 - \gamma) K_u (1)}{\mathbb{E} [L] +\mathbb{E} [U]} +  \frac{\mathbb{E} [U_s]}{\mathbb{E} [L] +\mathbb{E} [U]} \, \pi
  \end{align*}
\end{proof}

In next sections we compute $\mathbb{E} [U_s]$ and $\mathbb{E} [U]$ for different selfish mining strategies.

\section{Strategy 1: Maximum belligerence signalling all uncles.}

We consider here the strategy described in \cite{NF19} where the attacker engages in competition with the honest miners as
often as possible, and signals all possible ``uncles'.

\subsection{General definitions and basic results.}

\medskip

\begin{definition}\label{defh}
  The relative height of an orphan block $\mathfrak{b}$ validated
  by the honest miners is the difference between the height of the secret fork
  of the attacker at the time of creation of $\mathfrak{b}$ and the height of
  $\mathfrak{b}$. We denote it $h (\mathfrak{b})$.
\end{definition}

\medskip

\begin{example}\normalfont
  For $\omega = \text{SSSHSHSHH}$, the first three ``honest'' blocks have
  relative height equal to 2 and the last ``honest'' block has a relative
  height equal to 1.
\end{example}

\medskip

\begin{proposition}\label{hbd}
  Let $\mathfrak{b}$ be an uncle block mined by an honest miner and
  signaled by a nephew block which is at a distance $d$ of $\mathfrak{b}$. Then,
  we have \ $h (\mathfrak{b}) < d$.
\end{proposition}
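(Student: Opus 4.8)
The plan is to unpack the definitions and track the heights of the attacker's secret fork versus the honest chain through the attack cycle, showing that an uncle's relative height is strictly bounded by its distance to the signaling nephew.

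Concretely, write $k = \mathrm{ht}(\mathfrak{b})$ for the height of the uncle and let $F$ be the height of the attacker's secret fork at the instant $\mathfrak{b}$ is created, so that by Definition \ref{defh} we have $h(\mathfrak{b}) = F - k$. Since the distance $d$ is the difference between the heights of the signaling nephew and of $\mathfrak{b}$, the nephew sits at height $k + d$. Hence the desired inequality $h(\mathfrak{b}) < d$ is equivalent to $F < k + d$, that is, to the assertion that the nephew lies strictly higher than the attacker's fork stood at the birth of $\mathfrak{b}$.

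First I would record why $\mathfrak{b}$ being an uncle forces the attacker's secret blocks onto the official chain above $\mathfrak{b}$'s parent. Since $\mathfrak{b}$ is orphaned while its parent lies on the official chain, the competition at height $k$ is won by the attacker's block, and the winning branch is an initial segment of his secret fork. Consequently the official-chain blocks occupying heights $k, k+1, \ldots, F$ are all attacker blocks already present in the fork when $\mathfrak{b}$ is mined; in particular, each of them is created no later than $\mathfrak{b}$.

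Then I would invoke the referencing principle: a nephew can only point to an uncle that already exists, so the nephew is mined strictly after $\mathfrak{b}$. If the nephew were at some height $j \le F$, then, lying on the official chain, it would coincide with the attacker's secret block at height $j$, which was created before $\mathfrak{b}$ and hence cannot reference it, a contradiction. Therefore the nephew occupies a height at least $F + 1$, so $k + d \ge F + 1$ and $h(\mathfrak{b}) = F - k \le d - 1 < d$.

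I expect the only delicate point to be the claim in the second paragraph that the attacker's winning branch between heights $k$ and $F$ is exactly his pre-existing secret fork, rather than the product of some later reorganization; this is where the dynamics of Strategy 1 and the reading of the relative height as a lead already realized by secret blocks must be used. Once that is secured, the height bookkeeping together with the impossibility of referencing a not-yet-mined block closes the argument immediately.
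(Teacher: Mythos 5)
Your proof is correct and takes essentially the same route as the paper's: the paper introduces the fork tip $\mathfrak{b}'$ at the creation time of $\mathfrak{b}$, observes that $h(\mathfrak{b})$ equals the number of blocks between $\mathfrak{b}$'s parent and $\mathfrak{b}'$, and concludes that any nephew, being mined after $\mathfrak{b}$, must lie strictly beyond these blocks --- exactly your height bookkeeping with $F$ and $k$. The ``delicate point'' you flag (that the official chain through height $F$ consists of the attacker's pre-existing fork blocks rather than a later reorganization) is equally implicit in the paper's one-line argument, so you have merely made explicit what the paper leaves tacit.
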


\medskip

\begin{proof}
  Let $\mathfrak{b}'$ be the last block mined by the selfish miner at the date
  of creation of $\mathfrak{b}$. Notice that $h (\mathfrak{b})$ is also the
  number of blocks between $\mathfrak{b}$'s parent and $\mathfrak{b}'$. Thus the
  distance between $\mathfrak{b}$ and a possible nephew is necessarily strictly
  greater than $h (\mathfrak{b})$.
\end{proof}

\medskip

\begin{note}\normalfont
  \label{omw}
  Let $n \geq 0$ and $\omega = \text{SS} w$ be an attack
  cycle with $w = w_1 \ldots w_{2 n + 1}$, $w_i \in \{ S, H \}$ and $w_{2 n +
  1} = H$. Then, $w$ can be identified with a simple finite path $(X_i)_{0
  \leqslant i \leqslant 2 n + 1}$ starting from $0$, satisfying: $\forall i
  \leqslant 2 n + 1, X_i = X_{i - 1} + 1$ (resp. $X_i = X_{i - 1} - 1$) if
  $w_i = S$ (resp. $w_i = H$) and ending at $X_{2 n + 1} = - 1$ (see the
  Appendix). The index $i$ indicates the $(i + 2)$-th block validated during
  $\omega$. It has been mined by the attacker (resp. honest miners) if $X_i =
  X_{i - 1} + 1$ (resp. $X_i = X_{i - 1} - 1$).
\end{note}

\medskip

\begin{proposition}
  \label{omw2}Let $\omega = \text{SS} w$ an attack cycle starting with two S
  with $w = w_1 \ldots w_{2 n + 1}$, $w_i \in \{ S, H \}$ and $w_{2 n + 1} =
  H$. \ We denote by $X : \left[ 0, 2 n + 1 \right] \longrightarrow
  \left[ - 1, + \infty \right]$ the path associated with $w$ as in
  Note \ref{omw}. For $i \leqslant 2 n + 1$, let $\mathfrak{b}_i$ denote the
  $i$-th validated block in $w$. Then we have:
  \[ X_i < X_{i - 1} \Longrightarrow h (\mathfrak{b}_i) = X_i + 2 \]
\end{proposition}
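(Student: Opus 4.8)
The plan is to read off both ingredients of Definition \ref{defh} directly from the lattice path of Note \ref{omw}. First I would note that the hypothesis $X_i < X_{i-1}$ means the $i$-th step of $w$ is a down-step, i.e. $w_i = H$, so $\mathfrak{b}_i$ is a block mined by the honest miners and its relative height $h(\mathfrak{b}_i)$ is defined. Normalizing the common fork point to height $0$, I would then write $h(\mathfrak{b}_i)$ as the difference between the height of the attacker's secret fork at the creation of $\mathfrak{b}_i$ and the height of $\mathfrak{b}_i$ itself.

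For the fork height, since $\omega$ starts with SS the secret fork sits at height $2$ before $w$ begins; each letter $S$ in $w$ raises it by one and each $H$ leaves it unchanged, so at the instant $\mathfrak{b}_i$ is created (after $w_1\ldots w_{i-1}$ have been mined) the fork has height $2 + \#\{j < i : w_j = S\}$. For the height of $\mathfrak{b}_i$, the honest miners extend their own chain one block at a time, so the $k$-th honest block sits at height $k$ and thus $\mathfrak{b}_i$ has height $\#\{j \le i : w_j = H\}$. Substituting into Definition \ref{defh} and using $\#\{j<i:w_j=S\}=\#\{j\le i:w_j=S\}$ (valid because $w_i=H$) yields
\[ h(\mathfrak{b}_i) = 2 + \#\{j\le i: w_j=S\} - \#\{j\le i:w_j=H\} = X_i + 2, \]
since $X_i$ is exactly the number of $S$'s minus the number of $H$'s among $w_1\ldots w_i$.

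The only step carrying real content beyond bookkeeping is the claim that $\mathfrak{b}_i$ sits at height $\#\{j\le i: w_j=H\}$, i.e. that inside the cycle the honest chain gains exactly one unit of height per honest block; this is where the Strategy-1 dynamics and the SS start must be invoked. Because the attacker begins two blocks ahead, the truncated path $X_0\ldots X_{2n}$ is a Dyck path (Proposition \ref{bcdyck}) and hence stays nonnegative, so the attacker is never overtaken during $w$ and every honest block is orphaned, forming a single chain on the consecutive heights $1,2,3,\ldots$ above the fork. Once this is in place, the reformulation used in the proof of Proposition \ref{hbd} gives an independent check: taking $\mathfrak{b}'$ to be the highest secret block, at height $2+\#\{j<i:w_j=S\}$, and the parent of $\mathfrak{b}_i$, at height $\#\{j\le i:w_j=H\}-1$, the count of blocks strictly between them again collapses to $X_i+2$, matching the values $2,2,2,1$ read off from the example $\omega=\text{SSSHSHSHH}$.
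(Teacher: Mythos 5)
Your proof is correct and in substance it is the paper's own argument: the paper compresses the bookkeeping into ``by induction on $i$, $X_i+2$ is the advance of the attacker's fork over the official blockchain,'' which is exactly your explicit count $2+\#\{j<i: w_j=S\}$ for the fork versus $\#\{j\le i: w_j=H\}$ for $\mathfrak{b}_i$, combined with Definition \ref{defh}. One sub-claim in your justification of the key height invariant is inaccurate, though harmlessly so: under Strategy 1 the honest blocks within a cycle need \emph{not} ``extend their own chain'' or ``form a single chain,'' because after each competition the next honest block is built, with probability $\gamma$, on top of the attacker's released block rather than on the previous honest block. What actually secures the invariant is that the attacker only ever publishes the part of his fork sharing the same height as the honest chain, so at any tie the competing tips have equal height, and consequently each new honest block raises the public height by exactly one whichever tip it extends; hence the $k$-th honest block sits at height $k$ even though the honest blocks may be scattered across branches. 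With that one-line repair your argument is complete, and your cross-check against Proposition \ref{hbd} and the example $\omega=\text{SSSHSHSHH}$ (values $2,2,2,1$) is consistent with the corrected reading.
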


\medskip

\begin{proof}
  By induction on $i$, we show that $X_i + 2$ represents the advance of the
  fork created by the attacker over the official blockchain at the time of
  creation of the $i$-th block in $w$. Now, if $X_i < X_{i - 1}$ then by Note
  \ref{omw}, $\mathfrak{b}_i$ is a block validated by the honest miners. So $h
  (\mathfrak{b}_i)$ is well defined, and we get the result using Definition
  \ref{defh}.
\end{proof}

\medskip

\begin{proposition}\label{no}
  Let $\omega = \text{SSw}$ be an attack cycle starting with two S
  and let $\mathfrak{b}_i$ be the $i$-th block validated in $w$. We denote by
  $X$ the associated path according to Note \ref{omw}. If $\mathfrak{b}_i$ is
  an uncle then we have:
  \begin{enumerate}
    \item \label{condition_1} $X_i < n_1 - 2$
    \item \label{condition_2} $X_i < X_{i - 1}$
  \end{enumerate}
\end{proposition}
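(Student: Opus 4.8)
The plan is to read off both conditions from the path $X$ using the dictionary already established in Note~\ref{omw} and Propositions~\ref{omw2} and~\ref{hbd}. Condition~(\ref{condition_2}) merely records that an uncle is an orphan block, whereas condition~(\ref{condition_1}) expresses that such an orphan is actually referred to by a nephew, so I expect the latter to carry the real content.

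I would first prove condition~(\ref{condition_2}). Since $\omega = \mathrm{SS}w$ begins with two $S$, Proposition~\ref{bcdyck} shows that the cycle is won by the attacker, so that each of his blocks ends up in the official blockchain and the only orphan blocks produced during $\omega$ are mined by the honest miners. Hence an uncle $\mathfrak{b}_i$ is necessarily honest, and by the correspondence of Note~\ref{omw} a block validated by the honest miners corresponds to a down-step of the path, i.e. $X_i = X_{i-1} - 1 < X_{i-1}$. This is exactly condition~(\ref{condition_2}).

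For condition~(\ref{condition_1}), condition~(\ref{condition_2}) lets me invoke Proposition~\ref{omw2}, giving $h(\mathfrak{b}_i) = X_i + 2$; the claim $X_i < n_1 - 2$ is therefore equivalent to $h(\mathfrak{b}_i) < n_1$. Being a referred uncle, $\mathfrak{b}_i$ has a nephew at some distance $d$, so Proposition~\ref{hbd} yields $h(\mathfrak{b}_i) < d$. It then suffices to bound $d$ by $n_1$: the nephew is an official block lying above the parent of $\mathfrak{b}_i$, and since all official blocks of the cycle belong to the attacker, the distance from $\mathfrak{b}_i$ to such a nephew cannot exceed the number $n_1$ of blocks the attacker places in the official chain. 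Combining $h(\mathfrak{b}_i) < d$ with $d \le n_1$ gives $X_i + 2 < n_1$, as desired.

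The main obstacle is this last bound $d \le n_1$. One must argue, from the height bookkeeping in the proof of Proposition~\ref{hbd} — where $h(\mathfrak{b}_i)$ counts the blocks between $\mathfrak{b}_i$'s parent and the current tip of the attacker's fork — that a referring nephew must lie inside the portion of the attacker's fork that is eventually published, so that it sits at most $n_1$ generations above the uncle. Alternatively, one can bypass the nephew altogether: the word $w$ contains exactly $n$ up-steps, so every down-step already satisfies $X_i \le n-1$, which coincides with $n_1 - 3$ once $n_1$ is read as the attacker's total block count; either way the sole delicate point is the precise role of $n_1$, the remainder being a direct substitution of the identities above.
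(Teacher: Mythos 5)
Your handling of condition (\ref{condition_2}) is sound and matches the intended reading: in a cycle starting SS the attacker wins, the only orphan blocks are honest ones, and honest blocks are down-steps of $X$, so $X_i = X_{i-1}-1 < X_{i-1}$. Likewise, invoking Proposition \ref{omw2} for $h(\mathfrak{b}_i) = X_i + 2$ and Proposition \ref{hbd} for $h(\mathfrak{b}_i) < d$ is exactly the paper's route, which cites precisely these two propositions.

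However, your proof of condition (\ref{condition_1}) has a genuine gap, rooted in a misreading of what $n_1$ is. You assert that $d \le n_1$ because ``the distance from $\mathfrak{b}_i$ to such a nephew cannot exceed the number $n_1$ of blocks the attacker places in the official chain,'' and later read $n_1$ as ``the attacker's total block count.'' But $n_1$ is not a property of the cycle at all: it is a fixed protocol parameter of Ethereum (currently $n_1 = 6$), namely the maximum distance at which a nephew may refer an uncle — see the glossary, where $K_u(d) = \frac{8-d}{8}\cdot\mathbf{1}_{d \leqslant n_1}$. The bound $d \le n_1$ therefore holds \emph{by definition} of a referred uncle and cannot be derived from the combinatorics of $\omega$; indeed your structural claim is false on two counts: the attacker's fork may contain arbitrarily many more than $n_1$ blocks (the cycle length is unbounded), and the referring nephew need not lie in $\omega$ at all — the paper notes explicitly that trailing honest blocks are referred by the first official block of the \emph{next} attack cycle. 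Your ``alternative'' bypass fails for the same reason: the number $n$ of up-steps of $w$ is unbounded and unrelated to $n_1$, so $X_i \le n-1$ yields no bound of the form $X_i < n_1 - 2$. The correct one-line finish is: a referred uncle has a nephew at distance $d \le n_1$ by the protocol rule, so Proposition \ref{hbd} gives $h(\mathfrak{b}_i) < d \le n_1$, and Proposition \ref{omw2} converts $h(\mathfrak{b}_i) \le n_1 - 1$ into $X_i \le n_1 - 3 < n_1 - 2$.
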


\medskip

\begin{proof}
  This follows from Proposition \ref{hbd} and Proposition \ref{omw2}.
\end{proof}

\medskip

\begin{definition}
  If $\omega = \text{SS} w$ is an attack cycle starting with two blocks S,
  then we denote by $H (\omega)$ the random variable counting the number of blocks in the cycle $\omega$ 
  fulfilling (\ref{condition_1}) and (\ref{condition_2}) from Proposition \ref{no}.
\end{definition}

\medskip

If $w$ is an attack cycle, the condition $\omega = \text{SS} \ldots$ means
that $\omega$ starts with two $S$.

\medskip

\begin{proposition}
  \label{eho}We have:
  \[ \mathbb{E} [H(\omega) | \omega = \text{SS} \ldots] = \frac{p}{p - q} 
     \left( 1 - \left( \frac{q}{p}^{} \right)^{n_1 - 1} \right) \]
\end{proposition}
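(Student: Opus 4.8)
The plan is to exploit the random-walk encoding of Note~\ref{omw}. Conditioned on $\omega = \mathrm{SS}\ldots$, the word $w$ is exactly the trajectory of a nearest-neighbour walk $(X_i)$ started at $X_0 = 0$, stepping $+1$ with probability $q$ (an $S$) and $-1$ with probability $p$ (an $H$), stopped at its first hitting time of $-1$; since $p > q$ this time is a.s.\ finite and the expectations below converge. By Proposition~\ref{no} the blocks counted by $H(\omega)$ are precisely the down-steps $X_{i-1} \to X_i = X_{i-1} - 1$ with $X_i < n_1 - 2$, that is, the down-steps issued from a level $X_{i-1} = j$ with $0 \le j \le n_1 - 2$. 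Writing $D_j$ for the number of down-steps of the trajectory issued from level $j$ (transitions $j \to j-1$), we get $H(\omega) = \sum_{j=0}^{n_1 - 2} D_j$, and it remains to show $\mathbb{E}[D_j] = (q/p)^j$ and to sum the geometric series.

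To compute $\mathbb{E}[D_j]$ I would combine a one-step ratio with a conservation identity. Let $N_j$ be the number of visits of the trajectory to level $j$ and $U_j$ the number of up-steps $j \to j+1$. Since the walk is absorbed only at $-1$, every visit to a level $j \ge 0$ is followed by exactly one step, and by the strong Markov property this step is up with probability $q$ and down with probability $p$; hence $\mathbb{E}[U_j] = q\,\mathbb{E}[N_j]$ and $\mathbb{E}[D_j] = p\,\mathbb{E}[N_j]$, so that $\mathbb{E}[U_j] = (q/p)\,\mathbb{E}[D_j]$. For the conservation identity, observe that the trajectory starts at $0$ and ends at $-1$, both at or below level $j$ for every $j \ge 0$; hence every excursion to levels $\ge j+1$ is opened by an up-step $j \to j+1$ and closed by a down-step $j+1 \to j$, giving $U_j = D_{j+1}$ on every path and in particular $\mathbb{E}[D_{j+1}] = \mathbb{E}[U_j]$. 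Finally $\mathbb{E}[D_0] = 1$, since the unique down-step from level $0$ is the terminal step landing at $-1$.

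Combining these facts yields the recursion $\mathbb{E}[D_{j+1}] = (q/p)\,\mathbb{E}[D_j]$, which with $\mathbb{E}[D_0] = 1$ gives $\mathbb{E}[D_j] = (q/p)^j$ for all $j \ge 0$, and therefore
\[
\mathbb{E}[H(\omega) \mid \omega = \mathrm{SS}\ldots] = \sum_{j=0}^{n_1 - 2} \left(\frac{q}{p}\right)^j = \frac{1 - (q/p)^{n_1 - 1}}{1 - q/p} = \frac{p}{p - q}\left(1 - \left(\frac{q}{p}\right)^{n_1 - 1}\right),
\]
as claimed. I expect the main obstacle to be the rigorous justification of the two pathwise facts underlying the recursion rather than the algebra: namely that every interior visit is followed by exactly one step, so the up/down split is genuinely $q:p$ in expectation via the strong Markov property, and that $U_j = D_{j+1}$ on every trajectory; the hypothesis $p > q$ is what guarantees termination and finiteness of each $\mathbb{E}[D_j]$. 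As a cross-check, $\mathbb{E}[D_0] = (q/p)^0 = 1$ matches the direct count, and the reading $h(\mathfrak{b}_i) = X_i + 2$ of Proposition~\ref{omw2} identifies $D_j$ with the honest blocks of relative height $j+1$, consistent with the constraint $h(\mathfrak{b}) < n_1$.
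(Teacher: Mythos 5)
Your proof is correct, but it takes a genuinely different route from the paper's. The paper disposes of Proposition \ref{eho} by citing Lemma \ref{maon} of the Appendix, which is proved by a first-step (excursion) decomposition: splitting the stopped walk at its first return below the starting point yields the affine recursion $u_n = p + q\,(u_{n-1} + u_n)$, solved by induction on the truncation parameter $n$. You instead decompose \emph{spatially}: writing $H(\omega) = \sum_{j=0}^{n_1-2} D_j$ over levels (correctly, since a down-step landing at $X_i < n_1-2$ is exactly one issued from a level $j \le n_1-2$), you derive the per-edge expected crossing count $\mathbb{E}[D_j] = (q/p)^j$ from the pathwise conservation identity $U_j = D_{j+1}$ together with the strong-Markov step-split $\mathbb{E}[U_j] = q\,\mathbb{E}[N_j]$, $\mathbb{E}[D_j] = p\,\mathbb{E}[N_j]$ — legitimate here because $p>q$ makes the walk transient, so $\nu < \infty$ a.s.\ and $\mathbb{E}[N_j] < \infty$, which justifies the Wald/Fubini exchange you flag as the main obstacle — and the proposition becomes a geometric sum. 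Your approach buys a finer structural fact (the expected number of down-crossings of each edge, not just their total), and it would also deliver the weighted sum $v_n$ of Lemma \ref{maon} for free as $\sum_{j=0}^{n}(j-1)(q/p)^j$, since each down-step from level $j$ contributes its landing level $j-1$; one can check this sums to the paper's closed form, so the second, more laborious recursion of the Appendix (the substitution $c_n = (p/q)^n v_n$) would be avoided entirely. What the paper's method buys in exchange is a single uniform induction scheme that treats $u_n$ and $v_n$ by the same renewal decomposition, with no occupation-measure bookkeeping.
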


\medskip

\begin{proof}
  See Lemma \ref{maon} in the Appendix.
\end{proof}

\medskip

\subsection{Expected number of referred uncles by attack cycle}

We can be more precise in Proposition \ref{no}.

\medskip

\begin{lemma}\label{poun}
  Let $\omega = \text{SSw}$ and $X$ be the associated path
  from Note \ref{omw}. We denote by $\mathfrak{b}_i$ the $i$-th block
  in $w$ and suppose that conditions (\ref{condition_1}) and (\ref{condition_2}) 
  from Proposition \ref{no} are satisfied. The
  probability for $\mathfrak{b}_i$ to be an uncle is equal to $\gamma$, except
  when $\mathfrak{b}_i$ is the first block validated by the honest
  miners, then this probability is $1$.
\end{lemma}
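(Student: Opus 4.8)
The plan is to reduce the event ``$\mathfrak{b}_i$ is an uncle'' to a single binary outcome of the competition that immediately precedes $\mathfrak{b}_i$, and then to read off its probability from the definition of $\gamma$. First I would fix the word $w$, so that the positions of all the $S$'s and $H$'s, and hence the path $X$, are determined, and observe that since $\omega$ starts with $SS$ the attacker wins the cycle (Proposition \ref{bcdyck}). Consequently the official blockchain is the attacker's fork, every honest block is orphaned, and the only question left is whether a given orphaned honest block can be referred as an uncle. By the definition of an uncle this requires two things: that its parent belong to the official blockchain, and that some nephew refer to it. Under Strategy 1 the attacker signals every admissible uncle, so that, given conditions (\ref{condition_1}) and (\ref{condition_2}) of Proposition \ref{no} (which force $\mathfrak{b}_i$ to be a down-step of small enough relative height), the existence of a referring nephew is guaranteed as soon as the parent of $\mathfrak{b}_i$ belongs to the official blockchain. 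Hence, conditionally on (\ref{condition_1}) and (\ref{condition_2}), the event ``$\mathfrak{b}_i$ is an uncle'' is equivalent to ``the parent of $\mathfrak{b}_i$ lies in the official blockchain.''

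Next I would analyze that parent. In Strategy 1 each honest block triggers the attacker to publish the block of his secret fork at the same height, so immediately before $\mathfrak{b}_i$ is mined there is a competition at the previous height between the attacker's block $A$, which belongs to the winning fork, and the last honest block $B$, which is orphaned. The honest miner who produces $\mathfrak{b}_i$ was extending $A$ with probability $\gamma$ and $B$ with probability $1-\gamma$, by the defining meaning of $\gamma$. Since the $\gamma$-split of the honest hashrate does not alter the total honest hashrate, it is independent of whether the next validated block is honest or selfish; thus, conditionally on $\mathfrak{b}_i$ being honest (which is part of the data encoded in $w$), the parent of $\mathfrak{b}_i$ equals $A$, and therefore lies in the official blockchain, with probability exactly $\gamma$. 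This yields the probability $\gamma$ for a generic honest block.

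Finally I would treat the exceptional first honest block separately. This block is mined before any competition has taken place, directly on top of the last common block preceding the fork; that common ancestor belongs to the official blockchain with certainty. Hence its parent is in the official blockchain with probability $1$, and the block is an uncle with probability $1$, which is precisely the stated exception.

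The step I expect to require the most care is the second one: justifying that, once $w$ is fixed, the only surviving randomness relevant to ``parent in the official blockchain'' is the single $\gamma$-competition immediately preceding $\mathfrak{b}_i$, and that this competition outcome is independent of the symbolic data recorded by $w$. Making this independence precise, namely that conditioning on the $S/H$ pattern does not bias the honest miners' choice of which block to extend, is the crux; once it is established, the values $\gamma$ and $1$ follow directly from the meaning of $\gamma$ and from the position of the common ancestor.
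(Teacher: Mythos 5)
Your proof is correct and matches the paper's intended reasoning: the paper states Lemma \ref{poun} without any proof (it is followed only by an illustrative example), and the argument implicit there is exactly your reduction --- since the attacker wins every cycle starting with SS, his fork is the official chain, so an honest block satisfying conditions (\ref{condition_1}) and (\ref{condition_2}) is an uncle precisely when its parent lies on the attacker's fork, which happens with probability $\gamma$ for a block mined during one of the height-for-height competitions of Strategy 1, and with probability $1$ for the first honest block, whose parent is the common ancestor on the pre-fork official chain. Your explicit flagging of the independence of the $\gamma$-split from the S/H pattern recorded in $w$ makes precise the standard modeling assumption that the paper leaves implicit, so your write-up in fact supplies the omitted proof rather than deviating from it.
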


\medskip

\begin{example}\normalfont
Suppose that $n_1 = 4$ and let $\omega = \text{SS}w$ with $w =\text{SHSSSHHHH}$. 
The blocks validated by the honest miners correspond to an index 
$i\in E=\{2, 6, 7, 8, 9\}$. 
We have $X_6=2$ and $X_i<2$ for $i\in E$ and $i\not=6$.
The first block validated by the honest miners is an uncle with probability $1$. 
The second block validated by the honest miners is a stale block which cannot be referred by a nephew block. 
All other blocks validated by the honest miners in $\omega$ can be uncles with probability $\gamma$. 
Note also that the last three blocks of the honest miners are not referred in $\omega$ and
will be referred by the first future official block of the next attack cycle.

\end{example}

\medskip

Using these observations, we can now compute $\mathbb{E} [U]$.

\medskip

\begin{proposition}\label{proeu}
  We have:
  \begin{equation*}
   \mathbb{E} [U] = q + \frac{q^3 \gamma}{p - q} - \frac{p^3}{p - q} 
     \left( \frac{q}{p}^{} \right)^{n_1 + 1} \gamma - q^{n_1 + 1}  (1 -
     \gamma) 
  \end{equation*}
\end{proposition}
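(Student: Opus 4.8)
The plan is to split $\mathbb{E}[U]$ according to the first one or two blocks of the attack cycle, treating the short ``degenerate'' cycles by hand and the cycles starting with $SS$ through the combinatorial machinery of Proposition \ref{no}, Lemma \ref{poun} and Proposition \ref{eho}.

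First I would dispose of the cycles not starting with $SS$. The cycle $\omega = \text{H}$ (probability $p$) produces no orphan, hence no uncle. Every other non-$SS$ cycle begins with $SH$ and terminates after exactly one more block: either $\text{SHS}$ (probability $pq^2$), or $\text{SHH}$, the latter splitting into the sub-case where the second honest block is mined on the attacker's block (probability $p^2q\gamma$) and the sub-case where it is mined on the honest block (probability $p^2q(1-\gamma)$). In each of these three cases the competition leaves exactly one orphan whose parent lies on the official chain and whose distance to its nephew is at most $2\le n_1$, so exactly one referred uncle is created. The contribution of the $SH$-cycles is therefore
$$ pq^2 + p^2q\gamma + p^2q(1-\gamma) = pq . $$

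Next I would treat the $SS$-cycles. Conditioning on the associated path $X$, Lemma \ref{poun} gives $\mathbb{E}[U \mid \text{path}] = \gamma\, H(\omega) + (1-\gamma)A$, where $H(\omega)$ counts the blocks satisfying conditions (\ref{condition_1}) and (\ref{condition_2}) of Proposition \ref{no}, and $A$ is the indicator that the first honest block of $w$ is itself such a candidate (it always satisfies (\ref{condition_2}), so $A$ records whether it satisfies (\ref{condition_1})). The correction term appears because that first honest block is referred with probability $1$ rather than $\gamma$: when $A=1$ one has $1+\gamma(H(\omega)-1)=\gamma H(\omega)+(1-\gamma)$, and when $A=0$ all candidates carry the factor $\gamma$. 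Taking expectations,
$$ \mathbb{E}[U\,\mathbf{1}_{SS}] = \gamma\, \mathbb{E}[H(\omega)\mathbf{1}_{SS}] + (1-\gamma)\,\mathbb{P}[SS,\,A=1] . $$
Here $\mathbb{P}[SS]=q^2$ and, by Proposition \ref{eho}, $\mathbb{E}[H(\omega)\mid SS]=\frac{p}{p-q}\bigl(1-(q/p)^{n_1-1}\bigr)$. For the last term I observe that $A=0$ forces the first $n_1-1$ letters of $w$ to be $S$ (equivalently, the first honest block has relative height $\ge n_1$), an event of probability $q^{n_1-1}$; hence $\mathbb{P}[SS,\,A=0]=q^{n_1+1}$ and $\mathbb{P}[SS,\,A=1]=q^2-q^{n_1+1}$.

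It then remains to assemble
$$ \mathbb{E}[U] = pq + (1-\gamma)(q^2-q^{n_1+1}) + \frac{\gamma pq^2}{p-q}\Bigl(1-(q/p)^{n_1-1}\Bigr) $$
and simplify. Using $\frac{pq^2}{p-q}=q^2+\frac{q^3}{p-q}$, the constant part collapses via $pq+q^2=q$ to $q+\frac{\gamma q^3}{p-q}$, while $\frac{\gamma pq^2}{p-q}(q/p)^{n_1-1}$ rewrites as $\frac{p^3\gamma}{p-q}(q/p)^{n_1+1}$, producing the stated formula. I expect the delicate point to be the bookkeeping in the $SS$ case: one must ensure that the first honest block enjoys probability-$1$ status only when it genuinely satisfies condition (\ref{condition_1}) (precisely what $A$ encodes), and that the degenerate $SH$-cycles are counted separately, since $H(\omega)$ and the conditions of Proposition \ref{no} are defined only for cycles beginning with $SS$.
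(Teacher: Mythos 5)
Your proposal is correct and follows essentially the same route as the paper: the same three-way decomposition ($\omega = H$, $\omega \in \{\text{SHS}, \text{SHH}\}$ contributing $pq$, and the $SS$-cycles handled via Lemma \ref{poun} and Proposition \ref{eho}), with your indicator $A$ being exactly the paper's correction term $(1-\gamma)(p + pq + \cdots + pq^{n_1-2})\,q^2 = (1-\gamma)(q^2 - q^{n_1+1})$ for the first honest block having reference probability $1$ rather than $\gamma$. Your explicit bookkeeping with $A$ is in fact a slightly cleaner rendering of that step (and note the paper's ``$X_i < n_1 + 2$'' in its proof is a typo for the condition $X_i < n_1 - 2$ of Proposition \ref{no}, which you state correctly).
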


\medskip

\begin{proof}
  If $\omega = H$, then $U (\omega) = 0$. If $\omega \in \{ \text{SHS},
  \text{SHH} \}$, then, $U = 1$. Otherwise, $\omega$ starts with two consecutive S. 
  Then, by Proposition \ref{eho} and Lemma \ref{poun}, we have,
  \begin{equation*}
    \mathbb{E} [U]  =  (0 \cdot p) + 1 \cdot (pq^2 + p^2 q) + (\mathbb{E}
    [H (\omega) | \omega = \text{SS} \ldots] \gamma + (1 - \gamma)  (p + pq +
    \ldots + pq^{n_1 - 2})) \cdot q^2 
  \end{equation*}
  The last term comes from the following fact: 
  When the first honest block present in $\omega$ corresponds to an
  index $i$ satisfying $X_i < n_1 + 2$, then its contribution to $\mathbb{E}
  [U]$ is underestimated by $\mathbb{E} [H (\omega) | \omega = \text{SS}
  \ldots] \gamma$ because it has probability $1$ to be an uncle. This only occurs when
  $\omega$ starts with SS...SH with the first $k$ blocks validated by the
  selfish miner with $k \leqslant n_1$, from where we get the last term. In conclusion we have:
  \begin{equation*}
    \mathbb{E} [U]  =  pq + \left( \frac{p}{p - q}  \left( 1 - \left(
    \frac{q}{p}^{} \right)^{n_1 - 1} \right) \gamma \right) \cdot q^2 \\
    +  (1 - \gamma)  (1-q^{n_1 - 1}) \cdot q^2
  \end{equation*}
  and we get the result by rearranging this last equation.
\end{proof}

\medskip

\begin{note}\normalfont
  In particular, we obtain $\underset{n_1 \rightarrow \infty}{\lim} \mathbb{E}
  [U] = q + \frac{q^3 \gamma}{p - q}$. This limit can also be derived by
  observing that if $n_1 = \infty$, then $\mathbb{E} [U|L = n] = 1 + \gamma
  (n - 2)$ and using Theorem \ref{rapl}.
\end{note}

\medskip

Now, we compute the expected number of uncles per
attack cycle which are referred by nephews (honest or not) belonging to the
next attack cycle.

\medskip

\begin{lemma}\label{pssk}
  The probability for an attack cycle to end with exactly $k$ consecutive appearances of ``H''
  with $k \geq 1$, conditional that it starts with SS, is $pq^{k - 1}$.
\end{lemma}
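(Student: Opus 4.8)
The plan is to encode the cycle as a lattice path and reduce the statement to a first-passage computation for the $\pm 1$ random walk. Conditioning on the cycle starting with SS, Proposition \ref{bcdyck} (equivalently Note \ref{omw}) writes $\omega = \mathrm{SS}\,w$, where $w$ is a path of the walk that goes up by $1$ on each $S$ (weight $p$) and down by $1$ on each $H$ (weight $q$), starts at $0$, stays $\geq 0$, and reaches $-1$ for the first time on its terminal step. In this picture the block string of $\omega$ ends with a maximal run of $H$'s exactly as $w$ ends with a maximal run of down-steps, so ``$\omega$ ends with exactly $k$ consecutive $H$'s'' becomes ``the final descent of $w$ has length $k$''; note that $w$ necessarily ends with at least one down-step, so $k \geq 1$ always.

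First I would compute the tail weights. A first-passage path $w$ has its last $k$ steps equal to $H$ if and only if it factors as $w = \beta\,H^k$, where $\beta$ is an arbitrary nonnegative path from $0$ to $k-1$ (the levels $k-1,k-2,\dots,0,-1$ visited during the terminal descent are automatically $\geq 0$ until the end). Summing the step-weights, the total weight of such paths is $N_{k-1}\,q^k$, where $N_j$ denotes the total weight $\sum p^{\#S}q^{\#H}$ of all nonnegative paths from $0$ to $j$, over all lengths. Thus the event that the terminal descent of $w$ has length $\geq k$ carries unnormalized weight $N_{k-1}\,q^k$.

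The crux is the identity $N_j = 1/p$ for every $j \geq 0$. I would obtain it from the last-step recursion $N_j = p\,N_{j-1} + q\,N_{j+1}$ for $j\geq 1$, together with the boundary relation $N_0 = 1 + q\,N_1$: the constant $1/p$ solves both, and it agrees with the Catalan evaluation $N_0 = \sum_{m\geq 0} C_m (pq)^m = \tfrac{1-\sqrt{1-4pq}}{2pq} = 1/p$, using $1-4pq=(p-q)^2$. Establishing this height-independence of $N_j$ is the main obstacle; everything else is bookkeeping.

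Granting $N_{k-1} = 1/p$, the weight of ``terminal descent $\geq k$'' equals $q^k/p$. In particular $k=1$ recovers the total weight $q/p$ of all first-passage paths, which is the full conditional mass, reflecting that the attack cycle terminates. Normalizing then yields the geometric tail $\mathbb{P}[\text{final run}\geq k\mid \mathrm{SS}] = (q^k/p)/(q/p) = q^{k-1}$, whence
$$\mathbb{P}[\text{final run}=k\mid \mathrm{SS}] = q^{k-1} - q^k = p\,q^{k-1},$$
which is the asserted formula.
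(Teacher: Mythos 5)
Your route is genuinely different from the paper's. The paper disposes of this lemma in two lines: $\omega$ ends with exactly $k$ letters H if and only if $\omega=\text{SS}wH$ with $w$ a Dyck word ending with exactly $k-1$ letters H, and Lemma \ref{pdyck} evaluates that probability as $pq^{k-1}$ via the reversal map $\sigma$, which preserves $\bar{\mathbb{P}}$ and converts ``ends with $n$ letters H'' into ``starts with $n$ letters S followed by H''. You instead compute the tail $\mathbb{P}[\text{final run}\geq k]$ directly by splitting off the terminal descent, $w=\beta H^k$ with $\beta$ a nonnegative path from $0$ to $k-1$, and evaluating the total weight $N_{k-1}$ of such paths by a transfer recursion anchored at the Catalan value $N_0$. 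That decomposition is correct (the bijection between first-passage paths whose last $k$ steps are down-steps and nonnegative paths from $0$ to $k-1$ is clean), and your determination of the $N_j$ is essentially sound, since $N_0=1/p$ together with $N_0=1+qN_1$ forward-determines every $N_j$ from the recursion --- though you should say this (solving a recursion is not uniqueness by itself) and should note that the $N_j$ are finite, which follows from $4pq<1$.

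There is, however, a concrete error in your setup: the step weights are backwards. In this paper $q$ is the attacker's hashrate, so a block S carries probability $q$ and a block H carries probability $p$ (this is the convention of Note \ref{omw}, Lemma \ref{maon}, and of $\bar{\mathbb{P}}[w]=p(pq)^{|w|}$). With your weights ($p$ per S, $q$ per H) the walk drifts upward, and your gloss that the total weight $q/p$ is ``the full conditional mass, reflecting that the attack cycle terminates'' is incoherent: under your weights termination fails with probability $1-q/p$, while under the true dynamics it is almost sure and the mass is $1$. Your final normalization nevertheless returns the correct law, but only because of an unstated feature of this sample space: every first-passage path has exactly one more H than S, so your weight $p^{\#S}q^{\#H}$ equals $\frac{q}{p}\cdot q^{\#S}p^{\#H}$, a \emph{constant} multiple of the true probability, and the factor $q/p$ cancels upon dividing by the total (equivalently, conditioning the upward-drifting walk to hit $-1$ is exactly the Doob transform that swaps $p$ and $q$). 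Without this remark the division by $q/p$ is unjustified. The clean repair is to run your own argument with the correct weights: then $N_j=\frac{1}{p}\left(\frac{q}{p}\right)^{j}$ --- the height-independence $N_j=1/p$ is special to your swapped weights and fails here, but the last-visit-to-zero factorization $N_j=qN_0N_{j-1}$ gives the geometric formula just as quickly --- the tail weight is $N_{k-1}p^{k}=q^{k-1}$, the total mass at $k=1$ is $1$ as it must be, and $\mathbb{P}[\text{final run}=k\mid \text{SS}]=q^{k-1}-q^{k}=pq^{k-1}$ with no normalization needed.
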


\medskip

\begin{proof}
  Let $k \geq 1$. An attack cycle $\omega$ ends with exactly
  $k$ consecutive appearances of ``H'' if and only if 
  $\omega = \text{SSwH}$ where $w$ is a Dyck word that
  ends with exactly $k - 1$ ``H''. The result then follows from Appendix, Lemma
  \ref{pdyck}.
\end{proof}

\medskip

\begin{proposition}
  \label{ev}We have:
  \[ \mathbb{E} [V] = \frac{q^2}{p}  (1 - q^{n_1 - 1}) \gamma + (1 - \gamma)
     pq^2  \, \frac{1 - (pq)^{n_1 - 1}}{1 - pq} \]
\end{proposition}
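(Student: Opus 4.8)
The plan is to compute $\mathbb{E}[V]$ by isolating the honest blocks that become orphaned only at the very end of a cycle, since these are exactly the uncles that no nephew inside $\omega$ can refer to, and which must therefore wait for the first official block of the following cycle. First I would dispose of the cycles that do not begin with two $S$: if $\omega=\text{H}$ there is no uncle at all, and if $\omega\in\{\text{SHS},\text{SHH}\}$ the single uncle is already referred \emph{inside} $\omega$ (its nephew being the last block of the cycle), so $V(\omega)=0$ in each of these cases. Hence only cycles $\omega=\text{SS}w$ contribute, and by Proposition \ref{bcdyck} such a cycle ends with a maximal run of consecutive $\text{H}$'s; these trailing honest blocks are precisely the ones counted by $V$.

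Next I would condition on the length $k\ge 1$ of this terminal run, whose conditional probability given the SS-start is $pq^{k-1}$ by Lemma \ref{pssk}. Using Note \ref{omw} and Proposition \ref{omw2}, the $k$ blocks of the run occupy the heights $X=-1,0,1,\ldots,k-2$, so by condition (\ref{condition_1}) of Proposition \ref{no} exactly $\min(k,n_1-1)$ of them are within reference range and can be signalled as uncles. By Lemma \ref{poun} each such block is an uncle with probability $\gamma$, the sole exception being the first honest block of the entire cycle, which is an uncle with probability $1$.

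Then I would write $1=\gamma+(1-\gamma)$ at that exceptional block, splitting $\mathbb{E}[V]$ into two pieces. The generic piece assigns weight $\gamma$ to every admissible trailing block and gives
\[
\gamma\, q^2 \sum_{k\ge 1} p q^{k-1}\,\min(k,n_1-1)=\frac{q^2}{p}\,(1-q^{n_1-1})\,\gamma,
\]
the closed form following from the standard evaluation of $\sum_k k\,q^{k-1}$ together with the tail where the minimum saturates at $n_1-1$. The correction piece contributes $(1-\gamma)$ only when the first honest block itself lies in the terminal run, that is, exactly for the pure cycles $\omega=\text{S}^{\,j+2}\text{H}^{\,j+1}$, and only while that block is still within reference range, which forces $j\le n_1-2$. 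Summing the probabilities $q^{\,j+2}p^{\,j+1}$ over $0\le j\le n_1-2$ yields
\[
(1-\gamma)\sum_{j=0}^{n_1-2} q^{\,j+2}p^{\,j+1}=(1-\gamma)\,pq^2\,\frac{1-(pq)^{n_1-1}}{1-pq},
\]
and adding the two pieces gives the stated formula.

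The main obstacle is the bookkeeping around the first honest block. One must verify that its probability-$1$ contribution is relevant exactly for the pure cycles $\text{S}^{\,j+2}\text{H}^{\,j+1}$ (where no later $S$ exists to absorb it within $\omega$), and that the reference-distance constraint $X<n_1-2$ translates precisely into the summation range $0\le j\le n_1-2$. Getting this boundary right is what makes the two geometric sums telescope to the advertised closed forms; everything else is routine manipulation of geometric series.
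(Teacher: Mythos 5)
Your proposal is correct and follows essentially the same route as the paper: you dispose of the cycles $H$, $\text{SHS}$, $\text{SHH}$, condition on the length $k$ of the terminal run of $H$'s via Lemma \ref{pssk}, count $\min(k,n_1-1)$ admissible blocks each an uncle with probability $\gamma$, and add the $(1-\gamma)$ correction for the first honest block over the pure cycles $\text{S}^{j+2}\text{H}^{j+1}$ with $j\leq n_1-2$, which is exactly the paper's decomposition $\mathbb{E}[V]=q^2\sum_{k\geq 1}\inf(k,n_1-1)pq^{k-1}\gamma+(1-\gamma)q\sum_{k=1}^{n_1-1}(pq)^k$. Your write-up is in fact slightly more explicit than the paper's about the boundary bookkeeping (the heights of the trailing blocks and the range $j\leq n_1-2$), but there is no substantive difference in method.
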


\medskip

\begin{proof}
  If an attack cycle $\omega$ does not start with two S, then $V (\omega) =
  0$. If $\omega$ starts with two ``S'' and ends with
  exactly $k$ ``H'' in a row ($k \geq 1$), then only the last $n_1 - 1$
  blocks can be uncles signaled by future blocks. This happens with
  probability $\gamma$ for each block H in this sequence, except for
  the first block validated by the honest miners if it belongs to this
  sequence. In this last case, $\omega = \text{SS} \ldots \text{SH}
  \ldots \text{HH}$ with at most $n_1$ letters S and $n_1 - 1$ letters ``H''.
  So, by Lemma \ref{pssk}, we have
  \begin{equation*}
    \mathbb{E} [V]  =  q^2  \sum_{k \geq 1} \inf (k, n_1 - 1)
    pq^{k - 1} \gamma +  (1 - \gamma) q \sum_{k = 1}^{n_1 - 1} (pq)^k
  \end{equation*}
\end{proof}

\medskip

\subsection{Expected revenue of the selfish miner from inclusion rewards.}

We compute now $\mathbb{E} [U_h]$.

\medskip

\begin{proposition}\label{euh}
  We have:
  \begin{equation*}
    \mathbb{E} [U_h] =  p^2 q + (p + (1 - \gamma) p^2 q) 
    \left( \frac{q^2}{p}  (1 - q^{n_1 - 1}) \gamma + (1 - \gamma)
    pq^2  \, \frac{1 - (pq)^{n_1 - 1}}{1 - pq} \right)
  \end{equation*}
\end{proposition}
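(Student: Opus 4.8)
The plan is to split the uncles of an attack cycle $\omega$ that are referred by an honest nephew according to whether that nephew lies in $\omega$ itself or in a strictly later cycle, writing $U_h = U_h^{\mathrm{in}} + U_h^{\mathrm{out}}$. For the in-cycle term I would first observe that a cycle can contain honest nephews only when it is won by the honest miners: every cycle starting with $\mathrm{SS}$ is won by the attacker, so all its official blocks are the attacker's and its uncles are referred by attacker nephews, contributing to $U_s$ rather than $U_h$. The honest-won cycles are $\omega = H$ (no competition, hence no uncle) and $\omega = \mathrm{SHH}$. In the latter, exactly one of the two height-one blocks is orphaned --- the honest block $B$ if the honest miners extend the attacker's block $A$ (probability $\gamma$), and the attacker's block $A$ if they extend $B$ (probability $1-\gamma$) --- and in both sub-cases this orphan is referred by the height-two honest block. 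Since $\mathbb{P}[\omega = \mathrm{SHH}] = p^2 q$, this gives $\mathbb{E}[U_h^{\mathrm{in}}] = p^2 q$, the first term of the formula.

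For the out-of-cycle term I would use the observation illustrated by the example following Lemma \ref{poun}: the uncles of $\omega$ that are not referred inside $\omega$ are exactly those counted by $V(\omega)$, and they are all absorbed by a single nephew, namely the first official block of the following cycle. Hence $U_h^{\mathrm{out}}$ equals $V(\omega)$ when that first official block is honest and $0$ otherwise. Because the type of the next cycle is generated by fresh mining randomness and is therefore independent of the way $\omega$ ends (and hence of $V(\omega)$), taking expectations factorizes as $\mathbb{E}[U_h^{\mathrm{out}}] = \mathbb{E}[V]\cdot \mathbb{P}[\text{the first official block of the next cycle is honest}]$.

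The remaining step is to evaluate this probability by enumerating the next cycle. The first official block is honest precisely when the attacker's opening block is either never created or is orphaned: this occurs when the next cycle is $H$ (probability $p$) and when it is $\mathrm{SHH}$ with the honest miners extending their own block, so that the attacker's opening block becomes an uncle (probability $q\cdot p\cdot(1-\gamma)p = (1-\gamma)p^2 q$). In the remaining $S$-starting cases (namely $\mathrm{SHS}$, $\mathrm{SS}\ldots$, and $\mathrm{SHH}$ with the honest miners extending the attacker's block) the attacker's block is the first official one, so the referral falls into $U_s$. This yields $\mathbb{P}[\cdots] = p + (1-\gamma)p^2 q$, whence $\mathbb{E}[U_h] = p^2 q + (p + (1-\gamma)p^2 q)\,\mathbb{E}[V]$, and substituting $\mathbb{E}[V]$ from Proposition \ref{ev} gives the stated closed form.

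I expect the main obstacle to be the bookkeeping around the $\mathrm{SHH}$ cycle and its $\gamma$-split: one must correctly identify which competing height-one block is orphaned and which block acts as its nephew, both for the in-cycle term and for deciding whether the opening block of the next cycle survives on the official chain and thus whether the leftover uncles are charged to $U_h$ or $U_s$. The factorization of $\mathbb{E}[U_h^{\mathrm{out}}]$ also relies on confirming, via the distance bound of Proposition \ref{hbd}, that the single first official block of the next cycle indeed absorbs all of the uncles counted by $V$.
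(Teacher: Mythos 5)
Your proof is correct and follows essentially the same route as the paper's: the same decomposition $U_h = U_h^{(1)} + U_h^{(2)}$ into in-cycle and next-cycle referrals, the observation that only $\omega = \text{SHH}$ contributes in-cycle (giving $p^2q$), and the factorization $\mathbb{E}[U_h^{(2)}] = (p + (1-\gamma)p^2 q)\,\mathbb{E}[V]$ combined with Proposition \ref{ev}. Your write-up is in fact slightly more careful than the paper's, since you make explicit the $\gamma$-split inside SHH and the independence of the next cycle's type from $V(\omega)$, both of which the paper uses tacitly.
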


\medskip

\begin{proof}
  We have $U_h (\omega) = U_h^{(1)} (\omega) + U_h^{(2)} (\omega)$ where
  $U_h^{(1)} (\omega)$ (resp. $U_h^{(2)} (\omega)$) counts the number of
  uncles referred by honest nephews only present in $\omega$ (resp. in the next
  attack cycle after $\omega$). It is clear that $U_h^{(1)} (\text{SHH}) = 1$
  and $U_h^{(1)} (\omega) = 0$ if $\omega \not= \text{SHH}$. So,
  \begin{equation}
    \mathbb{E} [U_h^{(1)}] = p^2 q \label{euh1}
  \end{equation}
  Moreover, given $\omega$, the probability that H is the next official block
  after $\omega$ is $p + (1 - \gamma) p^2 q$. This happens if and only if the
  next attack cycle is either H or SHH. If this event occurs, then the first honest block
  in the next attack cycle will signal the previous uncles created in $\omega$.
  Therefore, we have
  \begin{equation}
    \mathbb{E} [U_h^{(2)}] = (p + (1 - \gamma) p^2 q) \cdot \mathbb{E} [V]
    \label{euh2e}
  \end{equation}
  Hence, we get the result by (\ref{euh1}), (\ref{euh2e}) and Proposition
  \ref{ev}.
\end{proof}

\medskip

\begin{corollary}
  \label{coeua}We have
  \begin{align*}
    \mathbb{E} [U_s]  =& q + \frac{q^3 \gamma}{p - q} - \frac{pq^2}{p - q} 
    \left( \frac{q}{p}^{} \right)^{n_1 - 1} \gamma - q^{n_1 + 1}  (1 -
    \gamma) \\ &-  \left[ p^2 q + (p + (1 - \gamma) p^2 q)  \left( \frac{q^2}{p}  (1 -
    q^{n_1 - 1}) \gamma + (1 - \gamma) pq^2  \frac{1 - (pq)^{n_1 - 1}}{1 - pq}
    \right) \right]
  \end{align*}
\end{corollary}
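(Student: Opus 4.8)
The plan is to derive the expected number of uncles that are referred by nephew blocks \emph{mined by the attacker} from the quantities already computed. The key observation is the bookkeeping identity relating the three counting variables from Definition \ref{gedef}. By definition, every uncle counted by $U(\omega)$ is eventually referred either by an honest nephew or by a selfish nephew, so we have the exact decomposition $U = U_s + U_h$. Rearranging gives $U_s = U - U_h$, and taking expectations yields $\mathbb{E}[U_s] = \mathbb{E}[U] - \mathbb{E}[U_h]$. Thus the corollary should follow by simply substituting the closed-form expression for $\mathbb{E}[U]$ from Proposition \ref{proeu} and the expression for $\mathbb{E}[U_h]$ from Proposition \ref{euh}.

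First I would state the linearity-of-expectation identity $\mathbb{E}[U_s] = \mathbb{E}[U] - \mathbb{E}[U_h]$ and justify that each referred uncle is attributed to exactly one nephew, which is either honest or selfish, so the two contributions partition $U$ with no overlap and no omission. Next I would copy the formula for $\mathbb{E}[U]$ from Proposition \ref{proeu},
\[
\mathbb{E}[U] = q + \frac{q^3\gamma}{p-q} - \frac{p^3}{p-q}\left(\frac{q}{p}\right)^{n_1+1}\gamma - q^{n_1+1}(1-\gamma),
\]
and the formula for $\mathbb{E}[U_h]$ from Proposition \ref{euh}. Subtracting the second from the first gives the claimed expression for $\mathbb{E}[U_s]$.

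The only nonroutine point worth checking is the coefficient of the $\gamma$-term. In $\mathbb{E}[U]$ the relevant summand carries the factor $\frac{p^3}{p-q}\left(\frac{q}{p}\right)^{n_1+1}$, whereas in the stated corollary the corresponding term reads $\frac{pq^2}{p-q}\left(\frac{q}{p}\right)^{n_1-1}$. These must be reconciled: since $\frac{p^3}{p-q}\left(\frac{q}{p}\right)^{n_1+1} = \frac{p^3}{p-q}\cdot\frac{q^2}{p^2}\left(\frac{q}{p}\right)^{n_1-1} = \frac{pq^2}{p-q}\left(\frac{q}{p}\right)^{n_1-1}$, the two are in fact equal, so no contribution from $\mathbb{E}[U_h]$ affects this particular term. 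I expect the main obstacle, such as it is, to be purely algebraic: one must verify that the $U_h$ expression subtracts cleanly, i.e. that the term $-q^{n_1+1}(1-\gamma)$ and the leading $q$ survive unchanged while the bracketed $\mathbb{E}[U_h]$ expression is carried over verbatim as the subtracted quantity. Since the corollary's statement literally exhibits $\mathbb{E}[U]$ minus the bracketed $\mathbb{E}[U_h]$, the proof reduces to invoking $U_s = U - U_h$ and substituting, with the above identity confirming internal consistency.
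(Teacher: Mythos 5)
Your proposal is correct and is essentially identical to the paper's own proof, which likewise writes $U(\omega) = U_s(\omega) + U_h(\omega)$ and obtains the corollary by subtracting Proposition \ref{euh} from Proposition \ref{proeu}. Your extra verification of the identity $\frac{p^3}{p-q}\left(\frac{q}{p}\right)^{n_1+1} = \frac{pq^2}{p-q}\left(\frac{q}{p}\right)^{n_1-1}$ is accurate and correctly reconciles the exponent forms between Proposition \ref{proeu} and the corollary's statement.
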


\medskip

\begin{proof}
  With the same notations as above, we have: $U (\omega) = U_s
  (\omega) + U_h (\omega)$ and we use Proposition \ref{proeu} and Proposition
  \ref{euh}.
\end{proof}

\medskip

\subsection{Apparent hashrate of Strategy 1}

Using Theorem \ref{qetilde}, Proposition \ref{proeu} and Corollary
\ref{coeua} we can plot the region of $(q, \gamma) \in [0, 0.5] \times [0, 1]$
of dominance of the selfish mining Strategy 1 (SM1) over the honest strategy. This corresponds to
$\widetilde{q_{}}_E > q$. We obtain Figure 1.

\begin{figure}[h]
  \resizebox{180pt}{180pt}{\includegraphics{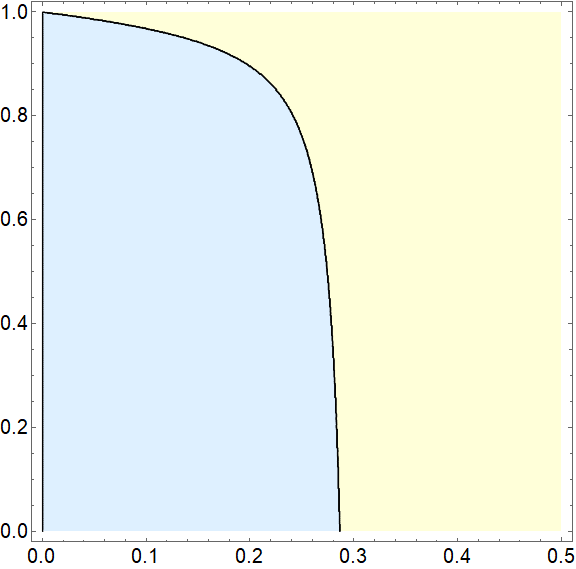}}
  \put(-135, 85){HM}
  \put(-55, 85){SM1}
  \caption{Comparing HM and SM1 strategies.}
\end{figure}

\medskip

We compute now the expected revenue of the honest miners by attack cycle. 
We compute first the expected distance between uncles and nephews by attack cycle.

\subsection{Expected distance between uncles and nephews by attack cycle}

\medskip

  If $\mathfrak{b}$ is an uncle, we denote by $\delta(\mathfrak{b})$ the distance 
  between $\mathfrak{b}$ and its nephew.
  We start by a remark.

\begin{remark}\label{remm}
  Let $\mathfrak{b}$ be an orphan block validated by the honest miners as in Definition \ref{defh}. If $\mathfrak{b}$ is an uncle then 
  $\delta (\mathfrak{b}) = h(\mathfrak{b}) + 1$.
\end{remark}

\begin{definition}
  If $\omega = \text{SS} w$ is an attack cycle starting with two blocks S,
  we set
  $$ D (\omega) = \sum _{\mathfrak{b}} \left( h(\mathfrak{b}) + 1\right) 
  $$
  where the sum is taken over all honest blocks $\mathfrak{b}$ in $\omega$ fulfilling
  Conditions  (\ref{condition_1}) and (\ref{condition_2}) from Proposition \ref{no}.
\end{definition}

\begin{proposition}\label{edo}
  We have:
  \[ \mathbb{E} [D(\omega) | \omega = \text{SS} \ldots] 
   = \frac{p}{(p-q)^{2}} \left(2 p - q - 
   \bigl( p + n_1 (p-q)\bigr)\cdot \left( \frac{q}{p}\right)^{n_1 - 1}
   \right)
   \]
\end{proposition}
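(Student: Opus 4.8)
The plan is to turn $D(\omega)$ into a weighted count of the down-steps of the path $X$ and then average level by level. First, by Remark \ref{remm} every uncle $\mathfrak{b}$ contributes $h(\mathfrak{b})+1$, so $D(\omega)=\sum_{\mathfrak b}(h(\mathfrak b)+1)$ taken over the honest blocks satisfying Conditions (\ref{condition_1}) and (\ref{condition_2}) of Proposition \ref{no}. By Proposition \ref{omw2}, for every down-step $\mathfrak b_i$ (i.e. $X_i<X_{i-1}$) we have $h(\mathfrak b_i)=X_i+2$, hence $h(\mathfrak b_i)+1=X_i+3$. Writing $D_j(\omega)$ for the number of down-steps of $X$ that land at level $j$, Condition (\ref{condition_2}) selects exactly the down-steps while Condition (\ref{condition_1}) reads $X_i\le n_1-3$, so that
\[ D(\omega)=\sum_{j=-1}^{\,n_1-3}(j+3)\,D_j(\omega). \]
Taking expectations reduces everything to computing $\mathbb{E}[D_j]$ for each level $j$.

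For the second step I would use that, conditionally on $\omega=\text{SS}\ldots$, the path $X$ of Note \ref{omw} is a simple random walk started at $0$ with up-steps (S) of probability $q$ and down-steps (H) of probability $p$, run until its first hit of $-1$; since $p>q$ this time is a.s. finite and no parameter $\gamma$ enters, consistent with the absence of $\gamma$ in the statement. A topological crossing identity gives, for $j\ge0$, that the number of down-steps landing at $j$ equals the number of up-steps leaving $j$, whence $\mathbb{E}[D_j]=q\,V_j$, where $V_j$ is the expected number of visits to level $j$. The occupation measure solves the balance equations $V_k=q\,V_{k-1}+p\,V_{k+1}$ for $k\ge1$ and $V_0=1+p\,V_1$, whose decaying solution is $V_k=\frac1p\left(\frac qp\right)^{k}$. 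Therefore $\mathbb{E}[D_j]=\left(\frac qp\right)^{j+1}$ for $j\ge0$, while the terminal absorbing step gives $\mathbb{E}[D_{-1}]=1$. As a sanity check, $\sum_{j=-1}^{n_1-3}\mathbb{E}[D_j]$ must reproduce $\mathbb{E}[H(\omega)\mid\omega=\text{SS}\ldots]$ of Proposition \ref{eho}; I would verify this identity first, as it recalibrates the whole computation. Equivalently, these $\mathbb{E}[D_j]$ are already implicit in the Appendix lemma behind Proposition \ref{eho}.

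Finally, with $r=q/p$, I would substitute to get
\[ \mathbb{E}[D(\omega)\mid\omega=\text{SS}\ldots]=2+\sum_{m=1}^{\,n_1-2}(m+2)\,r^{m}, \]
split this into the geometric sum $\sum r^m$ and the arithmetico-geometric sum $\sum m\,r^m$, evaluate both in closed form, and simplify using $1-r=(p-q)/p$ and $pr=q$ to reach the stated expression $\frac{p}{(p-q)^2}\bigl(2p-q-(p+n_1(p-q))\,r^{n_1-1}\bigr)$.

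The main obstacle is the second step: pinning down $\mathbb{E}[D_j]$ exactly. The naive argument ``reach level $j$ with probability $(q/p)^j$ and then collect a geometric number of returns'' is \emph{wrong}, because the absorbing barrier at $-1$ breaks the translation invariance of the return probabilities; it is the occupation-measure recursion (or the Appendix computation underlying Proposition \ref{eho}) that yields the correct $V_k=\frac1p(q/p)^k$. Once $\mathbb{E}[D_j]$ is correct, the remaining arithmetico-geometric summation and its simplification are routine but must be carried through the $r=q/p$ substitutions with care.
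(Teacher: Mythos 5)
Your proof is correct, and it reaches the stated formula by a genuinely different route than the paper. Both arguments start from the same reduction $D(\omega)=\sum_{i}(X_i+3)\,\mathbf{1}_{(X_i<n_1-2)\wedge(X_i<X_{i-1})}$, but the paper then splits this into $\mathbb{E}\bigl[\sum_i X_i\mathbf{1}\bigr]+3\,\mathbb{E}\bigl[\sum_i\mathbf{1}\bigr]$ and invokes Lemma \ref{maon} of the Appendix, which computes both expectations by a first-step renewal decomposition: the Markov property at the first return to $0$ yields the recurrences $u_n=p+q(u_{n-1}+u_n)$ and $v_n=-p+q(u_{n-1}+v_{n-1}+v_n)$ in the cutoff level $n$, solved by induction (with the substitution $c_n=(p/q)^n v_n$). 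You instead decompose by level: the crossing identity (down-steps landing at $j$ equal up-steps leaving $j$, valid for $j\geq 0$ since the walk starts at $0$ and is absorbed at $-1$), the occupation equations $V_j=qV_{j-1}+pV_{j+1}$ for $j\geq 1$ and $V_0=1+pV_1$, and the decaying solution $V_k=\frac{1}{p}(q/p)^k$ give $\mathbb{E}[D_j]=(q/p)^{j+1}$ for $j\geq 0$ and $\mathbb{E}[D_{-1}]=1$; the arithmetico-geometric sum $2+\sum_{m=1}^{n_1-2}(m+2)(q/p)^m$ then does simplify to the stated expression (I verified both the constant part $\frac{p(2p-q)}{(p-q)^2}$ and the coefficient $-\bigl(p+n_1(p-q)\bigr)$ of $(q/p)^{n_1-1}$, using $n_1=N+2$ with $N=n_1-2$). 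Your route is finer-grained: the level expectations $\mathbb{E}[D_j]$ recover both formulas (\ref{equn}) and (\ref{eqvn}) of Lemma \ref{maon} for all $n$ at once, without induction, and your sanity check against Proposition \ref{eho} is exactly the recovery of $u_{n_1-2}$. Two small points should be made explicit for full rigor: (i) selecting the decaying solution of the balance recurrence (whose general solution is $A+B(q/p)^k$) requires knowing $\sum_k V_k=\mathbb{E}[\nu]+1<\infty$, which follows from the negative drift $p>q$ and kills the constant mode $A$; and (ii) the identity $\mathbb{E}[\text{up-steps from } j]=qV_j$ deserves a word (Wald's identity, or summing over visits, each followed by an independent up-step with probability $q$). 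You are also right to flag that the naive ``geometric number of returns'' shortcut is broken by the absorbing barrier; your balance-equation computation correctly avoids it.
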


\begin{proof}
 Let $\omega = \text{SS} w$ be an attack cycle starting with two S 
 with $w=w_1\ldots w_{\nu}$
 and let $X$ be the associated path according to Note \ref{omw}. In particular, we have 
 $X_{\nu}=-1$ and $X_i\geq 0$ for $i<\nu$. By Proposition \ref{omw2} and
 Lemma \ref{maon} in the Appendix, we have:
 \begin{align*}
  \mathbb{E} [D(\omega) | \omega = \text{SS} \ldots] &=
  \mathbb{E} \left[\sum_{i=1}^{\nu}  \left(X_i +3\right)\cdot {\bf 1}_{(X_i < n_1 -2) 
  \wedge (X_i < X_{i - 1})} \right]\\
  &= \mathbb{E} \left[\sum_{i=1}^{\nu}  X_i \cdot {\bf 1}_{(X_i < n_1 -2) 
  \wedge (X_i < X_{i - 1})} \right] + 3 
  \mathbb{E} \left[\sum_{i=1}^{\nu}  \cdot {\bf 1}_{(X_i < n_1 -2) 
  \wedge (X_i < X_{i - 1})} \right]\\
  &= 
  \frac{p}{(p-q)^2} 
  \left( 2q-p - 
  \bigl(q+ (n_1 -2) (p-q)\bigr)\cdot \left( \frac{q}{p}\right)^{n_1 - 1} \right)
  + \frac{3 p}{p - q}  \left( 1 -
  \left( \frac{q}{p}^{} \right)^{n_1 - 1} \right)
  \\
  &= \frac{p}{(p-q)^2} \left(
  2 q -p + 3 (p-q) - \left( q-2 (p-q)+n_1\bigl(p-q\bigr) 
  +3  \bigl(p-q\bigr) \right)
  \cdot \left( \frac{q}{p}\right)^{n_1 - 1}
  \right)
 \end{align*}
  Hence, we get the result.
 \end{proof}

\begin{definition}\label{defDelta}
  Let $\omega$ be an attack cycle. We set
  $$
  \Delta (\omega) =  \sum _{\mathfrak{b}} \delta (\mathfrak{b}) 
  $$
  The last sum being taken over all refered uncles in $\omega$.
\end{definition}

\begin{proposition}\label{pound}
  We have 
  \begin{align*}
  \mathbb{E} [\Delta] = &  p q+\frac{p q^{2} \gamma}{(p-q)^{2}}
  \left(2 p - q - 
   \bigl( p + n_1 (p-q)\bigr)\cdot \bigl( \frac{q}{p}\bigr)^{n_1 - 1}
   \right)\\
   &+\frac{(1-\gamma) q}{p}\left(q (1+p)-(1 + n_1 p) q^{n_1}
   \right)
  \end{align*}
\end{proposition}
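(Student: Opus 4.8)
The plan is to mirror the case analysis used for $\mathbb{E}[U]$ in Proposition~\ref{proeu}, but weighting each referred uncle $\mathfrak{b}$ by its distance $\delta(\mathfrak{b})$ rather than by $1$. The bridge to the already computed quantity $D$ is Remark~\ref{remm}: for every uncle one has $\delta(\mathfrak{b})=h(\mathfrak{b})+1$, which is exactly the summand defining $D(\omega)$. I first dispose of the short cycles. If $\omega=\text{H}$ there is no uncle, so $\Delta(\omega)=0$. If $\omega\in\{\text{SHS},\text{SHH}\}$ there is a single referred uncle, the orphaned competing block, at distance $1$; hence $\Delta(\omega)=1$, and since $\mathbb{P}[\omega\in\{\text{SHS},\text{SHH}\}]=pq^2+p^2q=pq$, these cycles contribute the leading term $pq$.

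For the cycles $\omega=\text{SS}\ldots$, the referred uncles are exactly the honest blocks $\mathfrak{b}_i$ satisfying Conditions~(\ref{condition_1}) and~(\ref{condition_2}) of Proposition~\ref{no}. By Lemma~\ref{poun} each of them is an uncle with probability $\gamma$, the sole exception being the first honest block, which is an uncle with probability $1$. Conditioning on the path and replacing each summand $h(\mathfrak{b})+1$ by $\delta(\mathfrak{b})$, the blocks weighted with probability $\gamma$ contribute $\gamma\,D(\omega)$; the first honest block is thereby undercounted and must be credited the excess weight $(1-\gamma)\,\delta$. Writing $\delta_1$ for its distance, this gives
\[
  \mathbb{E}[\Delta\mid \omega=\text{SS}\ldots]
  = \gamma\,\mathbb{E}[D(\omega)\mid \omega=\text{SS}\ldots]
  + (1-\gamma)\,\mathbb{E}[\delta_1\mid \omega=\text{SS}\ldots],
\]
the last expectation being taken over the event that the first honest block satisfies~(\ref{condition_1}). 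Proposition~\ref{edo} evaluates the first term, and multiplying through by $\mathbb{P}[\omega=\text{SS}\ldots]=q^2$ yields the middle term $\frac{pq^2\gamma}{(p-q)^2}\bigl(2p-q-(p+n_1(p-q))(q/p)^{n_1-1}\bigr)$.

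It remains to compute the correction. If $\omega$ begins with exactly $j$ consecutive S's followed by an H, then in the path of Note~\ref{omw} the first honest block occurs at $X_i=j-3$, so Proposition~\ref{omw2} and Remark~\ref{remm} give $\delta_1=(j-1)+1=j$; Condition~(\ref{condition_1}) holds precisely when $j\le n_1$, while~(\ref{condition_2}) is automatic. Since $\mathbb{P}[\omega\text{ begins with }\text{S}^{j}\text{H}\mid \omega=\text{SS}\ldots]=q^{j-2}p$, the correction equals $(1-\gamma)\sum_{j=2}^{n_1} j\,q^{j-2}p$, so after the factor $q^2$ it contributes $(1-\gamma)p\sum_{j=2}^{n_1} j\,q^{j}$ to $\mathbb{E}[\Delta]$. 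Evaluating $\sum_{j=2}^{n_1} j\,q^{j}$ by differentiating the geometric series and using $1-p^2=q(1+p)$ rewrites this as $\frac{(1-\gamma)q}{p}\bigl(q(1+p)-(1+n_1p)q^{n_1}\bigr)$, the third term; adding the three contributions gives the claimed formula.

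The step I expect to be the crux is the first-honest-block bookkeeping: one must notice that this block already appears inside $D(\omega)$ with weight one, so only the \emph{excess} probability $(1-\gamma)$, not the full probability, is added back, and one must confirm through Proposition~\ref{omw2} that its distance is the full $j$ rather than something smaller (the nephew cannot be any of the attacker's already-mined blocks, forcing $\delta_1=h+1$). Once these are pinned down, the remaining work—the identity $1-p^2=q(1+p)$ and the closed form of $\sum_{j} j q^{j}$—is routine algebra.
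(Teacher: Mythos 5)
Your proof is correct and follows essentially the same route as the paper: the same split into $\omega=H$, $\omega\in\{\text{SHS},\text{SHH}\}$, and $\omega=\text{SS}\ldots$, the same use of Lemma~\ref{poun} to weight eligible blocks by $\gamma$ with a $(1-\gamma)$-excess correction for the first honest block, Proposition~\ref{edo} for the $\gamma$-term, and the identity (\ref{Del}) for the correction sum. Your explicit derivation of $\delta_1=j$ and the probability $q^{j-2}p$ for the $\text{S}^{j}\text{H}$ prefix merely spells out what the paper asserts directly, and all the bookkeeping checks out.
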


\begin{proof}
  We proceed as in the proof of Proposition \ref{proeu}.
  If $\omega = H$, then $\Delta (\omega) = 0$. If $\omega \in \{ \text{SHS},
  \text{SHH} \}$, then, $\Delta (\omega) = 1$. Otherwise, $\omega$ starts with two consecutive S. 
  Then, using Lemma \ref{poun}, we get
  \begin{equation*}
    \mathbb{E} [\Delta]  =  pq^2 + p^2 q + (\mathbb{E}
    [D (\omega) | \omega = \text{SS} \ldots] \gamma + (1 - \gamma)  (2p + 3pq +
    \ldots + n_1 pq^{n_1 - 2})) \cdot q^2 
  \end{equation*}
  The last term comes from the following fact: 
  when the first honest block present in $\omega$ corresponds to an
  index $i$ satisfying $X_i < n_1 + 2$, then its contribution to $\mathbb{E}
  [\Delta]$ is underestimated by $\mathbb{E} [D (\omega) | \omega = \text{SS}
  \ldots] \gamma$ because it has probability $1$ to be an uncle. This only occurs when
  $\omega$ starts with SS...SH with the first $k$ blocks validated by the
  selfish miner with $k \leqslant n_1$, from where we get the last term. We have:
    \begin{align*}
    2q + 3 q^{2} + \ldots + n_1 q^{n_1 - 1} &= -1+\left(
    \frac{q^{n_1 +1}-1}{q-1}\right)' = -1+\left(
    \frac{q^{n_1 +1}}{q-1}\right)' - \left(\frac{1}{q-1}\right)'\\
     &= -1 + \frac{(n_1 +1)q^{n_1}}{q-1} - \frac{q^{n_1 +1}}{(q-1)^{2}}
     + \frac{1}{(q-1)^{2}}\\
     &= -1+\frac{1}{p^{2}}+\frac{q^{n_1}}{(q-1)^{2}}
     \left( (n_1 +1) (q-1) -q\right)\\
     &= \frac{1-p^{2}}{p^{2}}-\frac{q^{n_1}}{p^{2}}
     \left( q+(n_1 +1)p\right)\\
     &= \frac{q (1+p)}{p^{2}}-\frac{q^{n_1}}{p^{2}}
     \left( 1+n_1 p\right)
  \end{align*}
  So,
  \begin{equation}\label{Del}
  \left(2p + 3 p q + \ldots + n_1 p q^{n_1 - 2}\right) q^{2}
  = \frac{q}{p} \left(q (1+p) - \bigl(1+n_1 p\bigr) q^{n_1}\right)
  \end{equation}
  Hence we get the result using Proposition \ref{edo}.
\end{proof}

\medskip

\subsection{Deflation}
  With the new difficulty adjustment formula, the duration time of an attack cycle in
  Ethereum is $(\mathbb{E} [L] + \mathbb{E} [U]) \tau_1$ where $\tau_1$ is the mean
  interblock time in Ethereum (which is currently $15$ seconds). The number of coins created 
  in an attack cycle is 
  $\left(\mathbb{E} [L] + \frac{7}{8}\mathbb{E} [U] - \frac{1}{8} \mathbb{E} [\Delta] + 
  \mathbb{E} [U] \pi\right) b$ where $b$ is the coinbase in Ethereum.
  Thus, on average, there is a monetary creation of 
  $$
  \displaystyle\frac{\mathbb{E} [L] +
  \bigl(\frac{7}{8}+\pi\bigr)\mathbb{E} [U]- 
  \frac{\mathbb{E} [\Delta]}{8}}{\mathbb{E} [L] +
   \mathbb{E} [U]} \, b
   $$ 
   for every inter-block time $\tau_1$, 
   whereas without selfish miner, it  
   is only $b$ on average. So, selfish mining leads to a deflation index
   \begin{equation}\label{deflation}
   \iota = \displaystyle\frac{
  \bigl(\frac{1}{8}-\pi\bigr)\mathbb{E} [U]+ 
  \frac{\mathbb{E} [\Delta]}{8}}{\mathbb{E} [L] +
   \mathbb{E} [U]}
   \end{equation}
   Currently we have $\pi=\frac{1}{32}$, thus $\iota>0$.

\medskip

\subsection{Apparent hashrate of the honest miners}
  Let $\tilde{p}$ be the apparent hashrate of the honest miners 
  in presence of a selfish miner. We have 
   \begin{equation}\label{pqtilde}
   \tilde{p} + \tilde{q} = 1 - \iota
   \end{equation}
   where $\tilde{q}$ is the apparent hashrate of the selfish miner. 
   We observe numerically that $\tilde{q} > q - \iota$ 
   for any values of $(q,\gamma)$. 
   So, even if the attack is not profitable for the selfish miner (case $\tilde{q} < q$)
   we have $\tilde{p} < p$ which means that the honest miners are impacted by 
   the presence of a selfish miner in the network.

\medskip

\section{Strategy 2A: Brutal Fork signaling all uncles.}

\medskip

We study now another Selfish Mining Strategy (Strategy 2 or SM2): Brutal fork. 
In this case, the attacker keeps
secret his blocks as long as possible and only releases its fork, all at once, at the end of the attack cycle. We call this strategy "brutal fork" because this leads, periodically, to deep reorganizations of the official blockchain.
Strategy 2A (or SM2A) corresponds to the case when also the attacker refers all possible uncles.

\medskip

\begin{proposition}\label{proeu2}
  We have $\mathbb{E} [U] = q - q^{n_1 + 1}$.
\end{proposition}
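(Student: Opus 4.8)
The plan is to exploit the defining feature of the brutal-fork strategy: since the attacker never reveals a block before the end of the cycle, the honest miners always extend a \emph{single} public chain, so that when the attacker finally overrides it, every orphaned honest block except the first one inherits an orphaned parent. Hence at most one uncle can ever be created per cycle, in contrast with Strategy 1. First I would record this structural dichotomy: if $\omega = H$ (probability $p$) the attacker mines no block and $U(\omega) = 0$; otherwise the first block is an $S$ (probability $q$) and exactly one orphan is an uncle candidate, namely the first honest block when the attacker wins the cycle, or the attacker's unique secret block (the case $\omega = \text{SHH}$) when the honest miners win.

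Next I would decide when this candidate is actually referred. Using Proposition \ref{omw2} together with Remark \ref{remm}, the first honest block of a cycle whose initial run consists of exactly $k$ letters $S$ has relative height $h = k-1$, so its nephew sits at distance $\delta = h+1 = k$; the reference is therefore admissible precisely when $k \leq n_1$. The crucial point, supplied by Proposition \ref{hbd}, is that because the attacker is ahead and silent, no already-mined block can point to the uncle, so the first admissible nephew appears only $h+1$ blocks later. I would check that the honest-win cycle $\omega = \text{SHH}$ obeys the same rule: there $k = 1$ and $\delta = 1 \leq n_1$, so its candidate is always referred.

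It then follows that $U(\omega) = 1$ exactly when $\omega$ begins with $S$ and its initial $S$-run has length at most $n_1$, and $U(\omega) = 0$ otherwise. I would conclude with a one-line computation:
\[
\mathbb{E}[U] = \mathbb{P}(\omega \text{ starts with } S) - \mathbb{P}(\text{the first } n_1+1 \text{ blocks are } S) = q - q^{n_1+1},
\]
the subtracted event (an initial $S$-run of length exceeding $n_1$, which forces an attacker win with no admissible uncle) having probability $q^{n_1+1}$.

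The hard part will be justifying that only the first honest block can be an uncle and that its admissibility depends solely on the length of the initial $S$-run. This rests on tracking, in the brutal-fork timeline, which blocks already exist at the moment the uncle is created (the content of Propositions \ref{hbd} and \ref{omw2}) and on verifying the consistency of the honest-win case $\omega = \text{SHH}$ with the criterion $k \leq n_1$. Once this dichotomy is secured the probabilistic bookkeeping is immediate; as a cross-check one has $\lim_{n_1 \to \infty} \mathbb{E}[U] = q$, matching the conditional identity $\mathbb{E}[U \mid L=n] = 1$ that holds here in place of the $1+\gamma(n-2)$ of Strategy 1.
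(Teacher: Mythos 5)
Your proof is correct and takes essentially the same route as the paper, whose own proof simply asserts the dichotomy you justify in detail---namely $U(\omega)=0$ if and only if $\omega=H$ or $\omega$ begins with $n_1+1$ blocks of type S, whence $\mathbb{E}[U]=\mathbb{P}[U>0]=1-(p+q^{n_1+1})=q-q^{n_1+1}$. One harmless slip: in the cycle $\omega=\text{SHH}$ the uncle is the attacker's secret block only with probability $1-\gamma$ (with probability $\gamma$ the second honest block is built on top of the attacker's block and the orphan is then the first honest block), but in both sub-cases $U(\omega)=1$ with distance $1$, so your count and the final computation are unaffected.
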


\medskip

\begin{proof}
  We have $U = 0$ if and only if the attack cycle is H or if it starts with $n_1 + 1$ blocks of type S.
  Otherwise, we have $U = 1$. So,
  \begin{equation*}
    \mathbb{E} [U]  =  \mathbb{P} [U > 0] =  1 - (p + q^{n_1 + 1}) =  q - q^{n_1 + 1}
  \end{equation*}
\end{proof}

\medskip

We compute now $\mathbb{E} [V]$

\medskip

\begin{proposition}\label{peus}
  We have $\mathbb{E}[V] = p q^{2}\cdot \frac{1-(p q)^{n_1 - 1}}{1-p q}$.
\end{proposition}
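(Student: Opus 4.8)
The plan is to exploit the fact that in the brutal-fork strategy the attacker never enters a competition during a cycle, so each cycle produces at most one referable uncle, namely its first honest block. First I would record that $V \leq U$, and since the proof of Proposition~\ref{proeu2} shows $U \leq 1$, we get $V(\omega) \in \{0,1\}$, hence $\mathbb{E}[V] = \mathbb{P}[V = 1]$. The whole computation then reduces to deciding on which cycles the unique uncle is signalled \emph{only} in a strictly later cycle.

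Next I would set up the height bookkeeping. Write $h_0$ for the common height at which $\omega$ starts. Since the honest miners build on the public chain, the first honest block $\mathfrak{b}$ of a cycle $\omega = \text{SS}\ldots$ sits at height $h_0+1$, while the attacker has mined some number $a$ of secret blocks by the time $\mathfrak{b}$ appears. By Definition~\ref{defh} and Proposition~\ref{omw2} its relative height is $h(\mathfrak{b}) = a-1$, and by Remark~\ref{remm} it is referred by a nephew at distance $\delta(\mathfrak{b}) = a$, i.e.\ at height $h_0 + 1 + a$. On the other hand the top of the attacker's revealed chain in $\omega$ is at height $h_0 + s$, where $s$ is the total number of S's in $\omega$. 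Comparing the two, the signalling nephew already lies inside $\omega$ (it is one of the attacker's own blocks) as soon as $s \geq a+1$, i.e.\ as soon as some S follows $\mathfrak{b}$, and it lands in the next cycle exactly when $s = a$, i.e.\ when no S follows $\mathfrak{b}$. The key step, which I expect to be the main obstacle, is precisely this dichotomy: $V(\omega) = 1$ iff every S of $\omega$ precedes its first H.

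Then I would identify these cycles explicitly. A cycle in which all S's precede all H's is a pure run $\omega = \text{S}^a\text{H}^{a-1}$, the exponent $a-1$ being forced by the return condition of Note~\ref{omw} together with Proposition~\ref{bcdyck}; it starts with SS, so $a \geq 2$, and its uncle is referable only when $\delta(\mathfrak{b}) = a \leq n_1$, i.e.\ $a \leq n_1$ (for $a \geq n_1+1$ we are in the case $U = 0$ of Proposition~\ref{proeu2}). Since blocks are found independently with $\mathbb{P}[\text{S}] = q$ and $\mathbb{P}[\text{H}] = p$, we have $\mathbb{P}[\omega = \text{S}^a\text{H}^{a-1}] = q^a p^{a-1}$.

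Finally I would sum over $2 \leq a \leq n_1$:
\[
  \mathbb{E}[V] = \sum_{a=2}^{n_1} q^a p^{a-1} = q\sum_{k=1}^{n_1-1}(pq)^k = pq^2\,\frac{1-(pq)^{n_1-1}}{1-pq},
\]
which is the claimed formula. As a consistency check, this is exactly the $\gamma = 0$ specialization of the second summand of Proposition~\ref{ev}, reflecting that the brutal fork suppresses all in-cycle competition.
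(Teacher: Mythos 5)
Your overall route is the same as the paper's: identify the cycles with $V=1$ as the pure runs $\text{S}^a\text{H}^{a-1}$ with $2 \leq a \leq n_1$, note that each occurs with probability $q^a p^{a-1}$, and sum the geometric series. For cycles starting with SS your height bookkeeping is correct and in fact more detailed than the paper's one-line justification, and the final summation $\sum_{a=2}^{n_1} q^a p^{a-1} = pq^2\,\frac{1-(pq)^{n_1-1}}{1-pq}$ is right.

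There is, however, a genuine (though easily repaired) gap at the boundary cycles $H$, SHS and SHH. Your opening premise --- that in the brutal fork the attacker never enters a competition, so the only referable uncle of a cycle is its first honest block --- fails for SHS and SHH: when the attacker's lead is $1$ and the honest miners find a block, he must release his single block and a competition does occur; this is precisely why $\gamma$ appears in Propositions \ref{ERU} and \ref{euh2} for this very strategy. In SHH, with probability $1-\gamma$ the uncle is the attacker's block rather than the first honest block, and, more importantly, SHH is a cycle won by the honest miners, so its second honest block is official and signals the uncle in-cycle at distance $1$. Consequently your key dichotomy ``$V(\omega)=1$ iff every S of $\omega$ precedes its first H'' is false as literally stated: SHH satisfies the right-hand side yet has $V(\text{SHH})=0$, because the in-cycle nephew there is an \emph{honest} block --- a possibility your nephew-locating argument excludes, since it only considers attacker blocks as in-cycle nephews (legitimate for cycles starting SS, where every honest block is orphaned by the final override, but not for SHH). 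Applied literally, the dichotomy would inflate the answer by $\mathbb{P}[\text{SHH}]=p^2q$; your enumeration avoids this only by silently asserting that the relevant cycles start with SS, which nothing in your argument justifies. (The cycle $H$ is also a vacuous counterexample, though harmless since it has no uncle.) The repair is one line: for $\omega \in \{H, \text{SHS}, \text{SHH}\}$ one has $V(\omega)=0$, since $H$ contains no uncle, while in SHS and SHH the unique uncle is referred within the cycle by the block at distance $1$ (the attacker's second block, resp.\ the second honest block). With this case disposed of, your restriction to cycles starting with SS is legitimate and the rest of the proof coincides with the paper's.
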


\medskip

\begin{proof}
We have $V=1$ if and only if the attack cycle $\omega$ is SS..SH..H with $2\leq k\leq n_1$ S. In that case, the first H is an uncle signaled by the first future official block in the attack cycle after $\omega$. Otherwise, $V=0$. 
So, $\mathbb{E}[V] = pq^{2}+\ldots+p^{n_1-1}q^{n_1}$, and we get the result.
\end{proof}

\medskip

\begin{proposition}\label{euh2}
  We have $\mathbb{E}[U_h] = p^{2}q+\left( p + (1-\gamma)p^{2}q\right) p q^{2}\cdot \frac{1-(p q)^{n_1 - 1}}{1-p q}$.
\end{proposition}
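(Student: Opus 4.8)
The plan is to reproduce the decomposition used in the proof of Proposition~\ref{euh}, since the only quantity that changes between Strategy~1 and Strategy~2A is $\mathbb{E}[V]$, which has already been recomputed for the brutal fork in Proposition~\ref{peus}. Concretely, I would write $U_h(\omega) = U_h^{(1)}(\omega) + U_h^{(2)}(\omega)$, where $U_h^{(1)}$ counts the uncles of $\omega$ that are signaled by an honest nephew lying in $\omega$ itself, and $U_h^{(2)}$ counts those signaled by an honest nephew in the cycle immediately following $\omega$.

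First I would treat $U_h^{(1)}$. The point is that the short cycles $H$, $\text{SHS}$ and $\text{SHH}$ are resolved identically in both strategies: at a $1$--$1$ tie the attacker must reveal his single secret block to avoid being overtaken, so the same competition occurs. An uncle is referred by an honest nephew inside the same cycle exactly when $\omega=\text{SHH}$, in which case the orphaned block $\text{S}$ is signaled by the second honest block; for every other $\omega$ one has $U_h^{(1)}(\omega)=0$. Hence $\mathbb{E}[U_h^{(1)}]=\mathbb{P}[\text{SHH}]=p^2q$, exactly as in the proof of Proposition~\ref{euh}.

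Next I would handle $U_h^{(2)}$. The uncles created in $\omega$ are signaled by an honest nephew of the following cycle precisely when the first official block after $\omega$ is an $H$, and by the same argument as in the proof of Proposition~\ref{euh} this event (the next cycle being $H$, or $\text{SHH}$ with its honest block becoming official) has probability $p+(1-\gamma)p^2q$; this probability is unchanged because it only involves the short cycles $H$ and $\text{SHH}$, which are common to both strategies. Conditioning on this event and using the independence of successive attack cycles, the number of uncles of $\omega$ that get referred in the next cycle equals $V(\omega)$, so $\mathbb{E}[U_h^{(2)}]=(p+(1-\gamma)p^2q)\,\mathbb{E}[V]$. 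Substituting the value $\mathbb{E}[V]=pq^2\cdot\frac{1-(pq)^{n_1-1}}{1-pq}$ from Proposition~\ref{peus} and adding $\mathbb{E}[U_h^{(1)}]=p^2q$ yields the claimed formula.

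The main obstacle, and the only genuine content beyond bookkeeping, is justifying that the factor $p+(1-\gamma)p^2q$ and the value $\mathbb{E}[U_h^{(1)}]=p^2q$ carry over unchanged from Strategy~1: I would argue carefully that the brutal-fork rule alters only the $SS\ldots$ cycles and leaves the combinatorics of $H$, $\text{SHS}$ and $\text{SHH}$, together with their reward attributions and the status of the first official block of the next cycle, exactly as in Strategy~1. Once this is established, the remainder is a direct substitution of $\mathbb{E}[V]$.
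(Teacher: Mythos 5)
Your proposal is correct and follows exactly the paper's own route: the paper likewise decomposes $U_h = U_h^{(1)} + U_h^{(2)}$, keeps $\mathbb{E}[U_h^{(1)}] = p^2 q$ and the factor $p + (1-\gamma)p^2 q$ unchanged from Proposition~\ref{euh}, and notes that the only difference is the value of $\mathbb{E}[V]$ taken from Proposition~\ref{peus}. Your added justification that the brutal-fork rule leaves the short cycles $H$, $\text{SHS}$, $\text{SHH}$ and the status of the next official block untouched is a welcome elaboration of what the paper states only as ``almost identical''.
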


\medskip

\begin{proof}
  The proof is almost identical as the proof of Proposition \ref{euh}. If $U_h^{(1)} (\omega)$ (resp. $U_h^{(2)} (\omega)$) 
  counts for the number of uncles referred by honest nephews only present in $\omega$ (resp. in the attack cycle just after $\omega$),
  then we have $\mathbb{E} [U_h^{(1)}] = p^2 q$, $\mathbb{E} [U_h^{(2)}] = \left( p + (1 - \gamma) p^2 q \right) \cdot \mathbb{E} [V]$
  and $U_h = U_h^{(1)} + U_h^{(2)}$. The only difference is the value of $\mathbb{E} [V]$ which this time is given  by
  Proposition \ref{peus}, and we get the result.
\end{proof}

\medskip

\begin{corollary}\label{coeub}
  We have
  \begin{equation*}
    \mathbb{E} [U_s]  =  \mathbb{E} [U]  - \mathbb{E} [U_h] =  q - q^{n_1 + 1} - \left( p^{2}q+
    \left( p + (1-\gamma)p^{2}q\right) p q^{2}\cdot \frac{1-(p q)^{n_1 - 1}}{1-p q}\right)
  \end{equation*}
\end{corollary}

\medskip

\begin{note} \normalfont
  When $\gamma = 0$, the two strategies 1 and 2A are identical:
  in both cases, the honest miners always build blocks on top of honest blocks. 
  So, $\mathbb{E} [U]$, $\mathbb{E} [U_h]$ and $\mathbb{E} [U_s]$ must coincide for $\gamma = 0$. 
  We can check in the different formulas that this is the case. 
  See Propositions \ref{proeu}, \ref{euh},\ref{proeu2}, \ref{euh2} 
  and Corollaries \ref{coeua}, \ref{coeub}.
\end{note}

\medskip

\subsection{Apparent hashrate of Strategy 2A}

We use again Theorem \ref{qetilde} and we plot in parameter space 
in Figure 2 the region of $(q, \gamma)\in [0, 0.5] \times [0, 1]$ comparing  Selfish Mining Strategy 2A to the honest strategy. 

\medskip

We observe that if $\gamma=0$ then we have SM2A is superior to honest mining when  $q>28.65\%$.
Also we have for all values of $q$ and $\gamma$ that SM2A is superior to SM1. Therefore it is never 
profitable for the attacker to engage in competitions with the honest miners.

\medskip

\begin{figure}[h]
  \resizebox{180pt}{180pt}{\includegraphics{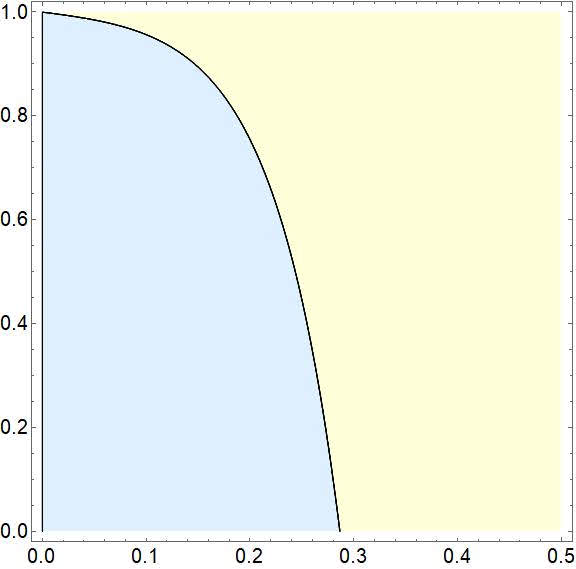}}
  \put(-135, 85){HM}
  \put(-55, 85){SM2A}
  \caption{Comparing HM and SM2A strategies.}
\end{figure}

\subsection{Apparent hashrate of the honest miners}

  We compute first the expected distance between an uncle and its nephew. 
  We keep the same notation for $\Delta$ as in Definition \ref{defDelta}.

  \begin{proposition}
  We have 
  $\mathbb{E} [\Delta] = \frac{q}{p} \left(q (1+p) - \bigl(1+n_1 p\bigr) q^{n_1}\right)$
  \end{proposition}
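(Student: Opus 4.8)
The plan is to lean on the structural fact, already established in Proposition \ref{proeu2}, that under Strategy 2A every attack cycle carries \emph{at most one} referred uncle. This makes the sum in Definition \ref{defDelta} collapse: $\Delta(\omega) = \delta(\mathfrak{b})$ whenever $\omega$ contains a referred uncle $\mathfrak{b}$, and $\Delta(\omega) = 0$ otherwise. So the whole computation reduces to identifying that single uncle and determining the law of its distance $\delta(\mathfrak{b})$ to its nephew, rather than having to track the full collection of uncles as in Strategy 1.

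First I would pin down which block is the uncle. Since the attacker never engages in competition before revealing his fork, every honest block after the first is mined on top of a previous honest block and is therefore stale; only the first honest block of the cycle has its parent in the official blockchain and can be an uncle. Writing a cycle that starts with two S as $S^{k}H\ldots$, where $k$ counts the leading letters S, the first honest block is validated when the attacker's secret fork has height $k$ while that block sits at height $1$, so its relative height is $h(\mathfrak{b}) = k-1$. This is exactly the relative-height bookkeeping of Proposition \ref{omw2}, and Remark \ref{remm} then gives $\delta(\mathfrak{b}) = h(\mathfrak{b}) + 1 = k$.

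Next I would assemble the expectation by conditioning on the cycle type, mirroring the proof of Proposition \ref{proeu2}. The cycle $\omega = H$ contributes $0$; a cycle whose first $n_1+1$ blocks are all S produces no referred uncle, the candidate lying beyond referral distance; and a cycle beginning with exactly $k$ letters S followed by H, for $1 \le k \le n_1$, occurs with probability $p q^{k}$ and contributes $\delta = k$. This yields
\[
  \mathbb{E}[\Delta] = \sum_{k=1}^{n_1} k\, p q^{k},
\]
whose closed form comes from the same derivative-of-a-geometric-series manipulation already recorded in equation (\ref{Del}), which evaluates the tail $\sum_{k\ge 2} k\, p q^{k}$.

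The delicate step, and the one I would check most carefully, is the bookkeeping of the short cycles $\text{SHS}$ and $\text{SHH}$ (the $k=1$ term): one must verify that their unique orphaned block of height $1$ is genuinely referred, at distance $\delta = 1$, and that this is consistent both with the count $\mathbb{E}[U] = q - q^{n_1+1}$ of Proposition \ref{proeu2} and with the truncation at $k = n_1$. Once the admissible range of $k$ is fixed, the evaluation of $\sum_{k=1}^{n_1} k\, p q^{k}$ is routine and runs parallel to Proposition \ref{pound} specialized to $\gamma = 0$, where likewise only the first honest block of each cycle survives as an uncle; reconciling the short-cycle contribution with the displayed formula is the one point that warrants attention before declaring the computation complete.
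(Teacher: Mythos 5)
Your route is exactly the paper's: the paper's proof also observes that a cycle contributes a (single) uncle precisely when it starts with $k$ letters S followed by H, $1\le k\le n_1$, that this uncle sits at distance $k$ from its nephew, and concludes $\mathbb{E}[\Delta]=\sum_{k=1}^{n_1}k\,pq^{k}$, invoking (\ref{Del}) for the closed form. Your derivation of that sum is sound; one small slip in your structural preamble is the claim that only the \emph{first honest block} can be an uncle: in the cycle $\text{SHH}$ with the second honest block mined on top of the first (probability $1-\gamma$), the uncle is the \emph{attacker's} block, not the honest one. Since in all three $k=1$ configurations the unique uncle is at distance $1$, this does not affect the count, and your separate treatment of the short cycles absorbs it.

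The ``delicate step'' you flagged is not a gap in your argument but a genuine discrepancy in the paper. Equation (\ref{Del}) evaluates $\bigl(2p+3pq+\cdots+n_1pq^{n_1-2}\bigr)q^{2}=\sum_{k=2}^{n_1}k\,pq^{k}$, i.e.\ only the tail $k\ge 2$, whereas the correct expectation includes the $k=1$ term:
\begin{equation*}
\sum_{k=1}^{n_1}k\,pq^{k}
= pq+\frac{q}{p}\left(q(1+p)-\bigl(1+n_1p\bigr)q^{n_1}\right),
\end{equation*}
so the displayed closed form is short by $pq$. The $k=1$ cycles do contribute: $\text{SHS}$ and $\text{SHH}$ each produce one referred uncle at distance $1$ (consistent with $\mathbb{E}[U]=q-q^{n_1+1}=\sum_{k=1}^{n_1}pq^{k}$ from Proposition \ref{proeu2}, whose count includes these cycles, and with Proposition \ref{ERU}, which uses $K_u(1)$ for $\text{SHH}$). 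A further consistency check confirms this: by the Note following Corollary \ref{coeub}, Strategies 1 and 2A coincide at $\gamma=0$, and Proposition \ref{pound} specialized to $\gamma=0$ gives $\mathbb{E}[\Delta]=pq+\frac{q}{p}\bigl(q(1+p)-(1+n_1p)q^{n_1}\bigr)$, retaining precisely the leading $pq$ that the statement above drops. So your reconciliation fails for the right reason: the sum $\sum_{k=1}^{n_1}k\,pq^{k}$ you (and the paper's own proof) derive is correct, and it is the proposition's stated formula that misapplies (\ref{Del}) by forgetting the $k=1$ term.
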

  
  \begin{proof}
  If an attack cycle $\omega$ starts with S...SH with $k$ S, $k\leq n_1$, 
  then there is exactly one uncle in $\omega$ and its distance to its nephew is $k$.
  In any other cases, there is no uncle in $\omega$. Therefore,
  $\mathbb{E} [\Delta] = \sum_{k=1}^{n_1} k p q^{k}$
  Hence we get the result by (\ref{Del}).
  \end{proof}
  The apparent hashrate $\tilde{p}$ of the honest miners is 
  $\tilde{p} = 1-\tilde{q}-\iota$ with $\iota$ given by (\ref{deflation}). 
  Numerically, we observe that we have always $\tilde{p}< p$ 
  except in a tiny region when $q$ and $\gamma$ is small 
  ($q<6 \%$ and $\gamma < 22 \%$).
  
\medskip

\begin{figure}[h]
  \resizebox{180pt}{180pt}{\includegraphics{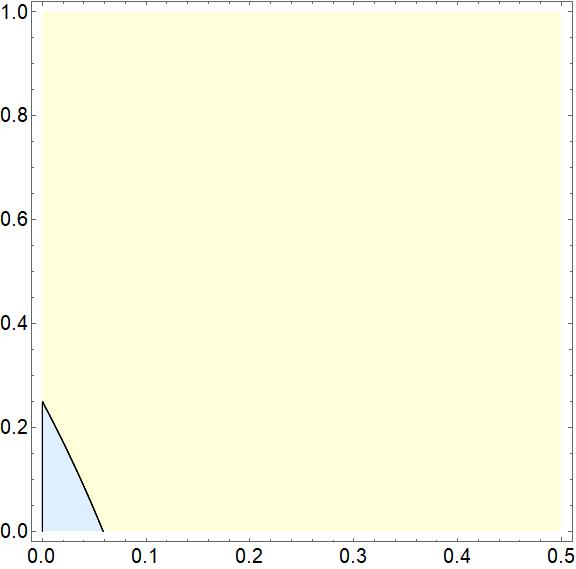}}
  \caption{Comparing $\tilde{p}$ and $p$: 
  The honest miners are negatively affected even when the attack 
  is not profitable for the selfish miner 
  except for a tiny region around $(0,0)$ (case SM2A).}
\end{figure}

\medskip
  
\section{Strategy 2B: Brutal Fork without signaling uncles.}

In this strategy, the attacker signals no uncles in order to 
maximize the impact on the difficulty adjustment formula.
In that case we have $U_s = 0$. In our analysis of the profitability of the strategy, 
we need to consider another important rule of Ethereum's protocol: a nephew can 
only signal at most two uncles. 
Instead of computing $\mathbb{E} [U]$, it is simpler to compute $\mathbb{E} [U']$
where $U' (\omega)$ is defined as the number of signaled uncles with
nephews in $\omega$.
We have,
\begin{equation}\label{eueup}
  \mathbb{E} [U] =\mathbb{E} [U'] 
\end{equation}
Since the attacker does not signal uncles, we have $U' (\omega) = 0$ if
$\omega \notin \{ H, \text{SHH} \}$.

  To ease notations, we set $U'(H)$ for $U'(\{H\})$. 
  
\medskip

\begin{lemma}
  \label{probaU0}
  We have:
  $$
  \begin{array}{rcl}
  \label{piu0n}
  \mathbb{P} [U' (H) = 1] & = & 
  \sum_{i = 2}^{n_1 - 2}
  \big( 1 - pq^2 - pq^2 C_{n_1 - 2 - i} (pq)
  \big)  \pi_{i} + \pi_{n_1 -1} +  \pi_{n_1}\\
  \label{piu0n2}
  \mathbb{P} [U' (H) = 2] &=&
  \sum_{i + j \leq n_1}
  \pi_{i} \pi_{j}\\
  \mathbb{P} [U' (H) \geq 3]&=&0
  \end{array}
  $$

\end{lemma}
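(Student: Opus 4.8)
The plan is to read $U'(H)$ as the number of uncles that the lone honest block of the cycle $\omega=H$ signals in its role as nephew. Since the attacker never signals in this strategy, the only uncle produced inside an attacker-won brutal-fork cycle is its first honest block, which is orphaned with certainty (so no factor $\gamma$ enters, unlike Strategy 1), and it stays unsignaled until the next \emph{honest} official block appears. Hence $U'(H)$ counts exactly the pending uncles carried by the maximal run of consecutive attacker-won cycles immediately preceding $\omega$, restricted to the reference window and capped at $2$ by the protocol rule. I would exploit the renewal structure: successive cycles are i.i.d., and by Lemma \ref{pisecpi2} a cycle is attacker-won with $L=k$ with probability $\pi_k$, while the event ``attacker-won with $L\le k$'' has probability $pq^2+pq^2 C_{k-2}(pq)$.

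First I would set up the distance bookkeeping. Writing $C_{-1},C_{-2},\dots$ for the cycles preceding $\omega$, if $C_{-1}$ is attacker-won with length $i$ then its uncle sits $i$ blocks below the nephew, hence at distance $i$; if moreover $C_{-2}$ is attacker-won with length $j$ its uncle sits at distance $i+j$; in general the uncle of the $\ell$-th preceding cycle lies at the cumulative distance of the intervening lengths. An uncle is still pending at $\omega$ iff every cycle strictly between it and $\omega$ is attacker-won (an intervening honest official block would already have signaled it), and it is in range iff its distance is $\le n_1$. Since distances strictly increase as one goes back, the two closest candidate uncles are those of $C_{-1}$ and $C_{-2}$.

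Then $\mathbb{P}[U'(H)\ge 3]=0$ is immediate from the two-uncle cap. For $\mathbb{P}[U'(H)=2]$ I would observe that having at least two pending in-range uncles is exactly the event that $C_{-1}$ and $C_{-2}$ are both attacker-won with $i+j\le n_1$ (which forces $i\le n_1$ as well), and the cap turns this into $U'(H)=2$; summing the independent probabilities $\pi_i\pi_j$ over $i+j\le n_1$ yields the stated formula. For $\mathbb{P}[U'(H)=1]$ I would condition on $C_{-1}$ being attacker-won with length $i$ (probability $\pi_i$, in range iff $i\le n_1$) and require that no second in-range uncle exist, i.e. that $C_{-2}$ is not an attacker-won cycle of length $j\le n_1-i$. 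By the partial-sum formula this complementary factor is $1-pq^2-pq^2 C_{n_1-2-i}(pq)$ whenever $n_1-i\ge 2$, and equals $1$ for $i\in\{n_1-1,n_1\}$ (no room for a second uncle), producing the split $\sum_{i=2}^{n_1-2}(\cdots)\pi_i+\pi_{n_1-1}+\pi_{n_1}$.

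The main obstacle is the bookkeeping of the two intertwined constraints — the chronological ``pending'' condition (all intervening cycles attacker-won) and the geometric ``in range'' condition (cumulative length $\le n_1$) — together with checking that the two-uncle cap makes $\{U'(H)=1\}$ and $\{U'(H)=2\}$ partition the relevant configurations with neither overlap nor omission. Once the distance of the $\ell$-th preceding uncle is identified with the cumulative cycle length, everything reduces to the i.i.d. sum over the $\pi_k$ and the partial-sum identity of Lemma \ref{pisecpi2}.
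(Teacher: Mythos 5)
Your proposal is correct and follows essentially the same route as the paper: you characterize the events $\{U'(H)=1\}$ and $\{U'(H)=2\}$ through the last two (i.i.d.) attack cycles preceding $H$ — the most recent one attacker-won with $L=i\le n_1$, the one before either not attacker-won or with $L>n_1-i$ — and convert these conditions into probabilities via $\pi_k$ and the partial-sum identity of Lemma \ref{pisecpi2}, with the two-uncle cap giving $\mathbb{P}[U'(H)\ge 3]=0$. Your justification that only the two nearest cycles matter (cumulative distances, pending uncles killed by any intervening honest official block) is exactly what the paper's terser proof leaves implicit.
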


\medskip

  \begin{proof}
  We have $U'(H)=1$ if and only if
  the two last attack cycles
  before H are in the following order from the oldest to 
  the most recent one: $\omega'$ and $\omega$ such that:
  \begin{enumerate}
    \item \label{more_condition1_1}
    $\omega$ won by the
    attacker with $L(\omega)\leqslant n_1$.
    \item \label{more_condition1_2}
    $\omega'$ won by the
    honest miners or by the attacker but with 
    $L(\omega') > n_1 - L(\omega)$.
  \end{enumerate}  
  Note that if $L(\omega)\geqslant n_1 - 1$ 
  then  (\ref{more_condition1_2}) is automatically satisfied.
  So, 
  $$
  \mathbb{P} [U' (H) = 1] = 
  \sum_{i = 2}^{n_1 - 2}
  \big( 1 - pq^2 - pq^2 C_{n_1 - 2 - i} (pq)
  \big)  \pi_{i} + \pi_{n_1 -1} +  \pi_{n_1}
  $$
  In the same way, we have $U'(H)=2$ if and only if
  the two last attack cycles
  before H are $\omega'$ and $\omega$ such that 
  $\omega'$ and $\omega$ are both won by the attacker with
  $ L(\omega) +  L(\omega')\leq n_1$. Indeed, a block 
  can only refer at most two uncles. 
  Hence, we get the result.
  \end{proof}

\medskip

\begin{example}\normalfont
  \label{ex06}For $n_1 = 6$, 
  we have using Example \ref{Catalan4}:
  \begin{align*}
    \mathbb{P} [U' (H) = 1] = 
    & \pi_5 + \pi_6 + \sum_{i \leqslant
    4}  (1 - pq^2 - pq^2 C_{4 - i} (pq)) \pi_i\\
    = & 
    p q^2 \left(14 p^4 q^4+p^3 (5-9 q) q^3+2 p^2 (1-2 q) q^2
    +p \left(q-4 q^2\right)+2\right) \\
    \mathbb{P} [U' (H) = 2] = 
    & \pi_2^2 + 2 \pi_2 \pi_3 + 2 \pi_2 \pi_4 + \pi_3^3\\
    = & p^2 q^4 \left(5 p^2 q^2+2 p q+4\right)\\
    \mathbb{P} [U' (H) \geq 3] = &0
  \end{align*}
\end{example}

\medskip

\begin{definition}
  We define $P_{n_1} (p, q) =\mathbb{E} [U' (H)]$.
\end{definition}

\medskip

\begin{example}\normalfont
  When $n_1 = 6$, we have by Example \ref{ex06}:

  \begin{equation*}
  P_6 (p, q) = p q^2 \left(14 p^4 q^4+p^3 (q+5) q^3+2 p^2 q^2+p (4 q+1) q+2\right)
  \end{equation*}

\end{example}

\medskip

\begin{lemma}
  We have 
  $$
  \mathbb{E} [U' (\omega) | \omega = \text{SHH}] =
  (P_{n_1} (p, q) +  1) 
  \cdot (1 - \gamma) 
  + (p q^{2} + p q^{2} C_{n_1 -3} (p q) + 1) 
  \cdot \gamma
  $$
  
\end{lemma}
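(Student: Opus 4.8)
The plan is to condition on the two possible outcomes of the competition that is intrinsic to the pattern $\text{SHH}$, and to split $U'(\omega)$ into the uncles \emph{created} inside $\omega$ and the uncles inherited from earlier cycles that first receive a nephew inside $\omega$.

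First I would fix the geometry of $\omega=\text{SHH}$. Writing the last official block of the preceding cycle at height $0$, the attacker mines $S$ at height $1$ and keeps it secret; the honest miners then mine a first block $H$ at height $1$ on the height-$0$ block; being now caught up, the attacker releases $S$, a competition starts, and finally the second $H$ is mined at height $2$. As in the proof of Proposition~\ref{ERU}, exactly one height-$1$ block is orphaned and becomes an uncle, and by Remark~\ref{remm} it is referred at distance $1$ by the second $H$. This uncle is both created and signaled inside $\omega$, so it contributes a deterministic $+1$ to $U'(\omega)$ in every case; this is the common summand of the two terms.

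Next I would condition on which height-$1$ block the terminal honest block extends. Given $\omega=\text{SHH}$, the second $H$ is mined on the honest block $B$ with probability $1-\gamma$ (call it sub-case A) and on the attacker's block $S$ with probability $\gamma$ (sub-case B), since an honest block mined during a competition extends the attacker's fork with probability $\gamma$. In sub-case A the first official block is $B$, which occupies exactly the position of the lone honest block of an $H$-cycle (height $1$, parent equal to the last block of the previous cycle); as it carries no current-cycle uncle, it signals previous uncles precisely as in the definition of $U'(H)$, contributing $\mathbb{E}[U'(H)]=P_{n_1}(p,q)$ and yielding the factor $(P_{n_1}(p,q)+1)(1-\gamma)$.

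Finally, in sub-case B the first official block is the attacker's block $S$, which by the rule of Strategy 2B signals no uncle; hence the only nephew available for a previous uncle is the second $H$, now sitting at height $2$. One of its two admissible uncle slots is already spent on the current-cycle uncle, so it can reference at most one previous uncle, and only if that uncle is close enough: being one level higher than the nephew of an $H$-cycle, its reachability budget drops from $n_1$ to $n_1-1$. The previous uncle is therefore signaled exactly when the previous cycle is won by the attacker with $L\leq n_1-1$, an event of probability $pq^2+pq^2 C_{n_1-3}(pq)$ by Lemma~\ref{pisecpi2}. Adding the $+1$ gives $(pq^2+pq^2 C_{n_1-3}(pq)+1)\gamma$, and summing the two sub-cases yields the stated formula. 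The delicate point, which I would handle with the same care as Lemma~\ref{probaU0}, is the reachability bookkeeping in sub-case B: justifying both the single free uncle slot on the second $H$ and the shift $n_1\mapsto n_1-1$ in the distance constraint, since these are exactly what distinguishes this conditional expectation from $P_{n_1}(p,q)$.
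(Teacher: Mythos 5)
Your proof is correct and follows essentially the same route as the paper's: conditioning on the competition outcome (second honest block on the honest block with probability $1-\gamma$, on the attacker's block with probability $\gamma$), adding the deterministic current-cycle uncle in both cases, identifying the honest-winner case with the $U'(H)$ situation to get $P_{n_1}(p,q)$, and in the attacker-winner case using the single remaining uncle slot and the shifted distance constraint $L(\omega')\leq n_1-1$ together with Lemma~\ref{pisecpi2} to get $pq^2+pq^2C_{n_1-3}(pq)$. Your explicit bookkeeping of heights and uncle slots in sub-case B is in fact slightly more careful than the paper's own argument, which asserts these points without detailed justification.
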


\medskip

\begin{proof}
  Suppose that $\omega =$SHH.  
  We have two cases: The second honest block can be 
  built on top of a block validated by the selfish miner 
  or not. If the first  official block of $\omega$ is
  honest, then it signals any uncle which is
  at distance less or equal than $n_1$, like in the previous
  situation. Moreover, the first block mined by 
  the selfish miner is an uncle signaled 
  by the second block mined by the honest miners. 
  This gives the first term of the right hand side.
  If the first official block of $\omega$ 
  is a block mined by the attacker, then the first
  block validated by the honest miners is an uncle 
  signaled by the second block mined by the honest miners.
  This last block will also signal another uncle
  which is at distance less than $n_1 - 1$ 
  of the first official
  block of $\omega$. There is such an uncle if and only if
  the attack cycle $\omega'$ before $\omega$ 
  is an attack cycle won by the attacker 
  with $L(\omega')\leq n_1 -1$.  
  This gives the second term 
  of the right hand side. Hence, we get the result.
\end{proof}

\medskip

\begin{theorem}
  We have 
  $$
  \mathbb{E} [U] = (p + (1 - \gamma) p^2 q) P_{n_1} (p, q) 
  + \gamma  p^2 q \left( p q^{2} + p q^{2} C_{n_1 -3} (p q)
  \right) + p^2 q 
  $$
\end{theorem}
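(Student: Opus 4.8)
The plan is to evaluate $\mathbb{E}[U]$ by the law of total expectation, conditioning on the shape of the current attack cycle $\omega$. By \eqref{eueup} we have $\mathbb{E}[U]=\mathbb{E}[U']$, and since in Strategy 2B the attacker signals no uncles, the fact recorded just before Lemma \ref{probaU0} gives $U'(\omega)=0$ for every cycle $\omega\notin\{H,\text{SHH}\}$. Consequently the total-expectation sum collapses to the two contributing cycle types:
\[
\mathbb{E}[U']=\mathbb{P}[\omega=H]\,\mathbb{E}[U'(H)]+\mathbb{P}[\omega=\text{SHH}]\,\mathbb{E}[U'(\omega)\mid\omega=\text{SHH}].
\]

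The next step is to supply the four ingredients on the right-hand side. The two probabilities are elementary: $\omega=H$ means the first block of the cycle is honest, so $\mathbb{P}[\omega=H]=p$, while $\omega=\text{SHH}$ is the length-three sequence S, H, H, giving $\mathbb{P}[\omega=\text{SHH}]=q\cdot p\cdot p=p^2q$. The conditional expectation $\mathbb{E}[U'(H)]$ is exactly the quantity named $P_{n_1}(p,q)$ in the definition above, and $\mathbb{E}[U'(\omega)\mid\omega=\text{SHH}]$ is furnished by the preceding Lemma as $(P_{n_1}(p,q)+1)(1-\gamma)+(pq^2+pq^2C_{n_1-3}(pq)+1)\gamma$.

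Substituting these and expanding, I would collect the terms proportional to $P_{n_1}(p,q)$ to produce the coefficient $p+(1-\gamma)p^2q$, combine the two constant contributions $p^2q(1-\gamma)$ and $p^2q\gamma$ into the single term $p^2q$, and leave the remaining $\gamma$-weighted piece as $\gamma\,p^2q\bigl(pq^2+pq^2C_{n_1-3}(pq)\bigr)$; this reproduces the claimed formula. I do not expect a genuine obstacle here, since the real combinatorial work was already done in Lemma \ref{probaU0} and in the preceding Lemma. The only points requiring care are verifying that no attack cycle other than $H$ and $\text{SHH}$ can contribute to $U'$ (so that the conditioning is exhaustive) and getting the elementary probability $\mathbb{P}[\omega=\text{SHH}]=p^2q$ correct, after which the statement follows by routine regrouping of the two terms.
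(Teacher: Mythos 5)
Your proposal is correct and takes essentially the same route as the paper's own proof: decompose $\mathbb{E}[U']$ by conditioning on the cycle type, note that $U'$ vanishes off $\{H,\text{SHH}\}$, and combine $P_{n_1}(p,q)=\mathbb{E}[U'(H)]$ with the preceding lemma's value of $\mathbb{E}[U'(\omega)\mid\omega=\text{SHH}]$, weighted by $\mathbb{P}[\omega=H]=p$ and $\mathbb{P}[\omega=\text{SHH}]=p^2q$. Your final regrouping of the $P_{n_1}$-terms and of the two $p^2q$ contributions matches the paper's computation exactly.
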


\medskip

\begin{proof}
  We have $\mathbb{E} [U] =\mathbb{E} [U']$ and
  \begin{align*}
    \mathbb{E} [U'] & =  \mathbb{E} [U' (\omega) | \omega = H] \mathbb{P}
    [\omega = H] +\mathbb{E} [U' (\omega) | \omega = \text{SHH}] \mathbb{P}
    [\omega = \text{SHH}]\\
    & =  P_{n_1} (p, q) p + (P_{n_1} (p, q) + 1) \cdot (1 - \gamma) p^2 q +
    (p q^{2} + p q^{2} C_{n_1 -3} (p q) + 1) \cdot \gamma p^2 q
  \end{align*}
\end{proof}

\medskip

\subsection{Apparent hashrate of Strategy 2B}

\medskip

The computation of  $\mathbb{E} [U]$ is a polynomial expression in $p$ and $q$ that can be carried out with the help of a computer algebra system.
We plot in parameter space in Figure 3 the region of $(q, \gamma) \in [0, 0.5] \times [0, 1]$ comparing  Selfish Mining Strategies 2A and 2B, and  
honest mining. We also compare SM1, SM2A and SM2B in Figure 4.

\medskip

  We observe that if $\gamma=0$ then we have SM2B is
  superior to honest mining when $q>28.80\%$. Also, for
  $q>30.13\%$ we have that SM2B is even better than SM2A
  (whathever $\gamma$ is). Thus, in this case, the attacker
  does not even need to bother to signal blocks.

\medskip

\begin{figure}[h]
  \resizebox{180pt}{180pt}{\includegraphics{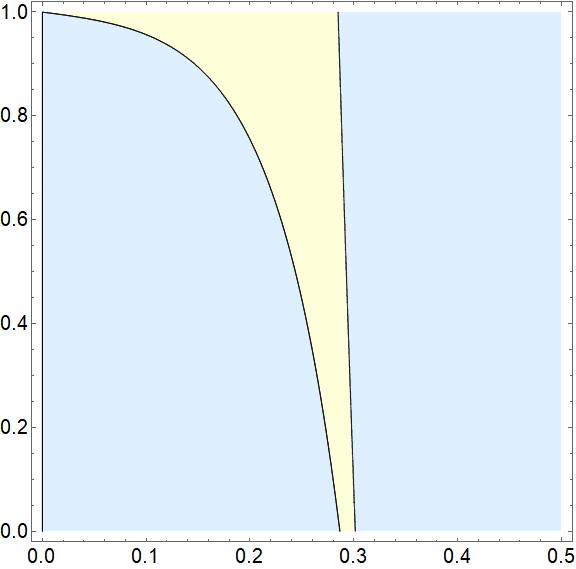}}
   \put(-135, 95){HM}
  \put(-45, 95){SM2B}
  \put(-108, 155){SM2A}
  \caption{Comparing the strategies HM, SM2A and SM2B.}
\end{figure}

\begin{figure}[h]
  \resizebox{180pt}{180pt}{\includegraphics{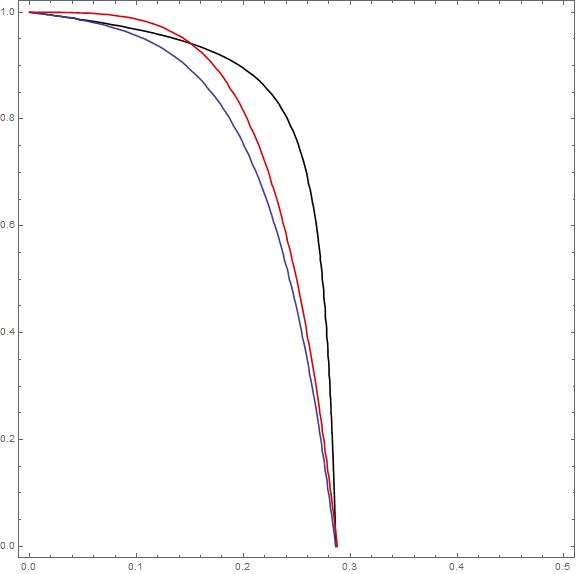}}
  \caption{Comparing the strategies SM1 (black), 
  SM2A (blue) and SM2B (red).}
\end{figure}

\medskip

\subsection{Apparent hashrate of the honest miners}

\medskip

  \begin{definition}
  If $\omega$ is an attack cycle, 
  we denote by $\Delta'(\omega)$ 
  the average number of the distance between 
  a nephew belonging to $\omega$ and an uncle 
  which does not necessarily belong to $\omega$.
  
  \end{definition}
  
  In a similar way as before, we prove:
  \begin{lemma}
  We have 
  
  \begin{equation*}
    \mathbb{E} [\Delta' (\omega) | \omega = H] = 
    \sum_{ |\mathbf{i}|\leqslant n_1}
    \left( \sum_{j} j\cdot i_j \right)
    \big( 1 - pq^2 - pq^2 C_{n_1 - 2 - |\mathbf{i}|} 
    (pq) \big)  \prod_{j}
    \pi_{i_j} 
  \end{equation*}
  \end{lemma}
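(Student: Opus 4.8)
The plan is to mirror the proof of Lemma \ref{probaU0}, but to keep track of the \emph{distance} of each signaled uncle to its nephew instead of merely counting the uncles. Conditioning on $\omega=\text{H}$, the unique official block of the cycle is the only candidate nephew, and the uncles it signals are the orphaned honest blocks produced by the maximal run of consecutive attacker-won cycles immediately preceding $\omega$ (in Strategy 2B the attacker abandons his secret fork whenever the honest miners win, so only an attacker-won cycle leaves behind an orphaned honest block, i.e.\ a referable uncle). I would parametrize this run by the tuple $\mathbf{i}=(i_1,\ldots,i_r)$, listed from the oldest cycle $i_1$ to the cycle adjacent to $\omega$, namely $i_r$, where $i_j=L(\omega_j)$ and each $\omega_j$ is won by the attacker. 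By the independence of successive attack cycles already used in Lemma \ref{probaU0} and the first item of Lemma \ref{pisecpi2}, the run $\mathbf{i}$ occurs with probability $\prod_j\pi_{i_j}$.

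Next I would compute the distance contributed by each uncle. Normalizing the nephew to official height $N$, the $r$ cycles of the run fill the heights immediately beneath it, the $j$-th cycle occupying a block of $i_j$ consecutive heights; a direct height count (cf.\ Remark \ref{remm}) then places the uncle born in the $j$-th cycle at distance
\[ \delta_j=\sum_{k=j}^{r} i_k \]
from the nephew. Summing over the run gives the total distance
\[ \sum_{j=1}^{r}\delta_j=\sum_{j=1}^{r}\sum_{k=j}^{r} i_k=\sum_{j=1}^{r} j\, i_j , \]
which is exactly the weight in the statement. In particular the oldest uncle lies at distance $|\mathbf{i}|=\sum_j i_j$, so every uncle of the run is within signalling distance $n_1$ precisely when $|\mathbf{i}|\leqslant n_1$.

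It then remains to supply the factor that makes the run maximal. The run stops at $\omega_1$ exactly when the next older cycle fails to contribute a further uncle within reach, i.e.\ is not an attacker-won cycle of length at most $n_1-|\mathbf{i}|$. By the second item of Lemma \ref{pisecpi2} (with the boundary conventions used there for $|\mathbf{i}|\in\{n_1-1,n_1\}$) this event has probability
\[ 1-pq^2-pq^2\,C_{n_1-2-|\mathbf{i}|}(pq). \]
Multiplying the distance weight $\sum_j j\,i_j$, the run probability $\prod_j\pi_{i_j}$, and this stopping factor, and summing over all admissible tuples $\mathbf{i}$ with $|\mathbf{i}|\leqslant n_1$, yields the claimed identity.

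The step I expect to be the crux is making the two-uncle rule cohere with this run-based accounting. In Lemma \ref{probaU0} the same rule forced $U'(\text{H})\leqslant 2$, whereas the sum above ranges over runs of arbitrary length $r$; I would therefore need to pin down that $\Delta'$ is read as the total distance accumulated by \emph{all} uncles of the pending run (not only the at most two actually signaled), and to verify the off-by-one in the height-to-distance translation when several cycles are stacked. Once the weight $\sum_j j\,i_j$ and the stopping probability are correctly established, the remaining summation is purely formal.
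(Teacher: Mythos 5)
Your proposal is correct and is essentially the paper's own argument: the paper supplies no separate proof here (it says only that the lemma is proved ``in a similar way as before''), and your reconstruction is precisely the intended analogue of the proof of Lemma \ref{probaU0} --- the same decomposition over the maximal run of attacker-won cycles preceding the cycle $H$, with run probability $\prod_j \pi_{i_j}$ by independence of cycles, the stopping factor $1-pq^2-pq^2\,C_{n_1-2-|\mathbf{i}|}(pq)$ coming from the second item of Lemma \ref{pisecpi2} (with the same boundary convention at $|\mathbf{i}|\in\{n_1-1,n_1\}$ as in Lemma \ref{probaU0}), augmented by the distance bookkeeping $\sum_j \delta_j=\sum_j j\, i_j$, which you compute correctly. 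Your flagged crux also resolves exactly in the direction you indicate: the stated formula tallies the distances of \emph{all} uncles of the pending run and silently drops the two-uncle cap that forced $U'(H)\leqslant 2$ in Lemma \ref{probaU0}, as one can confirm from the paper's own $Q_6$, where the term $96\,p^2q^4$ inside the parentheses is exactly the full weight $12=1\cdot 2+2\cdot 2+3\cdot 2$ of the length-three run $(2,2,2)$ times $\pi_2^3/(pq^2)$, so all three uncles are counted and your no-cap reading is the one under which the derivation is complete.
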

 
  \begin{definition}
  We define 
  $Q_{n_1}(p,q)=\mathbb{E} [\Delta' (\omega) | \omega = H]$
  \end{definition}
  
 The same computations as in the previous section leads to
 
   \begin{align*}
     Q_5 (p,q) &= p q^2 
     \left(25 p^3 q^3+20 p^2 q^3 
     +8 p^2 q^2+16 p q^2+3 p q+4\right)\\
     Q_6 (p,q) &= p q^2 \left(84 p^4 q^4
     +54 p^3 q^4+25 p^3 q^3+96 p^2 q^4+20 p^2 q^3
     +8 p^2 q^2+16 p q^2+3 p q+4\right)
   \end{align*}
   
   This enables us to compute $\mathbb{E} [\Delta']$ 
   using the following result with $n_1=6$.
   
  $$
  \mathbb{E} [\Delta'] = 
  (p + (1 - \gamma) p^2 q) Q_{n_1} (p, q) + \gamma
  p^2 q Q_{n_1 - 1} (p, q) + p^2 q \ .
  $$
  
  Finally, we note that 
  $\mathbb{E} [\Delta]=\mathbb{E} [\Delta']$. 
  From here, we get the apparent hashrate of 
  the honest miners using (\ref{deflation}) and
  (\ref{pqtilde}). We observe numerically that
   we have always  ${\tilde p} < p$.

\medskip

\section{Conclusions}

\medskip

We have given closed-form formulas for the long term profitability of different selfish mining strategies in the 
Ethereum network. This is combinatorially more complex than in Bitcoin network which has a simpler reward system. 
Precisely, the particular reward system that incentives signaling blocks is an effective counter-measure 
to Selfish mining but only when the count of uncle blocks are incorporated into the difficulty adjustment formula 
(this is the case for the current implementation of the difficulty adjustment formula). 
This analysis provides a good illustration of the fact that selfish mining is an attack on the difficulty adjustment formula.
We study, for the first time, selfish mining strategies that do not signal any blocks. We prove that they are the most 
profitable ones in the long run. It may appear counter-intuitive that refusing the signaling fees is the most profitable 
strategy with the current reward parameters when $q$ is larger than $30\%$. But this is explained again because selfish mining 
is an attack on the difficulty adjustment formula.

\section*{Appendix}

\subsection{Random walk}

We compute the expected numbers of descents in a biased random walk conditional to be bounded by a fixed bound.

\medskip

\begin{lemma}
  \label{maon}Let $(X_k)$ be a biased random walk starting from $X_0 = 0$ with
  $\mathbb{P} [X_{k + 1} = X_k + 1] = q$ and $\mathbb{P} [X_{k + 1} = X_k -
  1] = p$ for $k \in \mathbb{N}$, with $p + q = 1$ and $q < p$. 
  Let $\nu (X)$ be the stopping time defined by 
  $\nu (X) = \inf \{ i \geq 0; X_i= - 1 \}$, 
  and for $n \geq 0$, let 
  \begin{align*}
  u_n (X) =& 
  \sum_{i=1}^{\nu}  {\bf 1}_{(X_i < n) \wedge (X_i < X_{i - 1})}\\
  v_n (X) =& 
  \sum_{i=1}^{\nu}  X_i \cdot {\bf 1}_{(X_i < n) \wedge (X_i < X_{i - 1})}
  \end{align*} 
  Then we have
  \begin{align}
  u_n = \mathbb{E} [u_n (X)] =& \frac{p}{p - q}  \left( 1 -
  \left( \frac{q}{p}^{} \right)^{n + 1} \right)\label{equn}\\
  v_n = \mathbb{E} [v_n (X)] =& 
  \frac{p}{(p-q)^2} \left( 2q-p - \bigl(q+n (p-q)\bigr)\cdot \left( \frac{q}{p}\right)^{n+1} \right) \label{eqvn}
  \end{align}
\end{lemma}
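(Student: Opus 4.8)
The plan is to compute the two expectations $u_n = \mathbb{E}[u_n(X)]$ and $v_n = \mathbb{E}[v_n(X)]$ by decomposing the sums over descents of the random walk according to the level $X_i$ at which each descent lands. Since the walk is biased with $q < p$ (downward drift because $\mathbb{P}[X_{k+1}=X_k-1]=p > q$), it reaches $-1$ almost surely, so $\nu$ is finite a.s. and the sums are well-defined. A descent at step $i$ means $X_i = X_{i-1}-1$, i.e.\ the walk steps down, landing at level $X_i = \ell$ for some $\ell \geq -1$. The indicator $\mathbf{1}_{X_i < n}$ restricts to descents landing below level $n$. So the key idea is to count, for each level $\ell$, the expected number of descents that land at level $\ell$ before the walk is absorbed at $-1$.

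**First I would** introduce, for each level $\ell \geq 0$, the expected number $N_\ell$ of down-steps from $\ell+1$ to $\ell$ occurring strictly before time $\nu$. Then $u_n = \sum_{\ell=-1}^{n-1} N_\ell$ (descents landing at levels $-1, 0, \dots, n-1$ that are counted by the indicator $X_i < n$, but note a descent landing at $-1$ ends the walk so I must check whether the terminal step is included) and $v_n = \sum_{\ell} \ell \, N_\ell$ with the same range. Computing $N_\ell$ reduces to a standard first-passage/occupation argument for biased random walk: starting from $0$, the walk makes excursions, and the expected number of visits to (equivalently, down-crossings at) a given level is governed by the ratio $\rho = q/p < 1$. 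I expect $N_\ell$ to decay geometrically like $\rho^{\,\ell+1}$ times a constant, consistent with the factor $(q/p)^{n+1}$ appearing in the target formulas.

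**The main obstacle** will be pinning down $N_\ell$ exactly — getting the right constant and the right geometric factor — because one must correctly account for the absorbing boundary at $-1$ and the drift. I would fix this by a renewal/generating-function argument: either set up a one-step recursion for the expected number of descents at each level conditioned on the current position, or use the known fact that for a downward-biased walk the expected number of up-crossings of level $\ell$ equals $(q/p)^{\ell+1}/(p-q)$ up to boundary corrections, then match up-crossings to down-crossings. Once $N_\ell$ is in hand, the formula for $u_n$ follows from summing the finite geometric series $\sum_{\ell} \rho^{\ell}$, which produces the closed form
\[
u_n = \frac{p}{p-q}\left(1 - \left(\frac{q}{p}\right)^{n+1}\right),
\]
and the formula for $v_n$ follows from the weighted sum $\sum_\ell \ell\, \rho^\ell$, whose closed form is obtained by differentiating the geometric sum with respect to $\rho$ — this is exactly the source of the linear-in-$n$ factor $\bigl(q + n(p-q)\bigr)$ multiplying $(q/p)^{n+1}$ in \eqref{eqvn}.

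**Finally I would** verify the two closed forms by an independent consistency check: letting $n \to \infty$ should give $u_\infty = \frac{p}{p-q}$ and $v_\infty = \frac{p(2q-p)}{(p-q)^2}$, the unconditional expected number of descents and their signed total for the absorbed biased walk, which can be cross-checked against the unconditional occupation times of the walk. The two limits being finite is consistent with $q<p$, and matching them against a direct computation of the total expected descents confirms the constants. The routine algebra of collapsing the geometric and arithmetico-geometric sums into the stated forms I would leave to direct calculation.
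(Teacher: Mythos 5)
Your plan is correct and takes a genuinely different route from the paper. The paper's proof is a first-step decomposition: conditioning on $X_1$, the strong Markov property splits the path (when $X_1=+1$) into the excursion above level $0$, where all thresholds shift by one so that $n$ becomes $n-1$, and an independent fresh copy of the walk; this yields the coupled recursions $u_n = p + q(u_{n-1}+u_n)$ and $v_n = -p + q(u_{n-1}+v_{n-1}+v_n)$, solved by induction and the telescoping substitution $c_n=(p/q)^n v_n$. You instead integrate the crossing measure over levels: with $N_\ell$ the expected number of down-steps landing at level $\ell$ before absorption, $u_n=\sum_{\ell=-1}^{n-1}N_\ell$ and $v_n=\sum_{\ell=-1}^{n-1}\ell\,N_\ell$, and the closed forms drop out of a geometric and an arithmetico-geometric sum. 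This works, and is arguably more transparent, because the exact value is the clean $N_\ell=(q/p)^{\ell+1}$ for all $\ell\ge -1$: down-crossings of the edge $(\ell+1,\ell)$ match up-crossings of $(\ell,\ell+1)$ one-to-one since the walk terminates at $-1$, each visit to $\ell$ launches an up-crossing with probability $q$, and the expected number of visits to $\ell$ comes from gambler's ruin (e.g.\ $V_0=1/p$, giving $N_0=q/p$). Two points you flagged but left open do need settling, and both resolve favorably. First, the terminal step \emph{is} counted, since $X_\nu=-1<n$ and it is a descent; this is exactly what gives $N_{-1}=1$, makes your sums start at $\ell=-1$, and matches your (correct) limits $u_\infty=p/(p-q)$ and $v_\infty=p(2q-p)/(p-q)^2$. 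Second, the ``known fact'' you quote, that expected up-crossings of level $\ell$ equal $(q/p)^{\ell+1}/(p-q)$, carries a spurious factor $1/(p-q)$: that is (up to a constant) the occupation formula for the walk \emph{without} the absorbing barrier at $-1$, and using it uncorrected would give $\sum_{\ell\ge -1}(q/p)^{\ell+1}/(p-q)=p/(p-q)^2\neq u_\infty$, breaking the constants. So the renewal computation you sketch as a fallback is genuinely needed, not optional; with it, your argument is complete. As for what each approach buys: the paper's recursion handles the boundary at $-1$ automatically and never computes occupation measures, while yours isolates the structural fact $N_\ell=(q/p)^{\ell+1}$, from which \eqref{equn}, \eqref{eqvn}, and indeed any weighted descent count would follow by direct summation.
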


\medskip

\begin{proof}
  We have $u_0 = 1$ (resp. $v_0=-1$). 
  If $X_1 = - 1$, then we have $u_n(X)=1$ (resp. $v_n(X)=-1$). 
  If $X_1 = 1$, then 
    \begin{align*}
    u_n(X) 
    =& \sum_{i=1}^{\nu'}  {\bf 1}_{(X'_i < n-1) \wedge (X'_i < X'_{i - 1})}
     + \sum_{i=1}^{\nu^{''}}  {\bf 1}_{(X^{''}_i < n) \wedge (X^{''}_i < X^{''}_{i - 1})} \\
    =& u_{n-1}(X')+u_n(X^{''})
    \end{align*}

  with

    \begin{align*}
    X'_i =& X_{i + 1} - 1 \\
    \nu'=& \inf \{ i >0 ; X'_i = -1 \} \\
    X_i^{''} =& X'_{i + \nu'} - X'_{\nu'} \\
    \nu^{''} =& \inf \{ i >0 ; X_i^{''} = -1 \}
    \end{align*}
  By the Markov property, $X'$ and $X^{''}$ are two independant simple biased random walk 
  with a probability $p$ (resp. $q$) to move to the left (resp. right).
  So, taking expectations, we get:
  $$ 
  u_n = p \cdot 1 + q \cdot (u_{n - 1} + u_n) 
  $$
  which is equivalent to
  $$ u_n - \frac{p}{p - q} = \left( \frac{q}{p} \right)  \left(
     u_{n - 1} - \frac{p}{p - q} \right)
  $$
  So we get (\ref{equn}) by induction on $n$.
  In the same way, we have:
    \begin{align*}
    v_n(X) =& \sum_{i=1}^{\nu'}  (X'_i+1) \cdot {\bf 1}_{(X'_i < n-1) \wedge (X'_i < X'_{i - 1})}
              + \sum_{i=1}^{\nu"} X^{''}_i \cdot {\bf 1}_{(X^{''}_i < n) \wedge (X^{''}_i < X^{''}_{i - 1})} \\
    =& u_{n-1}(X')+v_{n-1}(X')+v_n(X^{''})
    \end{align*}
  Taking expectations again, we get
  \begin{equation}
  \label{Eqvn}
  v_n = p \cdot (-1) + q \cdot (u_{n - 1}+v_{n-1} + v_n) 
  \end{equation}
  Set $c_n = \left( \frac{p}{q} \right)^{n} v_n$. Then, (\ref{Eqvn}) leads to
  \begin{align*}
  c_n =& c_{n-1} + \left( \frac{p}{q} \right)^{n-1} u_{n-1} - \left( \frac{p}{q} \right)^{n} \\
  =& c_{n-1} + \left( \frac{2q-p}{p-q}\right) \cdot \left( \frac{p}{q} \right)^{n} - \frac{q}{p-q}
  \end{align*}
  So, by induction, we get
  $$
  c_n = c_0 + \left( \frac{2q-p}{p-q}\right) \cdot \left( \frac{p}{q} \right)\cdot
  \frac{\bigl(\frac{p}{q}\bigr)^{n}-1}{\bigl(\frac{p}{q}\bigr) - 1} - \frac{n q}{p-q}
  $$
  After rearranging terms, we get (\ref{eqvn}).
\end{proof}

\medskip

\subsection{Dyck words}

Let $\mathcal{D}$ be the space of Dyck words based on the alphabet $\{ S, H \}$.
If $w = w_1 \ldots w_{2 k}$ with $k \in \mathbb{N}$, then we define $| w | = k$.
We have proved in {\cite{GPM19a}} that we can endow $\mathcal{D}$ with a
probability measure $\bar{\mathbb{P}}$ given by $\bar{\mathbb{P}} [w] = p
(pq)^{| w |}$ for $w \in \mathcal{D}$. Note that $\bar{\mathbb{P}} [w]$ can
be interpreted as the probability that a simple biased random walk $X$
starting from 0 and stopping at $- 1$ follows exactly the path given by $w$
i.e., $X_i = X_{i - 1} + 1$ (resp. $X_i = X_{i - 1} - 1$) if $w_i = S$ (resp.
$w_i = H$) for $i \leqslant 2 | w |$ and $X_{2 | w | + 1} = - 1$.

\medskip

\begin{lemma}
  \label{pbdn}Let $n \geq 0$ and $\mathcal{D}_n = \{ w ; | w |
  \leqslant n \}$. Then, $\bar{\mathbb{P}} [\mathcal{D}_n] = pC_n (pq)$ where
  $C_n (x)$ is 
  the $n$-th partial sum of the
  generating series $C (x)$ of the Catalan numbers.
\end{lemma}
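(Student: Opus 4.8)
The plan is to compute $\bar{\mathbb{P}}[\mathcal{D}_n]$ directly by partitioning $\mathcal{D}_n$ according to the semilength of its elements. First I would write
$$
\mathcal{D}_n = \bigsqcup_{k=0}^{n} \{ w \in \mathcal{D} \, ; \, |w| = k \}
$$
as a disjoint union, so that by additivity of the measure $\bar{\mathbb{P}}$,
$$
\bar{\mathbb{P}}[\mathcal{D}_n] = \sum_{k=0}^{n} \ \sum_{\substack{w \in \mathcal{D} \\ |w| = k}} \bar{\mathbb{P}}[w].
$$
Since $\bar{\mathbb{P}}[w] = p(pq)^{|w|}$ depends only on $|w|$, the weight $p(pq)^k$ is constant on each inner sum, which therefore reduces to counting the number of Dyck words of semilength $k$.

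The key step is the classical combinatorial fact that the number of Dyck words of semilength $k$ on the alphabet $\{S,H\}$ is exactly the $k$-th Catalan number $C_k$. Granting this, the inner sum equals $C_k \cdot p(pq)^k$, and summing over $k$ gives
$$
\bar{\mathbb{P}}[\mathcal{D}_n] = \sum_{k=0}^{n} C_k \, p(pq)^k = p \sum_{k=0}^{n} C_k (pq)^k = p \, C_n(pq),
$$
where the last equality is just the definition of the partial sum $C_n(x) = \sum_{k=0}^{n} C_k x^k$. This is the claimed formula.

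I do not expect any genuine obstacle here: the only nontrivial input is the enumeration of Dyck words by Catalan numbers, which is standard, and the remainder is bookkeeping. The one point worth checking is the boundary term $k=0$, where the empty word is the unique Dyck word of semilength $0$; it contributes $C_0 \cdot p = p$, consistent with $C_0 = 1$ and $\bar{\mathbb{P}}[w] = p(pq)^0 = p$ for the empty word. One may also sanity-check the computation against the probabilistic reading of $\bar{\mathbb{P}}$ recalled just above the statement: $\bar{\mathbb{P}}[\mathcal{D}_n]$ is the probability that the biased walk $X$ starting at $0$ first reaches $-1$ after a path whose associated Dyck prefix has semilength at most $n$, and the formula $p\,C_n(pq)$ is precisely the truncation at order $n$ of the generating-series identity $\bar{\mathbb{P}}[\mathcal{D}] = p\,C(pq) = 1$ established in \cite{GPM19a}.
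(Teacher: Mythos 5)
Your proof is correct and follows exactly the same route as the paper's: decompose $\mathcal{D}_n$ by semilength, use $\bar{\mathbb{P}}[w] = p(pq)^{|w|}$ together with the Catalan enumeration of Dyck words of semilength $k$, and recognize the resulting sum as the partial sum $pC_n(pq)$. Your boundary check at $k=0$ and the sanity check against $p\,C(pq)=1$ are fine additions but not needed.
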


\medskip

\begin{proof}
  We have
  $$
  \bar{\mathbb{P}} [\mathcal{D}_n] = \underset{w \in
  \mathcal{D}_n}{\sum} p (pq)^{| w |} = p \underset{k = 0}{\overset{n}{\sum}} 
  \underset{| w | = k}{\sum} (pq)^k = p \underset{k = 0}{\overset{n}{\sum}}
  C_k  (pq)^k = pC_n (pq)
  $$
\end{proof}

\medskip

We can make more precise Proposition \ref{bcdyck}.


\begin{proposition}
  \label{ppbar}Let $\omega = SSwH$ be an attack cycle starting with SS. 
  Then, $w\in \mathcal{D}$ and $\mathbb{P}[\omega] = q^{2}\bar{\mathbb{P}} [w]$ 
\end{proposition}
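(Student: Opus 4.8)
The plan is to combine the structural statement of Proposition~\ref{bcdyck} with the independence of the block-production process and the random-walk reading of $\bar{\mathbb{P}}$ recorded just above. The first assertion, $w \in \mathcal{D}$, is precisely Proposition~\ref{bcdyck}, so no new argument is needed for it; all the work lies in the identity $\mathbb{P}[\omega] = q^2\,\bar{\mathbb{P}}[w]$.

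First I would recall that each successive block of a cycle is mined independently, producing the letter $S$ with probability $q$ (attacker) or $H$ with probability $p$ (honest miners), exactly as in the walk of Lemma~\ref{maon} where $S$ is an up-step of probability $q$ and $H$ a down-step of probability $p$. Hence the probability of any prescribed finite block word is the product of its per-letter probabilities, and by the Markov property I may split $\omega = SSwH$ into the prefix $SS$ and the tail $wH$, the two being independent. The prefix contributes the factor $q^2$.

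Next I would match the tail $wH$ with the event defining $\bar{\mathbb{P}}[w]$. Writing $|w| = k$, the Dyck word $w$ has exactly $k$ letters $S$ and $k$ letters $H$, while the appended final $H$ is precisely the down-step that takes the associated walk from level $0$ to $-1$ after traversing $w$. This is the event whose probability is $\bar{\mathbb{P}}[w] = p(pq)^{k} = p^{k+1}q^{k}$. Multiplying by the prefix factor yields
$$
\mathbb{P}[\omega] = q^2\cdot p^{k+1} q^{k} = q^2\cdot p(pq)^{k} = q^2\,\bar{\mathbb{P}}[w].
$$
Equivalently, a direct tally gives $\omega = SSwH$ with $k+2$ copies of $S$ and $k+1$ copies of $H$, so $\mathbb{P}[\omega] = q^{k+2}p^{k+1}$, which rearranges to the same value.

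The one genuinely delicate point I would be careful to justify is the factorization of $\mathbb{P}[\omega]$ as a product of independent per-block probabilities. One must check that describing the cycle as ``starts with $SS$, then follows $w$, then terminates at the first passage of the walk to $-1$'' puts no weight beyond the i.i.d.\ Bernoulli law on individual blocks. This is exactly the content of the random-walk interpretation of $\bar{\mathbb{P}}$: the terminating $H$ is already incorporated inside $\bar{\mathbb{P}}[w]$ as the final step to $-1$, so it is counted once, with its probability $p$, and never double-counted. Once this bookkeeping is settled the remaining algebra, displayed above, is immediate.
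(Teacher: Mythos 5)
Your proof is correct: the paper actually states Proposition~\ref{ppbar} without any proof, treating it as immediate, and your argument---invoking Proposition~\ref{bcdyck} for $w \in \mathcal{D}$, then counting the $k+2$ letters $S$ and $k+1$ letters $H$ of $\omega = SSwH$ (with $|w|=k$) to get $\mathbb{P}[\omega] = q^{k+2}p^{k+1} = q^2 \cdot p(pq)^k = q^2\,\bar{\mathbb{P}}[w]$---is exactly the routine argument the paper implicitly relies on. Your one point of care, that the terminating $H$ is the final down-step to $-1$ already carried inside $\bar{\mathbb{P}}[w] = p(pq)^{|w|}$ and so is counted once and not double-counted, is precisely the right bookkeeping to check.
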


\medskip

\begin{lemma}
  \label{pdyck}The probability that a Dyck word ends with the subsequence
  SHH..H with $n$ letters H at the end is $pq^n$.
\end{lemma}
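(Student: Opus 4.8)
The plan is to work in the random-walk interpretation of the measure $\bar{\mathbb{P}}$ recalled just before the statement: a Dyck word $w$ is the initial portion, staying $\geq 0$, of a biased walk $X$ (up-step $S$ with probability $q$, down-step $H$ with probability $p$) run until its first hit of $-1$, and $\bar{\mathbb{P}}[w]=p(pq)^{|w|}$ is exactly the probability of that trajectory. A Dyck word ends with $SH^n$ (equivalently, has exactly $n$ trailing $H$'s, since the run of trailing $H$'s of a nonempty Dyck word is always preceded by an $S$) if and only if the underlying path factors uniquely as $w=P\cdot S\cdot H^{\,n}$, where $P$ is a nonnegative lattice path from $0$ to $n-1$: the final up-step must reach height $n$ so that the subsequent $n$ down-steps, together with the absorbing step to $-1$, carry the walk back to $0$ and then to $-1$. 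First I would record this bijective decomposition.

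Second, I would reduce the computation to the ``level weight''
\[
g_m \;=\; \sum_{P\,:\,0\to m,\ \text{nonneg.}} q^{\,u(P)}\,p^{\,d(P)},
\]
the sum over all nonnegative paths from $0$ to $m$ of their probability, where $u(P),d(P)$ count up- and down-steps. Tracking the factor $p(pq)^{|w|}$ through the decomposition $w=P\,S\,H^{n}$ (the extra $S$ contributes one $q$, the $n$ trailing $H$'s and the absorbing step contribute $p^{n}$ and the leading $p$) gives
\[
\bar{\mathbb{P}}[\,w\ \text{ends with } SH^n\,]\;=\;p\cdot q\cdot p^{\,n}\cdot g_{n-1},
\]
so the whole statement rests on evaluating $g_m$.

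The heart of the argument is the identity $g_m=\tfrac1p\,(q/p)^m$, which I would obtain from two facts. The base case is $g_0=\sum_{\text{nonneg. }0\to 0}q^u p^d=\sum_k C_k(pq)^k=C(pq)=\tfrac1p$, i.e. the total mass $\bar{\mathbb{P}}[\mathcal{D}]=pC(pq)=1$ coming from Lemma \ref{pbdn}. For the recursion I would decompose a nonnegative path $P:0\to m$ (with $m\geq 1$) at its \emph{last} visit to level $m-1$: the step leaving that visit is forced to be an up-step, since otherwise $m-1$ would be visited again, so $P$ splits uniquely into a nonnegative path $0\to m-1$, one up-step, and a nonnegative path $m\to m$ staying $\geq m$. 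The weights multiply, and the last factor sums to $C(pq)=\tfrac1p$ by translation invariance, so $g_m=\tfrac{q}{p}\,g_{m-1}$, whence $g_m=\tfrac1p(q/p)^m$ by induction. Substituting $m=n-1$ yields $\bar{\mathbb{P}}[w\ \text{ends with } SH^n]=p\,q\,p^{n}\cdot\tfrac1p(q/p)^{n-1}=pq^n$, as claimed (the empty-word case $n=0$ gives $\bar{\mathbb{P}}[\varnothing]=p=pq^0$ consistently).

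I expect the one genuinely delicate point to be the last-visit decomposition: one must verify it is a bijection onto all nonnegative paths $0\to m$, that the three blocks are chosen independently so the weights factor, and that the infinite sums converge — which they do because $q<p$ forces geometric decay. Everything else is bookkeeping of the two weights $q$ and $p$.
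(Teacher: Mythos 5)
Your proof is correct, but it takes a genuinely different route from the paper. The paper's argument is a two-line symmetry trick: the reversal map $\sigma$ (reverse the word and swap $S \leftrightarrow H$) is a bijection of $\mathcal{D}$ preserving $\bar{\mathbb{P}}$, since $\bar{\mathbb{P}}[w]$ depends only on $|w|$; it carries the event ``ends with $SH^n$'' to the event ``starts with $S^n H$'', whose probability is immediately $q^n p$ in the random-walk picture (the walk hits $-1$ almost surely because $q<p$, so no further conditioning is needed). You instead compute directly, via the factorization $w = P\,S\,H^n$ and the level weights $g_m = \sum_{P:0\to m,\ \mathrm{nonneg.}} q^{u(P)} p^{d(P)}$, established by the last-visit-to-level-$(m-1)$ decomposition together with $g_0 = C(pq) = \tfrac{1}{p}$ (equivalently $\bar{\mathbb{P}}[\mathcal{D}]=1$ from Lemma \ref{pbdn}); your bookkeeping $p\cdot q\cdot p^n\cdot g_{n-1} = pq^n$ checks out, as do the delicate points you flagged — the last-visit split is indeed forced to leave $m-1$ by an up-step and stay $\geq m$ afterwards, the translated block sums to $C(pq)$, and $q<p$ gives convergence. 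The trade-off: the paper's symmetry argument is shorter and essentially computation-free, but it is special to events exchanged by $\sigma$; your computation of $g_m = \tfrac{1}{p}\left(\tfrac{q}{p}\right)^m$ is heavier but more flexible — it would let you evaluate the probability of other suffix patterns (or joint prefix--suffix events) where no reversal symmetry is available, and it makes the geometric structure $\left(\tfrac{q}{p}\right)^m$ behind the answer explicit.
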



\begin{proof}
  Consider the ``reversal'' map $\sigma : \quad \mathcal{D}  \longrightarrow  \mathcal{D}$ given by
  \begin{equation*}
    w = w_1 \ldots w_{2 | w |}  \longmapsto  \sigma(w)=\tilde{w} = \tilde{w}_{2 | w |}
    \ldots \tilde{w}_1
  \end{equation*}
  with $\tilde{w}_i = S$ (resp. $H$) if $w_i = H$ (resp. $S$). Then $\sigma$ is one to one and preserves $\bar{\mathbb{P}}$ i.e.,
  for $w \in \mathcal{D}$, we have $\bar{\mathbb{P}} [\sigma (w)] =
  \bar{\mathbb{P}} [w]$. So, the probability that a Dyck word ends exactly
  with $n$ letter(s) H is the same as the probability that a Dyck word starts
  with $n$ letter(s) S and then is followed by a letter H. Thus this probability
  is $pq^n$.
\end{proof}

\medskip

For $w \in \mathcal{D}$, we define $f (w) = \inf \{ i \geq 0; w_i
= H \}$.


\begin{lemma}\label{pbare}
  Let $n \geq 0$ and $E = \{ w \in \mathcal{D}; f (w)
  \leqslant \inf \{ | w |, n \} \}$. Then we have 
  $$
  \bar{\mathbb{P}} [E] = (1 -q^n) - \frac{p (1 - (pq)^n)}{1 - pq}
  $$
\end{lemma}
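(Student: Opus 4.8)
The plan is to compute $\bar{\mathbb{P}}[E]$ by decomposing the event $E$ according to the value of $f(w)$, the position of the first $H$ in the Dyck word $w$.  The condition $f(w) \leqslant \inf\{|w|, n\}$ says that the first $H$ appears at a position that is at most $n$ and at most $|w|$ (the latter being automatic unless $w$ is empty or begins with a long run of $S$'s).  I would first observe that a Dyck word $w$ with $f(w) = k$ must begin with exactly $k$ letters $S$ followed by an $H$, i.e.\ $w = S^k H \cdots$.  The key quantitative input is Lemma \ref{pdyck}: by the reversal symmetry $\sigma$, the probability that a Dyck word \emph{starts} with $S^k H$ equals the probability it \emph{ends} with $SH^k$, which is $pq^k$.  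Hence $\bar{\mathbb{P}}[f(w) = k] = pq^k$ for $k \geq 1$.

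**The main computation.**  With this in hand, I would write
\begin{equation*}
\bar{\mathbb{P}}[E] = \sum_{k=1}^{\infty} \bar{\mathbb{P}}\bigl[ f(w) = k \ \text{and}\ k \leqslant \inf\{|w|, n\} \bigr].
\end{equation*}
For a fixed $k \leqslant n$, the event $\{f(w) = k\}$ already forces $|w| \geq k$ (since $w$ starts with $S^k H$ it has length at least $k+1$, so $|w| = \tfrac{1}{2}\,\text{(word length)} \geq \tfrac{k+1}{2}$); one must check carefully that the constraint $k \leqslant |w|$ is or is not automatically implied, and this bookkeeping is where the subtlety lies.  Splitting off the terms where $k \leqslant n$ but $|w| < k$ fails to hold requires separating the contribution of short words.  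The cleaner route is to complement: I would instead compute $\bar{\mathbb{P}}[E^c]$, where $E^c$ consists of Dyck words whose first $H$ occurs \emph{after} position $\inf\{|w|, n\}$.  This complement splits into words that are pure strings $S^{|w|}H\cdots$ with the first $H$ beyond position $n$ (contributing a geometric tail in $q^n$) together with the total-mass normalization $\bar{\mathbb{P}}[\mathcal{D}] = p\,C(pq) = \tfrac{1}{?}$ as computed via Lemma \ref{pbdn}.

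**Assembling the closed form.**  Summing $\sum_{k=1}^{n} pq^k = \dfrac{pq(1-q^n)}{1-q} = q(1-q^n)$ handles the principal geometric piece, which after simplification yields the term $(1 - q^n)$ in the stated formula once the factor $p = 1-q$ is absorbed.  The correction term $\dfrac{p(1-(pq)^n)}{1-pq}$ should emerge from subtracting off those words for which $f(w) \leqslant n$ but $f(w) > |w|$, i.e.\ where the cap $\inf\{|w|,n\}$ is governed by $|w|$ rather than by $n$; by Lemma \ref{pbdn} the total probability $\bar{\mathbb{P}}[\mathcal{D}_m]$ of words of length at most $m$ is $p\,C_m(pq)$, and the partial-sum structure $C_m(pq)$ is what produces the $\dfrac{1-(pq)^n}{1-pq}$ shape.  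I expect the main obstacle to be precisely this reconciliation of the two competing constraints in $\inf\{|w|, n\}$: one must match the boundary terms so that the length constraint contributes exactly the $\bar{\mathbb{P}}[\mathcal{D}_n]$-type sum while the positional constraint $f(w) \leqslant n$ contributes the geometric $q^n$ term.  Once the decomposition is organized so these two pieces do not double-count the short words (those with $|w| \leqslant n$), rearranging terms gives the claimed identity
\begin{equation*}
\bar{\mathbb{P}}[E] = (1 - q^n) - \frac{p(1-(pq)^n)}{1-pq},
\end{equation*}
and the computation is routine summation thereafter.
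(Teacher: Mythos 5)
Your opening move is sound and matches the paper's: decompose over the position of the first $H$, and use the reversal map of Lemma \ref{pdyck} to get that a Dyck word begins with exactly $k$ letters $S$ followed by $H$ with probability $pq^k$. But the crux of the lemma is exactly the step you defer twice (``this bookkeeping is where the subtlety lies'', ``I expect the main obstacle to be precisely this reconciliation''): you never characterize when the cap $\inf\{|w|,n\}$ is governed by $|w|$. The missing observation, which is the one line that makes everything routine, is that with the paper's convention $f(w)=\inf\{i\geq 0;\, w_i=H\}$ one has $f(w)\leqslant |w|$ if and only if some $H$ in $w$ is followed by an $S$, i.e.\ if and only if $w$ is \emph{not} of the balanced staircase form $S^mH^m$. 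Note also an off-by-one inconsistency: you take $f(w)=k$ to mean $k$ initial $S$'s, whereas the paper's $f(w)$ is the position of the first $H$ (so $k-1$ initial $S$'s). This is not cosmetic: under your convention $S^mH^m$ satisfies $f(w)=m\leqslant |w|$ and would belong to $E$, making the stated formula false --- e.g.\ for $n=2$ the correct value is $\bar{\mathbb{P}}[E]=pq^2$ (words $SHw'$ with $w'$ a nonempty Dyck word), while your event would also pick up $SH$ and $SSHH$.

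The assembly also fails concretely at two points. First, $\sum_{k=1}^n pq^k=q(1-q^n)$ does not become $(1-q^n)$ by ``absorbing the factor $p$''; these are simply different quantities. Second, the correction term does \emph{not} come from $\bar{\mathbb{P}}[\mathcal{D}_n]=pC_n(pq)$ of Lemma \ref{pbdn}: Catalan partial sums are not geometric in $pq$ and cannot produce $\frac{p(1-(pq)^n)}{1-pq}$. The words excluded by the length constraint are only the balanced words $S^mH^m$ with $f=m+1\leqslant n$, each of probability $p(pq)^m$, a plain geometric series; the correct complement computation reads
\begin{equation*}
\bar{\mathbb{P}}[E]=\sum_{k=2}^{n} pq^{k-1}-\sum_{m=1}^{n-1} p(pq)^m
= q\bigl(1-q^{n-1}\bigr)-\frac{p^2q\bigl(1-(pq)^{n-1}\bigr)}{1-pq},
\end{equation*}
which rearranges to the stated identity --- in particular, \emph{neither} term of the final formula equals either piece of the decomposition, contrary to your matching of the ``positional'' constraint with $(1-q^n)$ and the ``length'' constraint with the $(pq)^n$ term (and your placeholder $\bar{\mathbb{P}}[\mathcal{D}]$ equals $1$, since $pC(pq)=1$ for $q<p$). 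For comparison, the paper avoids the complement altogether: for each $k\leqslant n$ it computes $\bar{\mathbb{P}}[f(w)=k,\ f(w)\leqslant |w|]=pq^{k-1}\sum_{j=0}^{k-2}qp^j=pq^{k-1}\bigl(1-p^{k-1}\bigr)$ directly from the stopped random walk (first ascent of length $k-1$, first descent of length $j+1\leqslant k-1$, then an $S$), and sums two geometric series. So: right strategy, but as written the proof has a genuine gap at the $\inf\{|w|,n\}$ reconciliation, compounded by a wrong ingredient (Lemma \ref{pbdn}) and a false simplification.
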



\begin{proof}
  Let $w\in \mathcal{D}$. To have $f (w) \leqslant | w |$
  means that at least one H is followed by an S i.e., $w$ is not of the form
  SS...SHH...H. For all integer $k \leqslant n$, we have 
  $$
  \Sigma_{w ; (f (w) =k) \wedge (f (w) \leqslant | w |)}  (pq)^{| w |} = pq^{k - 1} \cdot \sum_{j
  = 0}^{k - 2} qp^j
  $$ 
  So, if we consider a biased random walk starting from $0$
  with a probability $p$ to move to the left (resp. right) then both terms
  represent the probability of the following event: We have $k - 1$ first 
  step(s) to the right, then $j + 1$ steps to the left with $0 \leqslant j
  \leqslant k - 2$ and then at least one step to the right before reaching
  $0$. So, we have
  \begin{align*}
    \bar{\mathbb{P}} [E] & =  \sum_{k = 1}^n pq^{k - 1} \cdot \sum_{j =
    0}^{k - 2} qp^j\\
    & =  p \sum_{k = 1}^n q^{k - 1} \cdot (1 - p^{k - 1})\\
    & =  p \sum_{k = 1}^n q^{k - 1} - p \sum_{k = 1}^n (pq)^{k - 1}\\
    & = (1 - q^n) - \frac{p (1 - (pq)^n)}{1 - pq}
  \end{align*}
\end{proof}


\subsection{Glossary}

\subsubsection{Revenue ratio and apparent hashrate}
  The revenue ratio ${\tilde{\Gamma}}$ of a miner following 
  a strategy with repetitions of attack cycles like selfish mining 
  is given by ${\tilde{\Gamma}} = \frac{\mathbb{E} [R]}{\mathbb{E} [T]}$
  where $R$ (resp. $T$) is the revenue of the miner after an attack 
  cycle  (resp. the duration time of an attack cycle). The apparent
  hashrate ${\tilde{q}}$ is defined by
  ${\tilde{q}} = {\tilde{\Gamma}} \frac{\tau}{b}$ 
  where $b$ (resp. $\tau$) is the coinbase (resp. interblock time). 

\subsubsection{Terminology}

Ethereum has a special terminology that we summarize.

\

\begin{tabular}{|l|l|}
  \hline
  Uncle & orphan block whose parent belongs to the official blockchain\\
  \hline
  Nephew & regular block that refers to an ``uncle'' which is at a distance
  less than $n_1$\\
  \hline
  Distance & number of official blocks between a nephew N and a parent's uncle
  U.\\
  \hline
\end{tabular}

\subsubsection{Mining reward}

If an uncle U is referred by a nephew N which is at a distance $d$, then U earns an
``uncle reward'' which is worth $K_u (d) b$ and N gets an additional reward of
$K_n (d) b$, where $b$ is the coinbase. Also, a nephew can refer at most two uncles.
Today, on Ethereum, we have $b = 2 \text{ ETH}$,
$K_u (d) = \frac{8 - d}{8} \cdot {\bf{1}}_{d \leqslant n_1}$ with $n_1 =
6$ and $K_n (d) = \pi = \frac{1}{32}$.

\medskip

\begin{tabular}{|l|l|}
  \hline
  Uncle reward & reward granted to an uncle block referred by a nephew \\
  \hline
  inclusion reward & additional reward granted to a nephew that refers an
  uncle\\
  \hline
\end{tabular}


\end{document}